\newcommand{\R}{{\mathord{\mathbb R}}}
\newcommand{\Z}{{\mathord{\mathbb Z}}}
\newcommand{\N}{{\mathord{\mathbb N}}}
\newcommand{\C}{{\mathord{\mathbb C}}}
\def\chib {\overline{\chi}}
\newcommand{\HH}{\mathcal{H}}
\newcommand{\FF}{\mathcal{F}}
\newcommand{\WW}{\mathcal{W}}
\newcommand{\hh}{\mathfrak{h}}
\newcommand{\UU}{\mathcal{U}}
\newcommand{\umm}{\underline{m}}
\newcommand{\unn}{\underline{n}}
\newcommand{\upp}{\underline{p}}
\newcommand{\uqq}{\underline{q}}
\newcommand{\uzz}{\underline{0}}
\newcommand{\ran}{{\rm Ran}}
\newcommand{\un}[1]{\underline{#1}}
\DeclareMathOperator*{\esssup}{ess\,sup}
\newcommand{\ben}{\begin{displaymath}}
\newcommand{\een}{\end{displaymath}}
\newcommand{\beqn}{\begin{equation}}
\newcommand{\eeqn}{\end{equation}}
\newcommand{\beqna}{\begin{eqnarray*}}
\newcommand{\eeqna}{\end{eqnarray*}}
\def\inf{{\rm inf}\,}
\def\supp{\operatorname{supp}}
\newcommand{\sfrac}[2]{\textrm{\footnotesize $\frac{#1}{#2}$}}
\newtheorem{lemma}{Lemma}
\newtheorem{theorem}[lemma]{Theorem}
\newtheorem{remark}[lemma]{Remark}
\newtheorem{corollary}[lemma]{Corollary}
\newtheorem{definition}[lemma]{Definition}
\begin{document}
\title{Convergent expansions in non-relativistic QED: Analyticity of the ground state}
\author{\vspace{5pt} D. Hasler $^1$\footnote{
E-mail: david.hasler@math.lmu.de, on leave from College of William \& Mary} and I.
Herbst$^2$\footnote{E-mail: iwh@virginia.edu.} \\
\vspace{-4pt} \small{$1.$ Department of Mathematics,
Ludwig Maximilians University,} \\ \small{Munich, Germany}\\
\vspace{-4pt}
\small{$2.$ Department of Mathematics, University of Virginia,} \\
\small{Charlottesville, VA, 
 USA}\\}
\date{}
\maketitle

\begin{abstract}
We consider the ground state of an atom in the framework of non-relativistic qed.
We show that the ground state as
well as the ground state energy are analytic functions of the coupling constant
which couples to the vector potential, under the assumption that the atomic Hamiltonian has a non-degenerate ground state.
Moreover, we show that the corresponding
expansion coefficients are precisely the coefficients of the associated
Raleigh-Schr\"odinger series.
As a corollary we obtain that in a scaling limit
where the ultraviolet cutoff is of the order of the Rydberg energy
 the ground state and the ground state energy have
convergent power series expansions in the fine structure constant $\alpha$, with
$\alpha$ dependent coefficients  which are finite for $\alpha \geq 0$.
\end{abstract}

\section{Introduction}

Non-relativistic quantum electrodynamics (qed) is a mathematically rigorous theory describing
low energy phenomena of matter interacting with quantized radiation.
This theory allows a mathematically rigorous treatment of
various physical aspects, see for example \cite{S04} and references therein.

In this paper we investigate expansions of the ground state and the ground state energy of an atom
as functions of the coupling constant, $g$, which couples to the vector potential of the
quantized electromagnetic field.
Such an  expansion  carries the
physical structure originating from the interactions of  bound electrons with photons.
These interactions lead to  radiative corrections and
were shown \cite{bet47} to contribute to the Lamb shift \cite{lamret47}.
The main result of this paper, Theorem   \ref{thm:main1}, shows that
the ground state as well as the ground state energy of the atom are analytic 
functions of the coupling constant $g$. We do not impose any  infrared regularization (as was needed in \cite{GH09}).
We assume that the electrons of the atom are spin-less and that the atomic
Hamiltonian has a unique ground state.
Moreover, we show that the corresponding expansion
coefficients can be calculated using
Raleigh-Schr\"odinger  perturbation theory.
To see  this we introduce an infrared cutoff $\sigma \geq 0$ and
show that the ground state as well as the ground state energy are continuous
as a function of  $\sigma$.
This permits the calculation of radiative corrections to the ground
state as well as the ground state energy to any order in the coupling constant.
To obtain  contributions of processes involving $n$ photons, one needs to expand
at least to the order $n$ in the coupling constant $g$.
The main theorem of this paper can be used to justify  a rigorous
investigation
of  ground states as well as ground state energies by means of analytic perturbation theory.

As a corollary of the main result  we  obtain
a convergent  expansion  in the fine structure constant
$\alpha$, as $\alpha$  tends to zero,
in a scaling limit where the ultraviolet cutoff is of the order of the Rydberg
energy. To this end we introduce a parameter, $\beta$,
which originates from the coupling to the electrostatic potential,
show that all estimates are uniform in $\beta$, and set  $g = \alpha^{3/2}$ and $\beta = \alpha$.
As a result, Corollary  \ref{cor:main2},
 we obtain that  the ground state and the ground state energy have
convergent power series expansions in the fine structure constant $\alpha$, with
$\alpha$ dependent coefficients which are finite for $\alpha \geq 0$.
These coefficients can be calculated by means of Raleigh-Schr\"odinger
perturbation theory.
The expansion of the ground state is in powers of $\alpha^{3/2}$ and the expansion
of the ground state energy is in powers of $\alpha^3$.
This result improves the main theorem stated in   \cite{BFP06,BFP09} where it was shown
 that there exists an asymptotic expansion in $\alpha$ involving
coefficients which depend on $\alpha$ and have at most
mild singularities.  We want to note that in
 different scaling limits
of the ultraviolet cutoff expansions in the first few orders in $\alpha$ were
obtained  in  \cite{BCVV09,HHS05,HS02}, which  involve  logarithmic terms.
The scaling limit which we consider in this paper (where the ultraviolet cutoff is of the order
of the Rydberg energy)  is typically used to study the properties of atoms, c.f. \cite{BFP06,BFP07,BFP09,HHH08,BFS99}.
In \cite{HHH08,BFS99} estimates on lifetimes of metastable states  were proven, which, in leading order, agree with experiment. 

Let us now address the proof of the main results.
It is well known that the ground state energy is embedded in the continuous
spectrum. In such a situation regular perturbation theory is
typically not applicable and other methods have to be employed.
To prove the existence result as well as the analyticity result
 we use a variant of the operator theoretic renormalization analysis as introduced
in \cite{BFS98} and further developed in \cite{BCFS03}.
The main idea of the proof is that by rotation invariance one can
infer that in the renormalization analysis terms which are linear in
creation and annihilation operators do not occur. In that
case it follows that the renormalization transformation is
a contraction even without infrared regularization.
A similar
idea was used to prove the existence and the analyticity of the ground state and the ground state energy
in the spin-boson model \cite{HH10}. In the proof we will use
results which were obtained in \cite{HH10}.
We note that related ideas were also used in \cite{GH09}.
Furthermore, we think that the method of combining the renormalization transformation
with rotation invariance, as used in this paper, might be applicable to other spectral problems of
atoms in the framework of non-relativistic qed. We note that contraction of the renormalization transformation
can also be shown using a  generalized Pauli-Fierz transformation  \cite{S07}.
As opposed to the latter reference and all other treatments we are aware of, we do not
use (or need) gauge invariance of the Hamiltonian.  Thus for example the quadratic term in the
vector potential could be dropped and our results would remain the same.

\section{Model and Statement of Results}

Let $(\hh, \langle \cdot , \cdot \rangle_\hh)$ be a
Hilbert space. We introduce the direct sum of the $n$-fold tensor product of $\hh$ and set
$$
\mathcal{F}(\hh) := \bigoplus_{n=0}^\infty \mathcal{F}^{(n)}(\hh) , \qquad \mathcal{F}^{(n)}(\hh) = \hh^{\otimes^n} ,
$$
where we have set $\hh^{\otimes 0} := \C$.
We introduce the vacuum vector $\Omega := (1,0,0,...) \in \FF(\hh)$.
The space $\mathcal{F}(\hh)$ is an inner product space
where the inner product is induced from the inner product in $\hh$. That is, on vectors $\eta_1 \otimes \cdots \eta_n, \varphi_1 \otimes \cdots \varphi_n \in \FF^{(n)}(\hh)$
we have
$$
\langle \eta_1 \otimes \cdots \eta_n , \varphi_1 \otimes \cdots \varphi_n \rangle := \prod_{i=1}^n
\langle \eta_{i} , \varphi_{i} \rangle_\hh .
$$
This definition extends to all of $\FF(\hh)$ by bilinearity and continuity. We introduce the bosonic Fock space
$$
\FF_s(\hh) := \bigoplus_{n=0}^\infty \mathcal{F}_s^{(n)}(\hh) , \qquad \mathcal{F}_s^{(n)}(\hh) := S_n \FF^{(n)}(\hh) ,
$$
where $S_n$ denotes the orthogonal projection onto the subspace of totally symmetric
tensors in $\FF^{(n)}(\hh)$. For $h \in \hh$ we introduce the so called creation operator $a^*(h)$ in $\FF_s(\hh)$
which is defined on vectors $\eta \in \FF^{(n)}_s(\hh)$
by
\beqn \label{eq:formala}
a^*(h) \eta := \sqrt{n+1} S_{n+1} ( h \otimes \eta ) \; .
\eeqn
The operator $a^*(h)$ extends by linearity to a densely defined linear operator on $\FF(\hh)$.
One can show that $a^*(h)$ is closable, c.f. \cite{reesim2}, and we denote its closure by the same symbol.
We introduce the annihilation operator by $a(h) := (a^*(h))^*$.
For a closed operator $A \in \hh$ with domain $D(A)$ we introduce the operator $\Gamma(A)$ and $d\Gamma(A)$ in $\FF(\hh)$ defined
on vectors $\eta = \eta_1 \otimes \cdots \otimes \eta_n \in \FF^{(n)}(\hh)$, with $\eta_i \in D(A)$, by
$$
\Gamma(A) \eta := A \eta_1 \otimes \cdots \otimes A \eta_n
$$
and
$$
d \Gamma(A) \eta := \sum_{i=1}^n \eta_1 \otimes \cdots \otimes \eta_{i-1} \otimes A \eta_i \otimes \eta_{i+1} \otimes \cdots \otimes \eta_n
$$
and extended by linearity to a densely defined linear operator on $\FF(\hh)$. One can show
that $d \Gamma(A)$ and $\Gamma(A)$ are closable, c.f. \cite{reesim2}, and we denote their closure by the same symbol.
The operators $\Gamma(A)$ and $d \Gamma(A)$ leave the subspace $\FF_s(\hh)$ invariant, that is,
their restriction to $\FF_s(\hh)$ is densely defined, closed, and has range contained in $\FF_s(\hh)$.
To define qed, we fix
$$
\hh := L^2( \R^3 \times \Z_2 )
$$
and set $\FF := \FF_s(\hh)$. We denote the norm of $\hh$ by $\| \cdot \|_\hh$.
We define the operator of the free field energy
by
$$
H_f := d \Gamma(M_\omega) ,
$$
where $\omega(k,\lambda) := \omega(k) := |k|$ and $M_\varphi$ denotes the operator of multiplication with the function $\varphi$.
For $f \in \hh$ we write
$$
a^*(f) = \sum_{\lambda=1,2} \int dk f(k,\lambda) a^*(k,\lambda) , \qquad a(f) = \sum_{\lambda=1,2} \int dk \overline{f(k,\lambda)} a^*(k,\lambda) .
$$
where $a(k,\lambda)$ and $a^*(k,\lambda)$ are operator-valued distributions.
They satisfy the
following commutation relations, which are to be understood in the sense of distributions,
$$
[a(k,\lambda), a^*(k',\lambda') ] = \delta_{\lambda \lambda'} \delta(k - k') , \qquad [a^\#(k,\lambda), a^\#(k',{\lambda'}) ] = 0 \; ,
$$
where $a^{\#}$ stands for $a$ or $a^*$. For $\lambda=1,2$ we introduce the so called polarization vectors
$$
\varepsilon(\cdot , \lambda) : S^2 := \{ k \in \R^3 | |k| = 1 \} \to \R^3
$$
to be measurable maps such that for each $k \in S^2$ the vectors $\varepsilon(k,1), \varepsilon(k,2),k$ form an orthonormal basis of $\R^3$.
We extend $\varepsilon(\cdot , \lambda)$ to $\R^3 \setminus \{ 0 \}$ by setting
$
\varepsilon(k,\lambda) := \varepsilon(k/|k|,\lambda)
$
for all nonzero $k$. For $x \in \R^3$ we define the field operator
\beqn \label{eq:afield}
A_{\sigma}(x) := \sum_{\lambda=1,2} \int \frac{dk \kappa_{\sigma,\Lambda}(k)}{\sqrt{2 |k|}} \left[ e^{-ik \cdot x}
\varepsilon(k,\lambda) a^*(k,\lambda) + e^{ik \cdot x} \varepsilon(k, \lambda) a(k, \lambda) \right] \ ,
\eeqn
where the function $\kappa_{\sigma,\Lambda}$ serves as a cutoff, which satisfies $\kappa_{\sigma,\Lambda}(k) = 1$ if $\sigma \leq |k| \leq \Lambda$ and which
is zero otherwise. $\Lambda > 0$ is an ultraviolet cutoff, which we assume to be fixed, and $\sigma \geq 0$ an infrared cutoff.
Next we introduce the atomic Hilbert space, which describes the configuration of $N$ electrons, by
$$
\HH_{\rm at} := \{ \psi \in L^2(\R^{3N}) | \psi(x_{\pi(1)},...,x_{\pi(N)}) = {{\rm sgn}(\pi)} \psi(x_1,...,x_N) , \pi \in \mathfrak{S}_N \} ,
$$
where $\mathfrak{S}_N$ denotes the group of permutations of $N$ elements, ${\rm sgn}$ denotes the signum of the permutation,
and $x_j \in \R^3$ denotes the coordinate of the $j$-th electron.
We will consider the following operator in $\HH := \HH_{\rm at} \otimes \FF$,
\begin{equation} \label{eq:hamiltoniandefinition}
H_{g,\beta,\sigma} := : \sum_{j=1}^N ( p_j + g A_{\sigma}(\beta x_j) )^2 : + V + H_f ,
\end{equation}
where
$p_j = - i \partial_{x_j}$,
$V = V(x_1,...,x_N)$ denotes the potential, and $:( \, \cdot \, ):$ stands for the Wick product.
The coupling constant $g \in \C$ is of  interest for the main result,  Theorem~\ref{thm:main1}.
The parameter $\beta \in \R$ will be used  in  Corollary~\ref{cor:main2}.
We will make the following assumptions on the potential $V$, which are related to the atomic Hamiltonian
$$
H_{\rm at} := - \Delta + V ,
$$
which acts in $\HH_{\rm at}$. We introduced the Laplacian $-\Delta := \sum_{j=1}^N p_j^2$.

\vspace{0.5cm}
\noindent
{\bf Hypothesis (H)} The potential $V$ satisfies the following properties:
\begin{itemize}
\item[(i)] $V$ is invariant under rotations and permutations, that is
\begin{align*}
&V(x_1,...,x_N) = V(R^{-1}x_1,...,R^{-1}x_N) , \quad \forall R \in SO(3)  , \\
&V(x_1,...,x_N) = V(x_{\pi (1)},..., x_{\pi( N)}) , \quad \forall \pi  \in \mathfrak{S}_N .
\end{align*}
\item[(ii)] $V$ is infinitesimally operator bounded with respect to $-\Delta$.
\item[(iii)] $E_{\rm at} := \inf \sigma(H_{\rm at} )$ is a non-degenerate isolated eigenvalue of $H_{\rm at}$.
\end{itemize}

\vspace{0.5cm}

Note that for the Hydrogen, $N=1$, the potential $V(x_1) = - {|x_1|}^{-1}$
satisfies Hypothesis (H). Moreover (ii) of Hypothesis (H) implies that
$H_{g,\beta,\sigma}$ is a self-adjoint operator with domain $D(-\Delta + H_f)$ and that $H_{g,\beta,\sigma}$ is essentially
self adjoint on any operator core for $-\Delta +H_f$,
 see for example \cite{H02,HH08}. For a precise definition of the operator in \eqref{eq:hamiltoniandefinition}, see Appendix A.
We will use the notation $D_r(w) := \{ z \in \C | |z - w| < r \}$ and $D_r := D_r(0)$. Let us now state the main result of the paper.

\begin{theorem} \label{thm:main1} Assume Hypothesis (H). Then there exists
a positive constant $g_0$ such that for all $g \in D_{g_0}$, $ \beta \in \R$, and $\sigma \geq 0$
the operator $H_{g,\beta,\sigma}$ has an eigenvalue $E_{\beta,\sigma}(g)$ with eigenvector $\psi_{\beta,\sigma}(g)$ and eigen-projection
$P_{\beta,\sigma}(g)$ satisfying the following properties.
\begin{itemize}
\item[(i)] for $g \in \R \cap D_{g_0}$, $E_{\beta,\sigma}(g) = {\rm inf} \sigma ( H_{g,\beta,\sigma}) $. 
\item[(ii)] $g \mapsto E_{\beta,\sigma}(g)$ and $g \mapsto \psi_{\beta,\sigma}(g)$
are analytic on $D_{g_0}$.
\item[(iii)] $g \mapsto P_{\beta,\sigma}(g)$ is analytic on $D_{g_0}$ and
$P_{\beta,\sigma}(g)^* = P_{\beta,\sigma}(\overline{g})$.
\end{itemize}
The functions $E_{\beta,\sigma}(g)$, $\psi_{\beta,\sigma}(g)$, and $P_{\beta,\sigma}(g)$ are
 bounded in $(g,\beta,\sigma) \in D_{g_0} \times \R \times [0,\infty)$ and depend continuously on $\sigma \geq 0 $.
\end{theorem}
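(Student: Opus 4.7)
The plan is to follow the operator-theoretic renormalization group approach of \cite{BFS98,BCFS03}, adapted to the QED setting along the lines of the spin-boson treatment in \cite{HH10}. First, I apply a smooth Feshbach--Schur map to $H_{g,\beta,\sigma}-z$ with projection $\chi_0 := P_{\rm at} \otimes \chi(H_f \leq \rho_0)$, where $P_{\rm at}$ is the projection onto the one-dimensional atomic ground state subspace (available by Hypothesis~(H)(iii)) and $\chi$ is a smooth indicator. For $g$ near $0$ and $z$ near $E_{\rm at}$ the Feshbach pair condition holds, so the associated effective Hamiltonian $H_{\chi_0}(z)$ acts on $P_{\rm at}\HH_{\rm at}\otimes \chi(H_f\le\rho_0)\FF$. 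Expanding the complementary resolvent in a Neumann series and Wick ordering, I represent $H_{\chi_0}(z)$ as a convergent sum of Wick monomials with integral kernels $w_{m,n}(H_f;k_1,\ldots,k_{m+n})$, measured in the Banach-space norms of \cite{BCFS03,HH10}.

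The crucial structural input is rotation invariance. Hypothesis~(H)(i) together with non-degeneracy forces the atomic ground state $\varphi_{\rm at}$ to transform in the trivial representation of $SO(3)$. The full Hamiltonian commutes with $R\mapsto U_{\rm at}(R)\otimes\Gamma(u(R))$, where $u(R)$ rotates both momentum and polarization on $L^2(\R^3\times\Z_2)$, so $H_{\chi_0}(z)$ commutes with $\Gamma(u(R))$, and Schur's lemma applied to the single-photon representation (which has no $SO(3)$-invariant vector) yields
\[
w_{1,0}=w_{0,1}=0.
\]
This is precisely the hypothesis under which \cite{HH10} shows that the renormalization transformation $\mathcal{R}_\rho$, defined by rescaling by $\rho\in(0,1)$ composed with a further Feshbach--Schur reduction, is a strict contraction on the relevant Banach space of Wick kernels \emph{without} any infrared regularization. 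Since $\mathcal{R}_\rho$ preserves rotation invariance, the vanishing of the linear terms is propagated through the iteration.

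Iterating $\mathcal{R}_\rho$ produces a sequence of effective Hamiltonians $H^{(n)}_{\chi_0}(z)$ and scalars $z^{(n)}(g)$ whose limits, via isospectrality of the Feshbach map, yield an eigenvalue $E_{\beta,\sigma}(g)$ and eigenvector $\psi_{\beta,\sigma}(g)$ of $H_{g,\beta,\sigma}$, together with the spectral projection $P_{\beta,\sigma}(g)$ expressed through a contour integral of the resolvent. Analyticity on a disk $D_{g_0}$ follows because every step is an analytic function of $g$ and the contraction rate is uniform in $g \in D_{g_0}$; the symmetry $P_{\beta,\sigma}(g)^*=P_{\beta,\sigma}(\overline g)$ is inherited from $H_{\overline g,\beta,\sigma}=H_{g,\beta,\sigma}^*$, and statement~(i) for real $g$ follows from a standard continuity-of-the-spectrum argument starting from $E_{\rm at}=\inf\sigma(H_{0,\beta,\sigma})$.

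The main obstacle I expect is to make every estimate uniform in $\beta\in\R$ and $\sigma\ge 0$, and to establish continuity in $\sigma$ down to $\sigma=0$. Uniformity in $\beta$ reduces to the observation that $\beta$ enters only through the bounded phase $e^{\pm i\beta k\cdot x_j}$ inside $A_\sigma(\beta x_j)$, while the photon momenta are controlled by the ultraviolet cutoff $\Lambda$ and by the $|k|^{-1/2}$ factor in \eqref{eq:afield}. Continuity at $\sigma=0$ is delicate precisely because the standard approach needs an infrared cutoff; here it is the vanishing of the linear terms that furnishes a contraction estimate uniform in $\sigma\ge 0$, so the kernels---and hence the fixed-point data---depend continuously on $\sigma$ on the full interval $[0,\infty)$, giving the claimed continuity of $E_{\beta,\sigma}(g)$, $\psi_{\beta,\sigma}(g)$, and $P_{\beta,\sigma}(g)$.
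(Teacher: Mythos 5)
Your overall strategy matches the paper's: an initial smooth Feshbach--Schur reduction with $\chi^{(I)}=P_{\rm at}\otimes\chi_1(H_f)$, Wick ordering of the Neumann series to land in a Banach space of kernels, the observation that rotation invariance kills the linear kernels $w_{1,0},w_{0,1}$ so that the renormalization map is a contraction without infrared cutoff, and iteration to produce the eigenvalue and eigenvector via isospectrality. Your appeal to "Schur's lemma/no invariant vector" corresponds to the paper's Lemma~\ref{lem:mainidea} (a rotation-invariant element of $\hh_T$ would have to be radial hence zero); that is the right idea even if stated loosely.

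There are, however, two concrete places where your sketch does something that would not go through as written. First, you propose to obtain $P_{\beta,\sigma}(g)$ by a contour integral of the resolvent. But $E_{\beta,\sigma}(g)$ is an eigenvalue embedded in continuous spectrum (and $H_{g,\beta,\sigma}$ is non-self-adjoint for complex $g$), so there is no gap permitting a small circle around it that avoids the rest of the spectrum; the Riesz projection is not available. The paper circumvents this by defining a rank-one operator directly,
\[
P_{\beta,\sigma}(g)=\frac{|\psi_{\beta,\sigma}(g)\rangle\langle\psi_{\beta,\sigma}(\overline g)|}{\langle\psi_{\beta,\sigma}(\overline g),\psi_{\beta,\sigma}(g)\rangle},
\]
and using analyticity plus the normalization $\langle\psi_{\beta,\sigma}(0),\psi_{\beta,\sigma}(0)\rangle=1$ to keep the denominator bounded below after possibly shrinking $g_0$; this is also what makes $P_{\beta,\sigma}(g)^*=P_{\beta,\sigma}(\overline g)$ immediate. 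Second, your statement~(i) cannot be deduced from a "standard continuity-of-the-spectrum argument": continuity of $\inf\sigma(H_{g,\beta,\sigma})$ in real $g$ does not rule out spectrum strictly below the constructed eigenvalue. The paper instead uses the isospectrality of the Feshbach map in both directions: for real $g$ and real $z$ below $e_{(0)}[w^{(0)}]$, Theorem~\ref{thm:bcfsmain}(b) gives bounded invertibility of the effective operator, hence of $H_{g,\beta,\sigma}-\zeta$ for $\zeta\in(E_{\rm at}-\tfrac12,E_{\beta,\sigma}(g))$, and a separate Neumann-series estimate handles $\zeta\le E_{\rm at}-\tfrac12$; together these show there is no spectrum below $E_{\beta,\sigma}(g)$. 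You should replace the continuity argument with this invertibility argument. Finally, your remark on continuity in $\sigma$ down to $\sigma=0$ is correct in spirit but glosses over the needed quantitative notion: it is not just uniform bounds but componentwise $L^2$-continuity of the kernels (the paper's $c$-continuity) that is propagated through the renormalization iteration by Theorem~\ref{thm:bcfsmain}(d), and one must verify $c$-continuity of the initial kernel $w^{(0)}$ in $(\sigma,z)$ as in Theorem~\ref{thm:inimain1}(d).
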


The infrared cutoff $\sigma$ will be used in Sections \ref{sec:ana}
to relate the expansion coefficients to analytic perturbation theory.
We want to emphasize that the proof of Theorem \ref{thm:main1} does not use any form of gauge invariance. In particular
the conclusions hold if the terms quadratic in $A_{\sigma}$ are dropped from the Hamiltonian.
Using  Theorem \ref{thm:main1} and Cauchy's formula one can show the following corollary, see Section \ref{sec:prov}.

\begin{corollary} \label{cor:main1} Assume Hypothesis (H).
And let $g_0$,  $E_{\beta,\sigma}(g)$,  $\psi_{\beta,\sigma}(g)$ and
$P_{\beta,\sigma}(g)$ be given as in Theorem \ref{thm:main1}.
Then on $D_{g_0}$ we have the convergent power series expansions
\begin{equation} \label{eq:expansion1}
\psi_{\beta,\sigma}(g) = \sum_{n=0}^\infty \psi_{\beta,\sigma}^{(n)} g^n , \quad P_{\beta,\sigma}(g) = \sum_{n=0}^\infty P^{(n)}_{\beta,\sigma} g^n , \quad
E_{\beta,\sigma}(g) = \sum_{n=0}^\infty E^{(2n)}_{\beta,\sigma} g^{2n} ,
\end{equation}
where the coefficients satisfy the following properties:
$\psi_{\beta,\sigma}^{(n)}$, $E^{(n)}_{\beta,\sigma}$, and
$P^{(n)}_{\beta,\sigma}$
and depend continuously on $\sigma \geq 0$, and there exist finite constants $C_0, R$ such that for
all $n \in \N_0$ and $(\beta,\sigma) \in \R \times [0,\infty)$
we have
$\| \psi_{\beta,\sigma}^{(n)} \| \leq C_0 R^{n} $, $| E^{(2n)}_{\beta,\sigma} | \leq C_0 R^{2n} $, and
$\| P^{(n)}_{\beta,\sigma} \| \leq C_0 R^{n} $.
\end{corollary}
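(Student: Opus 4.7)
The plan is to derive everything from Theorem~\ref{thm:main1} together with Cauchy's integral formula applied to the Banach-space-valued analytic functions $g \mapsto \psi_{\beta,\sigma}(g)$, $g \mapsto E_{\beta,\sigma}(g)$, $g \mapsto P_{\beta,\sigma}(g)$.

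First I would fix any $r \in (0, g_0)$, set $R := 1/r$, and let $M$ denote the uniform bound on the three functions over $D_{g_0} \times \R \times [0,\infty)$ supplied by Theorem~\ref{thm:main1}. Since each is analytic on $D_{g_0}$, it admits a norm-convergent Taylor expansion at $0$, which yields the expansions in \eqref{eq:expansion1}, and the Cauchy representation
\begin{equation*}
\psi_{\beta,\sigma}^{(n)} = \frac{1}{2\pi i}\oint_{|g|=r} \frac{\psi_{\beta,\sigma}(g)}{g^{n+1}}\, dg,
\end{equation*}
together with analogous formulas for $E$ and $P$, delivers the bounds $\|\psi_{\beta,\sigma}^{(n)}\| \le M R^{n}$, $|E^{(n)}_{\beta,\sigma}| \le M R^{n}$, $\|P^{(n)}_{\beta,\sigma}\| \le M R^{n}$ by the standard contour estimate, uniformly in $(\beta,\sigma)$. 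Taking $C_0 := M$ supplies the constants claimed in the corollary.

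Second, I would establish continuity in $\sigma$. Fix $n$ and $\beta$, and let $\sigma_k \to \sigma$. By Theorem~\ref{thm:main1} the integrands $\psi_{\beta,\sigma_k}(g)/g^{n+1}$ converge pointwise on the contour $|g|=r$ to $\psi_{\beta,\sigma}(g)/g^{n+1}$ and are dominated by the constant $M/r^{n+1}$. Dominated convergence on the compact contour then transfers $\sigma$-continuity from the function to each Taylor coefficient. The same argument applies verbatim to $E^{(2n)}_{\beta,\sigma}$ and $P^{(n)}_{\beta,\sigma}$.

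Third, to obtain the vanishing of the odd-order coefficients of $E_{\beta,\sigma}$, I would invoke a parity symmetry in the bosonic variable. Put $U := \Gamma(-1)$ on $\FF$. Then $U$ is unitary, $U H_f U^{-1} = H_f$, and by linearity/antilinearity of $a^*$ and $a$ one has $U a^\#(k,\lambda) U^{-1} = -a^\#(k,\lambda)$, hence $U A_\sigma(\beta x_j) U^{-1} = -A_\sigma(\beta x_j)$. Since normal ordering commutes with this conjugation, expanding $:(p_j + g A_\sigma(\beta x_j))^2:$ shows that the $g^0$ and $g^2$ pieces are invariant while the $g^1$ cross term changes sign, giving
\begin{equation*}
(\I \otimes U)\, H_{g,\beta,\sigma}\, (\I \otimes U)^{-1} = H_{-g,\beta,\sigma}.
\end{equation*}
This unitary equivalence implies $E_{\beta,\sigma}(g) = E_{\beta,\sigma}(-g)$ for real $g \in D_{g_0}$, and analyticity on $D_{g_0}$ extends the identity to the full disc, so only the even Taylor coefficients of $E_{\beta,\sigma}$ survive.

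The main obstacle will be the parity step: one has to verify that the Wick product and the cross term $p_j \cdot A + A \cdot p_j$ behave correctly under $\Gamma(-1)$, using the precise definition of the Hamiltonian from Appendix~A. Everything else is routine bookkeeping with Cauchy's formula, enabled by the uniform-in-$(\beta,\sigma)$ bounds and the $\sigma$-continuity already contained in Theorem~\ref{thm:main1}.
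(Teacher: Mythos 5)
Your proposal is correct and follows essentially the same route as the paper: Cauchy's integral formula on a contour $|g|=r<g_0$ combined with the uniform bounds and $\sigma$-continuity from Theorem~\ref{thm:main1} gives the coefficient bounds and their continuity, and the parity symmetry $\Gamma(-1)\,H_{g,\beta,\sigma}\,\Gamma(-1) = H_{-g,\beta,\sigma}$ (the paper writes this as conjugation by $(-1)^N$, the same operator) kills the odd coefficients of $E_{\beta,\sigma}$. No gaps.
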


If we set $\beta = \alpha \geq 0$, $g = \alpha^{3/2}$, and $\sigma =0$, then we immediately obtain the following corollary.
It states that the ground state and the ground state energy of an atom in qed, in the scaling
limit where the ultraviolet cutoff is of the order of the Rydberg energy, admit
convergent expansions in the fine structure constant with uniformly bounded
coefficients.

\begin{corollary} \label{cor:main2} Assume Hypothesis (H). There exists a positive $\alpha_0$ and finite constants $C_0, R$ such that for $ 0 \leq \alpha \leq \alpha_0$ the operator
$H_{\alpha^{3/2},\alpha,0}$ has a ground state $\psi(\alpha^{1/2})$ with ground state energy
$E(\alpha)$ such that we have the convergent expansions
$$
\psi(\alpha^{1/2}) = \sum_{n=0}^\infty \psi_\alpha^{(n)} \alpha^{3n/2} \quad, \quad
E(\alpha) = \sum_{n=0}^\infty E^{(2n)}_\alpha \alpha^{3 n } ,
$$
and for all $n \in \N_0$ and $\alpha \geq 0$
we have $\|\psi_\alpha^{(n)} \| \leq C_0 R^{n} $ and $|E_\alpha^{(2n)}| \leq C_0 R^{2n} $.
\end{corollary}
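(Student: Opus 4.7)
The plan is to apply Corollary \ref{cor:main1} with the substitution $\beta = \alpha$, $g = \alpha^{3/2}$, $\sigma = 0$ and simply rearrange what that corollary asserts. The whole point of carrying $\beta$ as an independent parameter in Corollary \ref{cor:main1}, with coefficient bounds uniform in $(\beta,\sigma) \in \R \times [0,\infty)$, is precisely to make this rescaling argument routine.

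Concretely, I would first choose $\alpha_0 > 0$ so that $\alpha_0^{3/2} < g_0$. For $\alpha \in [0,\alpha_0]$ the values $g := \alpha^{3/2}$ and $\beta := \alpha$ then lie in $D_{g_0}$ and $\R$ respectively, so Theorem \ref{thm:main1} applies to $H_{\alpha^{3/2},\alpha,0}$. Because $g = \alpha^{3/2}$ is real and nonnegative, part (i) of Theorem \ref{thm:main1} identifies $E_{\alpha,0}(\alpha^{3/2})$ with $\inf \sigma(H_{\alpha^{3/2},\alpha,0})$, so the corresponding eigenvector $\psi_{\alpha,0}(\alpha^{3/2})$ is a genuine ground state. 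Setting
\[
E(\alpha) := E_{\alpha,0}(\alpha^{3/2}), \qquad \psi(\alpha^{1/2}) := \psi_{\alpha,0}(\alpha^{3/2}),
\]
and defining $E_\alpha^{(2n)} := E_{\alpha,0}^{(2n)}$, $\psi_\alpha^{(n)} := \psi_{\alpha,0}^{(n)}$, the substitution $g = \alpha^{3/2}$ in the convergent expansions
\[
E_{\alpha,0}(g) = \sum_{n=0}^\infty E_{\alpha,0}^{(2n)} g^{2n}, \qquad \psi_{\alpha,0}(g) = \sum_{n=0}^\infty \psi_{\alpha,0}^{(n)} g^n
\]
of Corollary \ref{cor:main1} yields the asserted series in $\alpha^{3n}$ and $\alpha^{3n/2}$. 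The coefficient bounds $|E_\alpha^{(2n)}| \leq C_0 R^{2n}$ and $\|\psi_\alpha^{(n)}\| \leq C_0 R^n$ for all $\alpha \geq 0$ and $n \in \N_0$ are then direct transcriptions of the uniform-in-$(\beta,\sigma)$ bounds already supplied by Corollary \ref{cor:main1}.

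There is no genuine obstacle: every piece of substantive work (existence of the ground state at $\sigma = 0$, analyticity in $g$, identification of the eigenvalue with the spectral infimum for real $g \geq 0$, and uniform control of the expansion coefficients in $(\beta,\sigma)$) has been completed in Theorem \ref{thm:main1} and Corollary \ref{cor:main1}. The only thing one checks is that $\alpha_0$ is chosen small enough to keep the scaled coupling inside the radius of analyticity, which is exactly the condition $\alpha_0^{3/2} < g_0$ used above.
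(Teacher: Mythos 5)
Your proposal is correct and takes essentially the same route as the paper, which simply states that Corollary~\ref{cor:main2} follows immediately from Corollary~\ref{cor:main1} upon setting $\beta=\alpha$, $g=\alpha^{3/2}$, $\sigma=0$. The one small observation you add (choosing $\alpha_0$ with $\alpha_0^{3/2}<g_0$, and invoking Theorem~\ref{thm:main1}(i) for real $g\geq 0$ to identify the eigenvalue with the spectral infimum) is exactly the implicit content of that "immediately."
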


Corollary \ref{cor:main2} improves the main theorem stated in
\cite{BFP09}. It provides a convergent expansion and
furthermore shows that the expansion coefficients are finite. Moreover, we show in the next section,
that the expansion coefficients $\psi_\alpha^{(n)}$ and $E_\alpha^{(2n)}$ can be calculated
using regular analytic perturbation theory. This yields a straightforward algorithm for
calculating the ground state and the ground state energy to arbitrary precision in $\alpha$.
We want to point out that the authors in \cite{BFP09} note that they could
 alternatively work with an ultraviolet cut-off of the order of
the rest energy of an electron, which, in the units used in this paper, corresponds to choosing
$\Lambda(\alpha) = \mathcal{O}(\alpha^{-2})$.
 The methods used in the proof of Theorem \ref{thm:main1}
 could also incorporate a certain $\alpha$ dependence of the cutoff. This would
lead to weaker conclusions, which are not only technical.

\section{Analytic Perturbation Theory}
\label{sec:ana}

In order to relate the expansions given in Theorem \ref{thm:main1} and Corollary \ref{cor:main2} to ordinary analytic perturbation theory, we introduce an infrared cutoff $\sigma > 0$.
In that case, analytic perturbation theory becomes applicable, and it is straightforward to show the following theorem. For completeness we provide a proof.

\begin{theorem} \label{thm:perturb1} Assume Hypothesis (H).
For $\sigma > 0$ and $\beta \in \R$, there is a positive $g_0$ such that for all
$g \in D_{g_0}$, the operator
$H_{g,\beta,\sigma}$ has a non-degenerate eigenvalue
$\widehat{E}_{\beta,\sigma}(g)$ with eigen-projection $\widehat{P}_{\beta,\sigma}(g)$ such that
the following holds.
\begin{itemize}
\item[(i)] For $g \in D_{g_0}$ we have $\widehat{E}_{\beta,\sigma}(g) = \inf \sigma(H_{g,\beta,\sigma})$, and $\widehat{E}_{\beta,\sigma}(0) = E_{\rm at}$.
\item[(ii)] $g \mapsto \widehat{E}_{\beta,\sigma}(g)$ and $g \mapsto \widehat{P}_{\beta,\sigma}(g)$ are analytic functions on $D_{g_0}$.
\item[(iii)] $\widehat{P}_{\beta,\sigma}(g)^* = \widehat{P}_{\beta,\sigma}(\overline{g})$ for all $g \in D_{g_0}$.
\end{itemize}
On $D_{g_0}$ we have convergent power series expansions
\begin{equation} \label{eq:perurbexpansion1}
\widehat{P}_{\beta,\sigma}(g) = \sum_{n=0}^\infty \widehat{P}_{\beta,\sigma}^{(n)} g^n , \quad
\widehat{E}_{\beta,\sigma}(g) = \sum_{n=0}^\infty \widehat{E}_{\beta,\sigma}^{(n)} g^n .
\end{equation}
\end{theorem}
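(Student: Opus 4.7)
The plan is to reduce the problem to one to which Kato's analytic perturbation theory applies directly, by factoring off the low-frequency photon modes that are decoupled from the atom when $\sigma>0$. First I would split the one-photon space as $\hh=\hh^{<\sigma}\oplus\hh^{\geq\sigma}$ according to whether $|k|<\sigma$ or $|k|\geq\sigma$. This induces $\FF=\FF_s(\hh^{\geq\sigma})\otimes\FF_s(\hh^{<\sigma})$, and because $\kappa_{\sigma,\Lambda}$ vanishes on $|k|<\sigma$, the field operator $A_\sigma(\beta x_j)$ acts as the identity on the second tensor factor. Consequently
\[
H_{g,\beta,\sigma}\;=\;H^{\sharp}_{g,\beta,\sigma}\otimes 1\;+\;1\otimes H_f^{<\sigma},
\]
where $H^\sharp_{g,\beta,\sigma}$ acts on $\HH_{\rm at}\otimes\FF_s(\hh^{\geq\sigma})$ and $H_f^{<\sigma}\ge 0$. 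At $g=0$ the reduced operator equals $H_{\rm at}+H_f^{\geq\sigma}$; since each photon in $\hh^{\geq\sigma}$ carries energy at least $\sigma$, the spectrum of $H_f^{\geq\sigma}$ is $\{0\}\cup[\sigma,\infty)$, and Hypothesis (H) therefore makes $E_{\rm at}$ a simple isolated eigenvalue of $H^\sharp_{0,\beta,\sigma}$ with gap at least $\min\{\sigma,\,{\rm gap}(H_{\rm at})\}$.

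With this isolation in hand, I would next verify that $g\mapsto H^\sharp_{g,\beta,\sigma}$ is an analytic family of type (A) on the $g$-independent domain $D(-\Delta+H_f)$. For $\sigma>0$ the coupling function $\kappa_{\sigma,\Lambda}(k)/\sqrt{|k|}$ is square integrable, so each creation or annihilation operator appearing in $A_\sigma(\beta x_j)$ is bounded by a constant multiple of $(H_f+1)^{1/2}$; the standard relative bounds (cf.\ \cite{H02,HH08}) then show that $W(g):=H^\sharp_{g,\beta,\sigma}-H^\sharp_{0,\beta,\sigma}$, a polynomial of degree two in $g$, is infinitesimally $H^\sharp_{0,\beta,\sigma}$-bounded. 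Kato's analytic perturbation theorem then produces a disk $D_{g_0}$ on which $H^\sharp_{g,\beta,\sigma}$ has a rank-one Riesz projection
\[
\widehat{P}^\sharp_{\beta,\sigma}(g)\;=\;\frac{1}{2\pi i}\oint_\gamma(z-H^\sharp_{g,\beta,\sigma})^{-1}dz
\]
and a simple eigenvalue $\widehat{E}_{\beta,\sigma}(g)$, both analytic in $g$ and continuing the pair $(E_{\rm at},P_{\psi_{\rm at}}\otimes P_{\Omega^{\geq\sigma}})$. Lifting back via $\widehat{P}_{\beta,\sigma}(g):=\widehat{P}^\sharp_{\beta,\sigma}(g)\otimes P_{\Omega^{<\sigma}}$ would then yield an eigen-projection of $H_{g,\beta,\sigma}$ at the same eigenvalue, analytic in $g$, from which the convergent expansions in \eqref{eq:perurbexpansion1} follow immediately.

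It remains to check (i) and (iii). For (iii): since $V$ is real and $A_\sigma(\beta x_j)$ self-adjoint, one has $H_{g,\beta,\sigma}^{*}=H_{\overline{g},\beta,\sigma}$, and taking adjoints of the Riesz integral over a real contour enclosing $E_{\rm at}$ gives $\widehat{P}_{\beta,\sigma}(g)^{*}=\widehat{P}_{\beta,\sigma}(\overline{g})$. For (i), real $g\in\R\cap D_{g_0}$ makes $H_{g,\beta,\sigma}$ self-adjoint and bounded below; after possibly shrinking $g_0$, upper semi-continuity of the spectrum keeps $\widehat{E}_{\beta,\sigma}(g)$ at the bottom of $\sigma(H^\sharp_{g,\beta,\sigma})$, and $H_f^{<\sigma}\ge 0$ then ensures it is also the bottom of $\sigma(H_{g,\beta,\sigma})$. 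I expect the only mildly delicate point to be keeping all relative bounds uniform in $\beta\in\R$; this works because $e^{\pm i\beta k\cdot x_j}$ is a unitary multiplier and does not change the $L^2$-norm of $\kappa_{\sigma,\Lambda}/\sqrt{|k|}$, so the bounds on $W(g)$ are $\beta$-independent. None of these steps should pose a genuine obstacle---the real content is packed into the decomposition argument of the first paragraph, which is exactly what makes $E_{\rm at}$ isolated in the reduced space and thereby resurrects ordinary Kato theory.
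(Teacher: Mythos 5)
Your proposal is correct and follows essentially the same route as the paper: split the one-photon Hilbert space at frequency $\sigma$, factor the Fock space as a tensor product, observe that the interaction lives entirely in the high-frequency factor so that $E_{\rm at}$ is isolated for the reduced operator, apply Kato's analytic perturbation theory there, and tensor the resulting Riesz projection with the low-frequency vacuum projection to lift it back. The only extra content you supply is the remark that the relative bounds are $\beta$-independent because $e^{\pm i\beta k\cdot x_j}$ is a unitary multiplier, which is a useful observation though the theorem as stated does not require $g_0$ to be uniform in $\beta$.
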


\begin{proof} Fix $\sigma > 0$ and $\beta \in \R$. We introduce the subspaces
$\hh_\sigma^{(+)} := L^2(\{ k \in \R^3 | |k| \geq \sigma \} \times \Z_2)$
and $\hh_\sigma^{(-)} := L^2(\{ k \in \R^3 | |k| < \sigma \} \times \Z_2)$ of $\hh$, and we define the
associated Fock-spaces $\FF_{\sigma}^{(\pm)} := \FF_s(\hh_{\sigma}^{(\pm)})$. By $1_\sigma^{(\pm)}$ we denote
the identity operator in $\FF_{\sigma}^{(\pm)}$ and by $1_{\rm at}$ the identity operator in $\HH_{\rm at}$.
We consider the natural unitary isomorphism
$
U : \FF_{\sigma}^{(+)} \otimes \FF_{\sigma}^{(-)} \to \FF
$,
which is uniquely characterized by
$$
U( \{ h_1 \otimes_s \cdots \otimes_s h_n \} \otimes \{ g_1 \otimes_s \cdots \otimes_s g_m \} )
= h_1 \otimes_s \cdots \otimes_s h_n \otimes_s g_1 \otimes_s \cdots \otimes_s g_m ,
$$
for any $h_1,...,h_n \in \hh_{\sigma}^{(+)}$ and $g_1,...,g_m \in \hh_{\sigma}^{(-)}$.
We denote the trivial extension of $U$ to $\HH_{\rm at} \otimes \FF_{\sigma}^{(+)} \otimes \FF_{\sigma}^{(-)}$ by the same symbol.
We expand the Hamiltonian as follows. We write
$$
H_{g,\beta,\sigma} = H_0 + T_{\beta,\sigma}(g) ,
$$
with $H_0 := H_{\rm at} + H_f$ and
\begin{align*}
T_{\beta,\sigma}(g) &:= g \sum_{j=1}^N 2 p_j \cdot A_{\sigma}(\beta x_j) + g^2 : \sum_{j=1}^N A_{\sigma}(\beta x_j)^2 : .
\end{align*}
By $T_{\beta,\sigma}^{(+)}(g)$ we denote the unique operator in $\HH_{
\rm at} \otimes \FF_{\sigma}^{(+)}$ such that
$T_{\beta,\sigma}(g) = U ( T_{\beta,\sigma}^{(+)}(g) \otimes 1_{\sigma}^{(-)}) U^*$.
We have
$$
U^* H_{g,\beta,\sigma} U = \left( H_{0,\sigma}^{(+)} + T_{\beta,\sigma}^{(+)}(g)
 \right) \otimes 1_{\sigma}^{(-)} + 1_{\rm at} \otimes 1_{\sigma}^{(+)} \otimes H_{f,\sigma}^{(-)} \; ,
$$
where we introduced the following operators acting on the corresponding spaces
\begin{align*}
H_{0,\sigma}^{(+)} &= H_{\rm at} \otimes 1_{\sigma}^{(+)} + 1_{\rm at} \otimes H_{f,\sigma}^{(+)} \\
H_{f,\sigma}^{(-)} &= d \Gamma (M_{\chi_\sigma \omega}) , \quad
H_{f,\sigma}^{(+)} = d \Gamma (M_{(1-\chi_\sigma) \omega}) ,
\end{align*}
where $\chi_{\sigma}(k) = 1$ if $|k| < \sigma$ and zero otherwise.
Now observe that $H_{f,\sigma}^{(-)}$ has only one eigenvalue. That eigenvalue is 0, it is at the bottom
of the spectrum, it is non-degenerate and and its
eigenvector, $\Omega_\sigma^{(-)}$, is the vacuum of $\FF_\sigma^{(-)}$. This implies
that $H_{g,\beta,\sigma}$ and $H_{0,\sigma}^{(+)} + T_{\beta,\sigma}^{(+)}(g)$ have the same eigenvalues and the corresponding
eigen-spaces are in bijective correspondence.
Next observe that $H_{0,\sigma}^{(+)}$ has at the bottom of its spectrum an isolated non-degenerate eigenvalue which equals
$E_{\rm at}$.
Moreover, $g \mapsto H_{0,\sigma}^{(+)} + T_{\beta,\sigma}^{(+)}(g)$ is
an analytic family, since the interaction term is bounded with respect to $H_{0,\sigma}^{(+)}$.
Now by analytic perturbation theory, it follows that there exists an $\epsilon > 0$
such that for $g$ in a neighborhood of zero the following operator
is well defined
\begin{align} \label{eq:orginialprojection}
P_{\beta,\sigma}^{(+)}(g) &:= - \frac{1}{2 \pi i} \int_{|z -E_{\rm at}| = \epsilon} (H_{0,\sigma}^{(+)} +
 T_{\beta,\sigma}^{(+)}(g) - z )^{-1} dz .
\end{align}
Moreover, the operator $P_{\beta,\sigma}^{(+)}(g) $ projects
onto a one-dimensional space which is the eigen-space of $H_{0,\sigma}^{(+)} + T_{\beta,\sigma}^{(+)}(g) $ with eigenvalue
$\widehat{E}_{\beta,\sigma}(g)$. Furthermore, $P_{\beta,\sigma}^{(+)}(g) $ and $\widehat{E}_{\beta,\sigma}(g)$ depend analytically on $g$
and $\widehat{E}_{\beta,\sigma}(0) = E_{\rm at}$. We conclude that $\widehat{E}_{\beta,\sigma}(g)$ is a non-degenerate eigenvalue of
$H_{g,\beta,\sigma}$ with corresponding eigen-projection
\begin{equation} \label{eq:orginialprojection22}
\widehat{P}_{\beta,\sigma}(g) = U ( P_{\beta,\sigma}^{(+)}(g)\otimes P_{\Omega_\sigma^{(-)} } ) U^* ,
\end{equation}
and properties (i)-(iii) of the theorem are satisfied.
\end{proof}

We want to emphasize that the $g_0$ of Theorem \ref{thm:perturb1} depends on $\sigma$ and $\beta$ and we have
not ruled out that $g_0 \to 0$ as $\sigma \downarrow 0$. To rule this out we will need Theorem \ref{thm:main1}.
The expansion coefficients of the eigenvalue or the associated eigen-projection obtained on the one hand by renormalization,   \eqref{eq:expansion1},
and on the other
hand using analytic perturbation theory are equal.
To this end, note that for $\sigma > 0$ and $\beta \in \R$ there exists by Theorems \ref{thm:main1} and \ref{thm:perturb1}
 a ball $D_{r}$ of nonzero radius $r$,  such that the following holds.
The eigenvalue $\widehat{E}_{\beta,\sigma}(g)$ is non-degenerate for $g \in D_r$.
Thus $\widehat{E}_{\beta,\sigma}(g) = E_{\beta,\sigma}(g)$ on $D_r$ and hence
 $\widehat{P}_{\beta,\sigma}(g) = P_{\beta,\sigma}(g)$ on $D_r$. Thus the following remark is an immediate consequence of
 Theorems \ref{thm:main1} and \ref{thm:perturb1}.

\begin{remark} For all $\beta \in \R$ and $\sigma>0$ we have
$
{P}_{\beta,\sigma}^{(n)} = \widehat{P}_{\beta,\sigma}^{(n)}$ and ${E}_{\beta,\sigma}^{(n)} = \widehat{E}_{\beta,\sigma}^{(n)}$.
Moreover, $\widehat{P}_{\beta,\sigma}^{(n)}$ and $\widehat{E}_{\beta,\sigma}^{(n)}$ have a limit as $\sigma \downarrow 0$.
\end{remark}

Finally we want to note that $\widehat{P}^{(n)}_{\beta,\sigma}$ can be calculated, for example,
by first expanding the resolvent in Eq. \eqref{eq:orginialprojection} in powers of $g$ and then using
Eq. \eqref{eq:orginialprojection22}. This will then yield the coefficients $ \widehat{E}_{\beta,\sigma}^{(n)}$, for example by
expanding the right hand side of the identity
$$
\widehat{E}_{\beta,\sigma}(g) = \frac{ \langle \varphi_{\rm at} \otimes \Omega, H_{g,\beta,\sigma} \widehat{P}_{\beta,\sigma}(g) \varphi_{\rm at} \otimes \Omega \rangle}{
 \langle \varphi_{\rm at} \otimes \Omega, \widehat{P}_{\beta,\sigma}(g) \varphi_{\rm at} \otimes \Omega \rangle } ,
$$
where $\varphi_{\rm at}$ denotes the ground state of $H_{\rm at}$.

\section{Outline of the Proof}
\label{sec:outline}

The main method used in the proof of Theorem \ref{thm:main1} is
operator theoretic renormalization \cite{BFS98,BCFS03} and the fact that renormalization preserves
analyticity \cite{GH09,HH10}. The renormalization procedure is an iterated application of
the so called smooth Feshbach map. The smooth Feshbach map is reviewed in Appendix C and necessary properties of
it are summarized.
In this paper we will use many results stated in a previous paper \cite{HH10}.
Their generalization from the Fock space over $L^2(\R^3)$, as considered in \cite{HH10}, to a Fock space
over $L^2(\R^3 \times \Z_2)$ is straight forward. To be able to show that the renormalization transformation
is a suitable contraction we use that by rotation invariance
the renormalization procedure only involves kernels which do not contain any terms which are
linear in creation or annihilation operators. In section \ref{sec:symmetries} we define an $SO(3)$ action
on the atomic Hilbert space and the Fock space, which leaves the Hamiltonian invariant.
In section \ref{sec:ban} we introduce spaces which are needed
to define the renormalization transformation.
In section \ref{sec:ini} we show that after an initial Feshbach transformation the Feshbach map is
in a suitable Banach space. This allows us to use results of \cite{HH10} which are collected
in Section \ref{sec:ren:def}. In section \ref{sec:prov} we put all the pieces together and
prove Theorem \ref{thm:main1}. The proof is based on Theorems \ref{thm:inimain1} and \ref{thm:bcfsmain}.
In section \ref{sec:prov}, we also show Corollary \ref{cor:main1}.

We shall make repeated use of the so called pull-through formula which is given in Lemma \ref{lem:pullthrough}, in Appendix A.
Moreover we will use the notation that $\R_+ := [0,\infty)$.
Finally, let us note that using an appropriate
scaling we can assume without loss of generality that the distance between the lowest eigenvalue of $H_{\rm at}$ and the rest of the spectrum is one, i.e.,
\begin{equation} \label{eq:hatscale}
E_{\rm at,1} - E_{\rm at} = 1 ,
\end{equation}
where $E_{\rm at,1} := \inf ( \sigma( H_{\rm at}) \setminus\{ E_{\rm at}\}) $. Any Hamiltonian of the form
\eqref{eq:hamiltoniandefinition} satisfying Hypothesis (H) is up to a positive multiple unitarily equivalent to an operator
satisfying \eqref{eq:hatscale} and again Hypothesis (H), but with a rescaled potential and with different values for
$\sigma,\Lambda,\beta$, and $g$. More explicitly, with $\delta := E_{\rm at,1} - E_{\rm at}$ we have
\begin{equation} \label{eq:scaledhamiltonian}
\delta^{-1} S H_{g,\beta,\sigma} S^* =
\sum_{j=1}^N ( p_j + \widetilde{g} A_{\widetilde{\sigma},\widetilde{\Lambda}}(\widetilde{\beta} x_j ))^2 + V_\delta + H_f ,
\end{equation}
where $S$ is a the unitary transformation which leaves the vacuum invariant and
satisfies $S x_j S^* = \delta^{-1/2} x_j$ and $S a^\#(k) S^* = \delta^{-3/2} a^\#(\delta^{-1} k)$. We used the notation
$V_\delta := \delta^{-1} S V S^*$, $\widetilde{\beta} := \delta^{1/2} \beta$, $\widetilde{\Lambda} := \delta^{-1} \Lambda$, $\widetilde{\sigma} := \delta^{-1} \sigma$, and $\widetilde{g} = \delta^{1/2} g$. From the definition of $\delta$ it follows immediately from
\eqref{eq:scaledhamiltonian} that $\sum_{j=1}^N p_j^2 + V_\delta$ satisfies \eqref{eq:hatscale}.

\section{Symmetries}

\label{sec:symmetries}

Let us introduce the following canonical representation of $SO(3)$ on $\HH_{\rm at}$ and $\hh$. For $R \in SO(3)$ and $\psi \in \HH_{\rm at}$ we define
$$
\mathcal{U}_{\rm at}(R) \psi(x_1,...,x_N) = \psi(R^{-1} x_1, ... , R^{-1} x_N) .
$$
To define an $SO(3)$ representation on Fock space it is convenient to consider a different but equivalent representation of the
Hilbert space $\hh$.
We introduce the Hilbert space $\hh_0 := L^2(\R^3 ; \C^3)$. We consider the
subspace of transversal vector fields
$$
\hh_T := \{ f \in \hh_0 | k \cdot f(k) = 0 \} .
$$
It is straightforward to verify that the map
$\phi : \hh \to \hh_T $ defined by
\begin{eqnarray*}
 (\phi f)(k) := \sum_{\lambda=1,2} f(k,\lambda) \varepsilon(k,\lambda)
\end{eqnarray*}
establishes a unitary isomorphism with inverse
$$
(\phi^{-1} h)(k,\lambda) = h(k) \cdot \varepsilon(k,\lambda) .
$$
We define the action of $SO(3)$ on $\hh_T$ by
$$
(\mathcal{U}_T(R) h )(k) = R h(R^{-1} k) , \quad \forall h \in \hh_T , R \in SO(3) .
$$
The function $R \mapsto \phi^{-1} \mathcal{U}_T(R) \phi$ defines a representation of $SO(3)$ on $\hh$ which we denote
by  $\UU_\hh$.  For $R \in SO(3)$ and $f \in \hh$ it  is  given by
\begin{equation} \label{eq:repofh}
(\mathcal{U}_\hh(R) f )(k,\lambda) = \sum_{\widetilde{\lambda}=1,2}
D_{\lambda \widetilde{\lambda}}(R,k) f(R^{-1} k , \widetilde{\lambda}) ,
\end{equation}
where
$
D_{\lambda \widetilde{\lambda}}(R,k) := \varepsilon(k,\lambda) \cdot R \varepsilon(R^{-1}k, \widetilde{\lambda})
$.
This yields a representation on Fock space which we  denote by $\UU_\FF$. It is characterized by
\begin{equation} \label{eq:repofcreation2}
\UU_\FF(R) a^\#(f) \UU_\FF(R)^* = a^\#(\UU_\hh(R) f) \quad , \quad \UU_\FF(R) \Omega = \Omega .
\end{equation}
We have
\begin{equation} \label{eq:repofcreation}
\mathcal{U}_\FF(R) a^\#(k,\lambda) \UU_\FF(R)^* =
\sum_{\widetilde{\lambda}=1,2} D_{\widetilde{\lambda} \lambda}(R, Rk) a^\#(Rk, \widetilde{\lambda}) .
\end{equation}
We denote the representation on $\HH_{\rm at} \otimes \FF$ by
 $\mathcal{U} =  \mathcal{U}_{\rm at} \otimes \mathcal{U}_\FF$.
We have the following transformation properties of the operators $(x_j)_l$ and $(p_j)_l$, with $j=1,...,N$ and $l=1,2,3$,
\begin{align} \label{eq:transx}
\mathcal{U}(R) (x_j)_l \mathcal{U}(R)^* &= \sum_{m=1}^3 R_{ml} (x_j)_m = (R^{-1} x_j)_l  , \\
\mathcal{U}(R) (p_j)_l \mathcal{U}(R)^* &= \sum_{m=1}^3 R_{ml} (p_j)_m = (R^{-1} p_j)_l .
\end{align}
Moreover, the transformation property of the $l$-th component of the field operator $A_{\sigma,l}(x_j)$ is
\begin{equation} \label{eq:transA}
\mathcal{U}(R) A_{\sigma,l}(x_j) \mathcal{U}(R)^* = \sum_{m=1}^3 R_{ml} A_{\sigma,m}(x_j) = (R^{-1} A)_l( x_j) .
\end{equation}
This can be seen as follows. For fixed $x \in \R^3$ and $l=1,2,3$ define the function
\begin{equation} \label{defofintkernela}
f_{(l,x)}(k,\lambda) := \frac{\kappa_{\sigma,\Lambda}(k)}{\sqrt{ 2 |k|}} \varepsilon(k,\lambda)_l e^{- i k \cdot x } .
\end{equation}
Eq. \eqref{eq:transA} follows since by \eqref{eq:repofh} we have  $\UU_\hh(R) f_{(l,x)} = \sum_{m=1}^3 R_{ml} f_{(m, R x)} $.
We call a linear  operator $A$ in the Hilbert space  $\HH$ rotation invariant if
$A = \UU(R) A \UU(R)^*$ for all $R \in SO(3)$ and likewise for operators in $\FF$ and $\HH_{\rm at}$.
From \eqref{eq:transx}--\eqref{eq:transA} it is evident to see  that the Hamiltonian $H_{g,\beta,\sigma}$ defined in
 \eqref{eq:hamiltoniandefinition},   is rotation invariant.

\begin{lemma} \label{lem:mainidea} Let $f \in \hh$.  If   $a^\#(f)$ is an operator which is invariant under rotations,  then $f=0$.
\end{lemma}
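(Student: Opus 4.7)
My plan is to transfer the rotation invariance from the operator $a^\#(f)$ to the one-particle vector $f$ itself, and then to rule out the existence of any nonzero rotation-invariant transverse vector field. First I use the intertwining relation \eqref{eq:repofcreation2}: the assumption $\UU_\FF(R) a^\#(f) \UU_\FF(R)^* = a^\#(f)$ for every $R \in SO(3)$ becomes $a^\#(\UU_\hh(R) f) = a^\#(f)$. I then invoke the injectivity of $h \mapsto a^*(h)$, which is immediate from $a^*(h)\Omega = h$ under the canonical identification $\FF_s^{(1)}(\hh) \cong \hh$; injectivity of $h \mapsto a(h)$ then follows by taking adjoints. Either way I conclude $\UU_\hh(R) f = f$ in $\hh$ for every $R \in SO(3)$.

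Next I pass to the transverse picture via the unitary isomorphism $\phi : \hh \to \hh_T$ of Section \ref{sec:symmetries}. Writing $h := \phi f \in \hh_T$, the identity $\UU_T(R) h = h$ reads $R\, h(R^{-1} k) = h(k)$, so $h$ is $SO(3)$-covariant in the sense that $h(Rk) = R\, h(k)$. I fix any $k \neq 0$ and let $R$ range over the stabilizer $\mathrm{Stab}(k) \subset SO(3)$, the one-parameter subgroup of rotations about the axis through $k$. Covariance then reduces to $h(k) = R\, h(k)$, so $h(k) \in \C^3$ is invariant under every rotation about $k$; the only vectors with this property are the scalar multiples of $k$. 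The transversality condition $h(k) \cdot k = 0$ built into the definition of $\hh_T$ then forces $h(k) = 0$. Hence $h = 0$ a.e.\ and $f = \phi^{-1} h = 0$.

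The one delicate point I expect to be the main obstacle is measure-theoretic rather than algebraic: for each fixed $R$ the identity $\UU_T(R) h = h$ only holds for almost every $k$, and one cannot naively specialize $R$ to the null subgroup $\mathrm{Stab}(k)$ at a single fixed $k$. I plan to handle this by a standard Fubini argument on $SO(3) \times \R^3$ with Haar $\otimes$ Lebesgue measure (equivalently by replacing $h$ with its $\UU_T$-Haar average, which equals $h$ in $\hh_T$), producing a representative for which the joint covariance $h(Rk) = R\, h(k)$ holds off a single product null set; thus for a.e.\ $k$ it holds for all $R$, including stabilizer rotations. Everything else is routine, and the algebraic heart of the argument is the stabilizer-plus-transversality step above.
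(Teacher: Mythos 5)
Your reduction to the one-particle level via \eqref{eq:repofcreation2} and the injectivity of $h \mapsto a^\#(h)$ is correct and matches the paper, as is the algebraic heart of the argument: for an $SO(3)$-covariant $h \in \hh_T$, the stabilizer of $k$ forces $h(k) \parallel k$ and transversality then forces $h(k)=0$. The gap is in the measure-theoretic regularization, which you rightly flag as the main obstacle but do not resolve. A Fubini argument applied to $(R,k) \mapsto R\,h(R^{-1}k) - h(k)$ on $SO(3) \times \R^3$ shows it vanishes off a Haar $\otimes$ Lebesgue null set $N$; a second application of Fubini then gives only that for a.e.\ $k$ the slice $N_k \subset SO(3)$ is Haar-null, i.e.\ the covariance holds for \emph{almost every} $R$ — not for \emph{all} $R$. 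Since $\mathrm{Stab}(k)$ is a one-dimensional subgroup of the three-dimensional $SO(3)$, it has Haar measure zero, so ``a.e.\ $R$'' may miss the stabilizer entirely and your specialization step has nothing to bite on. The phrase ``holds off a single product null set; thus for a.e.\ $k$ it holds for all $R$'' is a non-sequitur: a general null set of $SO(3) \times \R^3$ is not of the form $SO(3) \times N_2$.

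Your parenthetical alternative of replacing $h$ by its Haar average $\bar h(k) := \int_{SO(3)} R\,h(R^{-1}k)\,dR$ can be made to work, but only with an observation you do not supply: the change-of-variables identity $\bar h(Sk) = S\bar h(k)$ holds pointwise \emph{for every} $S$ at precisely those $k$ for which $R \mapsto h(R^{-1}k)$ is Haar-integrable, and this set of $k$ is both conull and rotation-invariant (its defining condition is rotation-invariant), so it really is the complement of a null set of the form $N_2$. One must also check that transversality $\bar h(k)\cdot k = 0$ survives the averaging (it does, a.e., again by Fubini). Without the rotation-invariance of the exceptional set, the Haar-average route is as incomplete as the plain Fubini one. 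The paper sidesteps the difficulty by a different device: it decomposes $L^2(\R^3;\C^3)$ into the $SO(3)$-invariant angular-momentum blocks $L^2(\R_+)\otimes H_l \otimes \C^3$, notes that each block $\widehat f_l$ is separately invariant, and — crucially — that $\widehat f_l(t,\cdot)$ is a \emph{continuous} function on $S^2$ for a.e.\ $t$, being a finite linear combination of spherical harmonics. Continuity in the angular variable, combined with a countable dense set of rotations, upgrades the a.e.-in-$e$ identity to an everywhere-in-$e$ identity, after which the stabilizer step proceeds pointwise. Either adopt that continuity device or supply the rotation-invariance of the exceptional set in your Haar-average version; as written, the proposal has a real hole exactly at the point you anticipated trouble.
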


\begin{proof} Invariance implies
$$
a^\#(f) = \UU_\FF(R) a^\#(f) \UU_\FF(R)^* = a^\#(\UU_\hh(R) f)
$$
and therefore $\UU_\hh(R) f = f$. This implies that for  $\widehat{f} := \phi f$ we have
\begin{equation} \label{eq:fRk}
\widehat{f}(Rk) = R \widehat{f}(k) .
\end{equation}
Let  $H_l$ denote  the space of spherical harmonics of angular momentum $l$.
We note that $L^2(\R^3;\C^3) = \bigoplus_{l=0}^\infty L^2(\R^+)\bigotimes H_l\bigotimes\C^3$ where each summand is invariant under the representation of $SO(3), f(\cdot) \mapsto Rf(R^{-1}\cdot)$.
It follows that $\widehat{f} = \bigoplus_{l=0}^\infty\widehat {f}_l$ where each $\widehat {f}_l$ is invariant.  By Fubini's theorem there is a null set $\Lambda_1 \subset \R^+$ such that for a countable dense set $\mathcal C$ of $R \in SO(3)$ there is a null set $\Lambda_2(t) \subset S^2$ so that $R\widehat {f}_l(t,R^{-1}e) = \widehat {f}_l(t,e)$ for all $t$ in the complement of $\Lambda_1$, $R \in \mathcal C$, and $e$ in the complement of $\Lambda_2(t)$.  But since $H_l$ is just the space of spherical harmonics of angular momentum $l$, $\widehat {f}_l(t,e)$ is continuous in the variable $e$ so we can take $\mathcal C = SO(3)$ and $\Lambda_2(t) = \emptyset$.

In particular if  $R e_3 = e_3$, then $\widehat{f_l}(t, e_3) = R \widehat{f_l}(t, e_3)$. This implies that  $\widehat{f_l}(t, e_3) = c_l(t) e_3$ for some function $c_l$ on $[0,\infty)\setminus\Lambda_1$.  Rotating $e_3$ into an arbitrary $e \in S^2$ and using the invariance we find $\widehat{f_l}(t, e) = c_l(t)e$
which in turn implies that  $\widehat{f}(k) = c(|k|) k$ almost everywhere. But a function of this type is an element of $\hh_T$ only if it is 0.
\end{proof}

\section{Banach Spaces of Hamiltonians}
\label{sec:ban}

In this section we introduce Banach spaces of integral kernels, which
parameterize certain subspaces of the space of bounded operators on Fock space.
These subspaces are suitable to study an iterative application of the Feshbach map
and to formulate the contraction property. We mainly follow the exposition in \cite{BCFS03}.
However, we use a different class of Banach spaces.

The renormalization transformation will be defined on operators acting on the reduced Fock space
$\mathcal{H}_{\rm red}:= P_{\rm red} \FF$,
where we introduced the notation $P_{\rm red}:= \chi_{[0,1]}(H_f)$.
We will investigate bounded operators in $\mathcal{B}(\mathcal{H}_{\rm red})$ of the form
\beqn \label{eq:sum}
H(w) := \sum_{m+n \geq 0} H_{m,n}(w) ,
\eeqn
with
\begin{align}
& H_{m,n}(w) := H_{m,n}(w_{m,n}) , \nonumber \\
& H_{m,n}(w_{m,n}) := P_{\rm red} \int_{\un{B}_1^{m+n}} \frac{ d \mu( {K}^{(m,n)})}{|{K}^{(m,n)}|^{1/2}} a^*({K}^{(m)}) w_{m,n}(H_f, {K}^{(m,n)}) a(\widetilde{{K}}^{(n)}) P_{\rm red} , \quad m+n \geq 1 , \label{eq:defhmn11} \\
& H_{0,0}(w_{0,0}) := w_{0,0}(H_f) , \nonumber
\end{align}
where $w_{m,n} \in L^\infty([0,1] \times \un{B}_1^m \times \un{B}_1^n)$
is an integral kernel for $m+n \geq 1$, $w_{0,0} \in L^\infty([0,1])$, and $w$ denotes the sequence of
integral kernels $(w_{m,n})_{m,n \in \N_0^2}$.
We have used and will henceforth use the following notation. We set $K = (k, \lambda ) \in \R^3 \times \Z_2$, and write
\begin{align*}
& \un{X} := X \times \Z_2 \quad , \quad B_1 := \{ x \in \R^3 | |x|< 1 \} \\
& K^{(m)} := ({K}_1, ... ,{K}_m ) \in \left( {\R}^{3} \times \Z_2 \right)^m ,
\quad \widetilde{{K}}^{(n)} := (\widetilde{{K}}_1, ... , \widetilde{{K}}_n ) \in \left( {\R}^{3} \times \Z_2 \right)^n , \\
& {K}^{(m,n)} := ({K}^{(m)}, \widetilde{{K}}^{(n)}) \\
& \int_{\un{X}^{m+n}} d {K}^{(m,n)} := \int_{{X}^{m+n}} \sum_{(\lambda_1,...,\lambda_m,\widetilde{\lambda}_1,...,\widetilde{\lambda}_n) \in \Z_2^{m+n} } dk^{(m)} d\widetilde{k}^{(n)} \\
& dk^{(m)} := \prod_{i=1}^m {d^3 k_i} , \quad d\widetilde{k}^{(n)} := \prod_{j=1}^n {d^3 \widetilde{k}_j} , \quad
d K^{(m)} := d K^{(m,0)} , \quad d \widetilde{K}^{(n)} := d K^{(0,n)} , \\
& d \mu (K^{(m,n)}) := (8 \pi )^{-\frac{{m+n}}{2}} d K^{(m,n)} \\
& a^*({K}^{(m)}) := \prod_{i=1}^m a^*({K}_i) , \quad a(\widetilde{{K}}^{(m)}) := \prod_{j=1}^m a(\widetilde{{K}}_j) \\
& | {K}^{(m,n)}| := | {K}^{(m)} | \cdot | \widetilde{{K}}^{(n)}| , \quad | {K}^{(m)} | := |k_1| \cdots |k_m | , \quad | \widetilde{{K}}^{(m)} | := |\widetilde{k}_1| \cdots |\widetilde{k}_m | , \\
& \Sigma[{K}^{(m)}] := \sum_{i=1}^n |k_m | \; .
\end{align*}
Note that in view of the pull-through formula \eqref{eq:defhmn11} is equal to
\beqn \label{eq:defintegralkernel}
\int_{\underline{B}_1^{m+n}} \frac{ d \mu(K^{(m,n)})}{|K^{(m,n)}|^{1/2}} a^*(K^{(m)}) \chi(H_f + \Sigma[K^{(m)}] \leq 1 ) w_{m,n}(H_f , K^{(m,n)})
\chi(H_f + \Sigma[\tilde{K}^{(n)}] \leq 1) a(\tilde{K}^{(n)} ) \; .
\eeqn
Thus we can restrict attention to integral kernels $w_{m,n}$ which are essentially supported on the sets
\begin{eqnarray*}
\underline{Q}_{m,n} &:=& \{ ( r , K^{(m,n)}) \in [0,1] \times \underline{B}_1^{m+n} \ | \ r \leq 1 -
\max(\Sigma[K^{(m)}],
\Sigma[\widetilde{K}^{(m)}]) \} , \quad m + n \geq 1 .
\end{eqnarray*}
Moreover, note that integral kernels can always be assumed to be symmetric. That is, they lie in the range of the symmetrization operator,
which is defined as follows,
\begin{eqnarray} \label{eq:symmetrization}
w_{M,N}^{({\rm sym})}(r , K^{(M,N)}) := \frac{1}{N!M!} \sum_{\pi \in S_M} \sum_{\widetilde{\pi} \in S_N} {w}_{M,N}(r,
K_{\pi(1)},\ldots,K_{\pi(N)}, \widetilde{K}_{\widetilde{\pi}(1)},\ldots,\widetilde{K}_{\widetilde{\pi}(M)}).
\end{eqnarray}
Note that \eqref{eq:defhmn11} is understood in the sense of forms. It defines a densely defined form
which can be seen to be bounded using the expression \eqref {eq:defintegralkernel} and Lemma \ref{kernelopestimate}.
Thus it uniquely determines a bounded operator which we denote by $H_{m,n}(w_{m,n})$. This is explained in more
detail in Appendix A. We have the following lemma.
\begin{lemma} For $w_{m,n} \in L^\infty([0,1] \times \un{B}_1^m \times \un{B}_1^n)$\label{lem:operatornormestimates} we have
\beqn \label{eq:operatornormestimate1}
\|H_{m,n}(w_{m,n}) \|_{} \leq \| w_{m,n} \|_{\infty} ( n! m!)^{-1/2} \; .
\eeqn
\end{lemma}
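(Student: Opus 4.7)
The plan is to estimate the operator norm via the sesquilinear form $\langle\phi, H_{m,n}(w_{m,n})\psi\rangle$ for $\phi,\psi\in\mathcal{H}_{\rm red}$. Starting from the pull-through representation \eqref{eq:defintegralkernel}, I move $a^*(K^{(m)})$ to the $\phi$-side via $\langle\phi, a^*(K^{(m)}) X\rangle = \langle a(K^{(m)})\phi, X\rangle$. The characteristic functions $\chi(H_f + \Sigma[K^{(m)}] \leq 1)$ and $\chi(H_f + \Sigma[\widetilde{K}^{(n)}] \leq 1)$ appearing in \eqref{eq:defintegralkernel} force the first argument of $w_{m,n}(H_f, K^{(m,n)})$ to lie in $[0,1]$ on the support of the integrand, so this function-of-$H_f$ multiplier has operator norm bounded by $\|w_{m,n}\|_\infty$ uniformly in $K^{(m,n)}$. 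Combined with a Cauchy-Schwarz inside the inner product this yields
$$
|\langle \phi, H_{m,n}(w_{m,n})\psi\rangle| \leq \|w_{m,n}\|_\infty \int_{\un{B}_1^{m+n}} \frac{d\mu(K^{(m,n)})}{|K^{(m,n)}|^{1/2}} \|\chi_1\, a(K^{(m)})\phi\| \, \|\chi_2\, a(\widetilde{K}^{(n)})\psi\|,
$$
where $\chi_1,\chi_2$ denote the cutoffs inherited from \eqref{eq:defintegralkernel}.

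Next, since $d\mu(K^{(m,n)}) = d\mu(K^{(m)})\,d\mu(\widetilde{K}^{(n)})$, $|K^{(m,n)}|^{-1/2} = |K^{(m)}|^{-1/2}|\widetilde{K}^{(n)}|^{-1/2}$, and the two norms depend only on their respective sets of variables, the integral separates into a product of two one-sided integrals. The problem thus reduces to the key estimate
$$
\int_{\un{B}_1^j} \frac{d\mu(K^{(j)})}{|K^{(j)}|^{1/2}}\,\|\chi\, a(K^{(j)})\eta\| \leq \frac{\|\eta\|}{\sqrt{j!}}, \qquad \eta \in \mathcal{H}_{\rm red},
$$
applied with $j=m$ to $\eta=\phi$ and with $j=n$ to $\eta=\psi$; multiplying the two resulting bounds yields the desired inequality. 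This one-sided estimate is of the same type as the standard kernel operator estimate (Lemma \ref{kernelopestimate} in Appendix A), which I would invoke.

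The main obstacle is establishing the one-sided estimate with the correct combinatorial factor $1/\sqrt{j!}$. A naive Cauchy-Schwarz in the momentum variables produces a constant that grows with the particle number, since the Fock-space identity $\int dK^{(j)}\|a(K^{(j)})\eta_p\|^2 = \frac{p!}{(p-j)!}\|\eta_p\|^2$ is not bounded in $p$ on $\mathcal{H}_{\rm red}$ (where $H_f \leq 1$ but the particle number is unbounded). What rescues the estimate is precisely the $H_f$-cutoff $\chi$ in \eqref{eq:defintegralkernel}, which restricts the $K^{(j)}$-integration to the simplex $\Sigma[K^{(j)}] \leq 1 - H_f(\eta^{\rm int})$. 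On this simplex the Dirichlet integral $\int_{\Sigma[K^{(j)}]\leq 1} dK^{(j)}/|K^{(j)}| = (8\pi)^j/(2j)!$ combines with the measure normalization $(8\pi)^{-j/2}$ implicit in $d\mu$ to provide exactly the decay needed to extract the claimed $1/\sqrt{j!}$ factor via a final Cauchy-Schwarz in the momentum variables.
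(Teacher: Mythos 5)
Your overall strategy --- move $a^*$ to the left, Cauchy--Schwarz, factor the integral into a product of one-sided pieces, and close with a kernel estimate of the type in Lemma \ref{kernelopestimate} --- is essentially the paper's route: the paper applies Lemma \ref{kernelopestimate} to the cutoff kernel arising from \eqref{eq:defintegralkernel}, bounds the weights $\prod_l(r+\Sigma[K^{(l)}])$ by $1$ on the support of the cutoffs, and then invokes the simplex estimate \eqref{eq:intofwKminus2}.

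However, the closing paragraph of your proposal misidentifies the mechanism. You cite the Dirichlet integral $\int_{\Sigma[K^{(j)}]\leq 1}dK^{(j)}/|K^{(j)}|=(8\pi)^j/(2j)!$, but a Cauchy--Schwarz split built on this weight leaves the complementary factor $\bigl(\int dK^{(j)}\,\|\chi\,a(K^{(j)})\eta\|^2\bigr)^{1/2}$ unweighted, and this is exactly the unbounded particle-number quantity you yourself flag as the obstacle. The $H_f$-cutoff alone does not tame it: by the pull-through formula $\chi(H_f+\Sigma[K^{(j)}]\leq 1)\,a(K^{(j)})\,P_{\rm red}=a(K^{(j)})\,P_{\rm red}$, so the cutoff is already absorbed when acting on $\eta\in\mathcal{H}_{\rm red}$, and $\int dK^{(j)}\|a(K^{(j)})\eta\|^2=\langle\eta,\,N(N-1)\cdots(N-j+1)\,\eta\rangle$ is unbounded on $\mathcal{H}_{\rm red}$, where the particle number is unrestricted even though $H_f\leq 1$. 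What actually closes the estimate is the split $|K^{(j)}|^{-1/2}=|K^{(j)}|^{-1}\cdot|K^{(j)}|^{1/2}$ \emph{together with} insertion of $\prod_l(H_f+\Sigma[K^{(l)}])^{\pm 1/2}$: the first Cauchy--Schwarz factor is then $\bigl(\int_{\underline{S}_j}dK^{(j)}/|K^{(j)}|^2\bigr)^{1/2}=(8\pi)^{j/2}/\sqrt{j!}$, which is precisely \eqref{eq:intofwKminus2}, and the second factor is controlled by the weighted Fock identity \eqref{eq:trivialA}; the cutoff $\chi$ enters only to bound $\prod_l(H_f+\Sigma[K^{(l)}])^{1/2}\leq 1$ on its support. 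It is the $\Sigma[K^{(l)}]$ weights, not the cutoff by itself, that kill the particle-number divergence. So your claimed one-sided estimate is correct and does follow from Lemma \ref{kernelopestimate} as you say you would invoke it --- but via \eqref{eq:trivialA} and the $|K|^{-2}$ simplex integral, not the $|K|^{-1}$ one.
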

The proof follows using Lemma \ref{kernelopestimate}
and the estimate
\begin{equation} \label{eq:intofwKminus2}
 \int_{\underline{S}_{m,n}} \frac{d K^{(m,n)}}{|K^{(m,n)}|^{2}} \leq \frac{(8 \pi)^{m+n}}{{n! m!}} ,
\end{equation}
where $\underline{S}_{m,n} := \{ (K^{(m)},\widetilde{K}^{(n)}) \in \underline{B}_1^{m+n}
 \ | \Sigma[K^{(m)}] \leq 1 , \Sigma[\widetilde{K}^{(n)}] \leq 1 \}$.
The renormalization procedure will involve kernels which lie in the following Banach spaces.
We shall identify the space $L^\infty(\underline{B}_1^{m+n}; C[0,1])$ with a subspace of $L^\infty([0,1]\times \underline{B}_1^{m+n})$ by
setting $$w_{m,n}(r,K^{(m,n)}) = w_{m,n}(K^{(m,n)})(r)$$ for $w_{m,n} \in L^\infty(\underline{B}_1^{m+n}; C[0,1])$. For example in  (i) and (ii)
of  Definition \ref{def:wgartenhaag} we use this identification.
The norm in $L^\infty(\underline{B}_1^{m+n}; C[0,1])$
is given by
$$
\| w_{m,n} \|_{\underline{\infty}} := \esssup_{K^{(m,n)} \in \underline{B}_1^{m+n}} {\rm sup}_{r \geq 0}| w_{m,n}(K^{(m,n)})(r)| .
$$
We note that for $w \in L^\infty(\underline{B}_1^{m+n}; C[0,1])$ we have $ \| w \|_\infty \leq \| w \|_{\underline{\infty}}$.
Conditions (i) and (ii) of the
following definition are needed for the injectivity property stated in Theorem \ref{thm:injective}, below.

\begin{definition} \label{def:wgartenhaag}
We define $\WW_{m,n}^\#$ to be the Banach space consisting of functions $w_{m,n} \in
L^\infty(\underline{B}_1^{m+n};C^1[0,1])$ satisfying the following properties:
\begin{itemize}
\item[(i)] $ w_{m,n} (1 - \chi_{\underline{Q}_{m,n}} ) = 0$, for $m + n \geq 1$,
\item[(ii)] $w_{m,n}(\cdot,K^{(m)}, \widetilde{K}^{(n)})$ is totally
symmetric in the variables $K^{(m)}$ and $\widetilde{K}^{(n)}$
\item[(iii)] the following norm is finite
$$
\| w_{m,n} \|^\# := \| w_{m,n} \|_{\underline{\infty}} + \| \partial_r w_{m,n} \|_{\underline{\infty}} .
$$
\end{itemize}
Hence for almost all $K^{(m,n)} \in \underline{B}_1^{m+n}$ we have $w_{m,n}(\cdot,K^{(m,n)}) \in C^1[0,1]$, where
the derivative is denoted by $\partial_r w_{m,n}$.
For $0<\xi < 1$, we define the Banach space
$$
 \mathcal{W}^\#_{\xi} := \bigoplus_{(m,n) \in \N_0^2 } \mathcal{W}_{m,n}^\# \
$$
to consist of all sequences $w =( w_{m,n})_{m,n \in \N_0}$ satisfying
$$
\| w \|_\xi^\# := \sum_{(m,n)\in \N_0^2} \xi^{-(m+n)} \| w_{m,n}\|^\# < \infty .
$$
\end{definition}

\begin{remark}
{\em
We shall also use the norm $\| w_{m,n} \|^\#$ for any integral kernel $w_{m,n} \in L^\infty( \underline{B}_1^{m+n}; C^1[0,1])$.
 Note that $\| w_{m,n}^{({\rm sym})} \|^\# \leq \| w_{m,n} \|^\#$.}
\end{remark}

Given $w \in \mathcal{W}_\xi^\#$, we
write $w_{\geq r}$ for the vector in $\mathcal{W}_\xi^\#$ given by
$$
(w_{\geq r})_{m + n} = \left\{ \begin{array}{ll} w_{m,n} & , \quad {\rm if} \ m+n \geq r \\ 0 & , \quad {\rm otherwise} . \end{array} \right.
$$
We will use the following balls to define the renormalization transformation
\begin{align*}
\mathcal{B}^\#(\alpha,\beta,\gamma) := \left\{ w \in \mathcal{W}_\xi^\# \left| \| \partial_r w_{0,0} - 1 \|_\infty \leq \alpha , \
|w_{0,0}(0) | \leq \beta
, \ \| w_{\geq 1} \|_{\xi}^\# \leq \gamma \right. \right\} .
\end{align*}
For $w \in \mathcal{W}^\#_{\xi}$, it is easy to see using \eqref{eq:operatornormestimate1} that
$
H(w) := \sum_{m,n} H_{m,n}(w)
$
converges in operator norm with bounds
\begin{align} \label{eq:opestimatgeq12}
& \| H(w) \|_{} \leq \| w\|_\xi^\# , \\
 \label{eq:opestimatgeq123}
& \| H(w_{\geq r} ) \|_{} \leq \xi^r \| w_{\geq r} \|_\xi^\# .
\end{align}
We shall use the notation
$$
W[w] := \sum_{m+n \geq 1} H_{m,n}(w) .
$$
We will use the following theorem, which is a straightforward generalization of a theorem proven in \cite{BCFS03}. A proof
can also be found in \cite{HH10}.

\begin{theorem} \label{thm:injective} The map $H : \WW_\xi^\# \to \mathcal{B}(\HH_{\rm red})$ is injective and bounded.
\end{theorem}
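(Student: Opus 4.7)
Boundedness is immediate from the estimate \eqref{eq:opestimatgeq12}, itself a direct consequence of summing the per-kernel bound \eqref{eq:operatornormestimate1} against the geometric weights $\xi^{-(m+n)}$ appearing in $\|\cdot\|_\xi^\#$; this gives $\|H(w)\|\leq \|w\|_\xi^\#$.

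For injectivity, suppose $H(w)=0$; the task is to show $w_{m,n}=0$ for all $m,n\geq 0$. The plan is to recover each Wick kernel $w_{m,n}$ from distribution-valued matrix elements of $H(w)$ between smeared multi-photon test states, exploiting the Wick-ordered structure of the operator $H(w)$ and the symmetry condition (ii) of Definition \ref{def:wgartenhaag}. Fix $m,n,p\in\N_0$ and choose test functions $f\in C_c^\infty(\underline{B}_1^m)$, $g\in C_c^\infty(\underline{B}_1^n)$, $h\in C_c^\infty(\underline{B}_1^p)$ with pairwise disjoint supports in $\underline{B}_1$ and of sufficiently small total energy that the smeared states $\Phi := a^*(f)a^*(h)\Omega$ and $\Psi := a^*(g)a^*(h)\Omega$ lie in $\mathcal{H}_{\rm red}$. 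Expand
\begin{equation*}
0 \;=\; \langle \Phi, H(w)\Psi\rangle \;=\; \sum_{m',n'}\langle \Phi, H_{m',n'}(w_{m',n'})\Psi\rangle
\end{equation*}
using Wick's theorem for the canonical commutation relations. Photon-number conservation restricts the sum to $m'-n'=m-n$; for each such pair, the matrix element decomposes into contraction diagrams. The disjoint-support hypothesis on $f,g,h$ kills every diagram containing an $f$--$g$, $f$--$h$, or $g$--$h$ cross-contraction, leaving only the diagrams with $(m',n') = (m+j,n+j)$ for some $0\leq j\leq p$.

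The $j=0$ diagram involves no cross-contractions whatsoever: the $n$ annihilators in $H_{m,n}$ pair with the $n$ creators of $g$, the $m$ creators of $H_{m,n}$ pair with the $m$ annihilators of $f$, and the $p$ spectator $h$'s are contracted among themselves. Using the total symmetry of $w_{m,n}$ from Definition \ref{def:wgartenhaag}(ii) to collapse the permutation sum, this $j=0$ contribution equals
\begin{equation*}
\frac{m!\,n!\,p!}{(8\pi)^{(m+n)/2}} \int \overline{f(K^{(m)})}\, g(\widetilde K^{(n)})\, |h(L^{(p)})|^2\, \frac{w_{m,n}(\Sigma[L^{(p)}], K^{(m,n)})}{|K^{(m,n)}|^{1/2}}\, dK^{(m,n)}\, dL^{(p)}.
\end{equation*}
Each $j\geq 1$ contribution instead involves $w_{m+j,n+j}$ with $2j$ of its arguments identified with spectator momenta. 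These higher-$j$ terms can be separated from the target $j=0$ term by choosing $h$ to concentrate near momenta of prescribed total energy $r$ (so that the $j\geq 1$ terms scale with a different power of the localization parameter than the $j=0$ one), or alternatively by descending induction on $p$ at fixed $(m,n)$ using the bound \eqref{eq:operatornormestimate1} to control the norms of $w_{m+j,n+j}$. Once the $j=0$ piece is isolated, letting $f$ and $g$ vary through dense families of disjoint-support test functions forces $w_{m,n}(r,K^{(m,n)}) = 0$ for a.e.\ $K^{(m,n)}\in\underline{B}_1^{m+n}$, and the continuity $w_{m,n}(\cdot, K^{(m,n)})\in C^1[0,1]$ guaranteed by Definition \ref{def:wgartenhaag} extends this to every $r\in[0,1]$.

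The main obstacle is the combinatorial bookkeeping of the Wick contraction diagrams and the clean separation of the $j=0$ extraction from the $j\geq 1$ contamination: one must check that the symmetry factor for the $j=0$ diagram is indeed $m!\,n!\,p!$ after using (ii) of Definition \ref{def:wgartenhaag}, and that the higher-$j$ terms genuinely decouple rather than conspire to cancel the target. This is where the Wick-ordered form of $H(w)$, the disjoint-support trick, and the total symmetry of the kernels enter decisively: without Wick ordering the matrix-element expansion would mix kernels at all orders and no such recovery would be possible.
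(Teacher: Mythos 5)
Your strategy --- recover $w_{m,n}$ from smeared multi-photon matrix elements, with pairwise-disjoint test-function supports and photon-number counting reducing the contributing pairs to $(m',n')=(m+j,n+j)$, $0\le j\le p$ --- is the standard one in the cited references (the paper itself only refers to \cite{BCFS03} and \cite{HH10} for this theorem). Boundedness via \eqref{eq:operatornormestimate1} is correct, and the constraints $m'\ge m$, $n'\ge n$ do follow as you say from the disjoint-support hypothesis.

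The one place your writeup is not yet a proof is isolating the $j=0$ diagram from the $j\ge 1$ contamination. Your first suggestion, concentration of $h$, works: take a single spectator $h_\epsilon$ concentrated near some $l_0\ne 0$ with $\|h_\epsilon\|_{L^2}=1$ and shrinking support; every $j\ge1$ diagram carries $\overline{h_\epsilon}\otimes h_\epsilon$ against two \emph{independent} momentum arguments of $w_{m+j,n+j}$ and is therefore $O\bigl((\int |h_\epsilon|)^2\bigr)\to 0$, while the $j=0$ diagram carries $|h_\epsilon|^2$ in a single variable and converges to $w_{m,n}(|l_0|,\cdot)$ tested against $\overline{f}\,g\,/|K^{(m,n)}|^{1/2}$; the condition $l_0\neq0$ keeps $|L|^{-1/2}|L'|^{-1/2}$ bounded on $\supp h_\epsilon$. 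You should write this estimate out explicitly rather than assert it. Your second suggestion, ``descending induction on $p$ at fixed $(m,n)$ using \eqref{eq:operatornormestimate1}'', is not an argument as stated: \eqref{eq:operatornormestimate1} provides no smallness of $\|w_{m+j,n+j}\|_\infty$, and there is no base case for a downward induction in $p$, so the $j\ge1$ terms are not controlled this way.

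There is in fact a cleaner route that eliminates the scaling estimate entirely. First take $p=0$: disjoint supports of $f,g$ together with photon-number counting force $(m',n')=(m,n)$ exactly, so $\langle a^*(f)\Omega,H(w)a^*(g)\Omega\rangle=0$ yields $w_{m,n}(0,\cdot)=0$ a.e.\ for \emph{every} $(m,n)$. Then take $p=1$: the only contamination is $j=1$, and since no spectators remain it involves precisely $w_{m+1,n+1}(0,\cdot)$, already known to vanish; hence the $j=0$ term vanishes on its own and $w_{m,n}(|l_0|,\cdot)=0$ a.e.\ for each $|l_0|\in(0,1)$. Continuity of $r\mapsto w_{m,n}(r,K^{(m,n)})$, which you already invoke, then gives all $r\in[0,1]$. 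Two minor points: the combinatorial factor $m!\,n!\,p!$ in your displayed $j=0$ formula depends on your normalization of $a^*$ for multi-argument test functions and should be rechecked (only its nonvanishing is used); and the $(r,K^{(m,n)})$ one can probe are exactly those in $\underline{Q}_{m,n}$, which suffices because of Definition~\ref{def:wgartenhaag}(i).
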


\begin{definition}
Let $\WW_\xi$ denote the Banach space consisting of strongly analytic functions on $D_{1/2}$ with values
in $\WW_\xi^\#$ and norm given by
$$
\| w  \|_\xi := \sup_{z \in D_{1/2}} \| w(z) \|_\xi^\# .
$$
\end{definition}
For $w \in \WW_\xi$ we will use the notation $w_{m,n}(z, \cdot) := (w_{m,n}(z))(\cdot)$.
We extend the definition of $H(\cdot)$ to $\WW_\xi$ in the natural way: for $w \in \WW_\xi$, we set
$$
\left( H(w) \right) (z) := H(w(z))
$$
and likewise for $H_{m,n}(\cdot)$ and $W[\cdot]$.
We say that a kernel $w \in \WW_\xi$ is symmetric if $w_{m,n}(\overline{z}) = \overline{w_{n,m}(z)}$  for all $z \in D_{1/2}$. Note that because of Theorem
\ref{thm:injective} we have the following lemma.
\begin{lemma} \label{lem:symmetry} Let $ w \in \WW_\xi$. Then
 $w$ is symmetric if and only if $H(w(\overline{z}))= H(w(z))^*$ for all $z \in D_{1/2}$.
\end{lemma}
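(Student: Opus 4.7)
The plan is to reduce the equivalence to the injectivity statement of Theorem~\ref{thm:injective} after a pointwise computation of the adjoint of each summand $H_{m,n}(w_{m,n}(z))$.

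First I would introduce, for any $v \in L^\infty(\underline{Q}_{m,n})$, the ``reflected conjugate'' kernel
$$
v^{\star}(r, \widetilde{K}^{(n)}, K^{(m)}) := \overline{v(r, K^{(m)}, \widetilde{K}^{(n)})} \in L^\infty(\underline{Q}_{n,m}),
$$
and verify the identity $H_{m,n}(v)^* = H_{n,m}(v^{\star})$. Formally this is immediate: $P_{\rm red}$ is self-adjoint, the measure $d\mu$ and the weight $|K^{(m,n)}|^{-1/2}$ are real, $(a^*(K))^* = a(K)$ as operator-valued distributions, and the spectral calculus of the self-adjoint operator $H_f$ gives $(f(H_f))^* = \overline{f}(H_f)$. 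A rigorous justification would proceed by interpreting both sides as the bounded operator associated with a densely defined sesquilinear form in the style of Appendix~A, controlling everything in operator norm via Lemma~\ref{kernelopestimate}.

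Second, since for fixed $z \in D_{1/2}$ the series $H(w(z)) = \sum_{m+n \geq 0} H_{m,n}(w_{m,n}(z))$ converges in operator norm by \eqref{eq:opestimatgeq12}, the adjoint may be taken term-by-term. Substituting the identity from the first step and relabelling the summation indices $(m,n) \leftrightarrow (n,m)$ gives
$$
H(w(z))^* \;=\; \sum_{m+n \geq 0} H_{m,n}\bigl( w_{n,m}(z)^{\star} \bigr).
$$
Comparing with $H(w(\overline{z})) = \sum_{m+n \geq 0} H_{m,n}(w_{m,n}(\overline{z}))$, the identity $H(w(\overline{z})) = H(w(z))^*$ holds if and only if $\{w_{n,m}(z)^{\star}\}$ and $\{w_{m,n}(\overline{z})\}$ agree as elements of $\mathcal{W}_\xi^\#$.

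The forward direction is then immediate from the definition of symmetry, where the paper's shorthand $\overline{w_{n,m}(z)}$ is to be identified with $w_{n,m}(z)^{\star}$ by swapping the two groups of variables (both kernels are already totally symmetric separately in each group by Definition~\ref{def:wgartenhaag}(ii), so this identification is well-defined). For the converse, I would invoke Theorem~\ref{thm:injective}: since $H : \mathcal{W}_\xi^\# \to \mathcal{B}(\HH_{\rm red})$ is injective, equality of the two operator-valued sums at each $z$ forces equality of the underlying sequences of kernels, which is exactly the symmetry condition. The main obstacle will be the first step, namely promoting the formal adjoint manipulation to a rigorous statement at the level of the sesquilinear forms of Appendix~A; once that is in place, the rest is bookkeeping plus an application of injectivity.
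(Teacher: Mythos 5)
Your proposal is correct and takes essentially the same approach as the paper: the paper offers no written proof beyond the remark that the lemma follows from Theorem~\ref{thm:injective}, and your argument — computing $H_{m,n}(v)^* = H_{n,m}(v^\star)$, summing term by term using the norm-convergence guaranteed by \eqref{eq:opestimatgeq12}, and then invoking injectivity for the converse — supplies exactly the bookkeeping the paper leaves implicit. Your reading of $\overline{w_{n,m}(z)}$ as the conjugate with the creation and annihilation groups swapped is also the correct interpretation of the paper's shorthand, since without the swap the claimed operator identity would fail.
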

The renormalization transformation will be defined on the following balls in $\mathcal{W}_\xi$
\begin{eqnarray*}
\lefteqn{ \mathcal{B}(\alpha,\beta,\gamma) } \\
&& := \left\{ w \in \mathcal{W}_\xi \left| \sup_{z \in D_{1/2}} \| \partial_r w_{0,0}(z) - 1 \|_\infty \leq \alpha , \,
\sup_{z \in D_{1/2}} | w_{0,0}(z,0) + z | \leq \beta
, \, \| w_{\geq 1} \|_{\xi} \leq \gamma \right. \right\} .
\end{eqnarray*}
We define on the space of kernels $\WW_{m,n}^\#$ a
natural representation of $SO(3)$, $\mathcal{U}_\WW$, which by Theorem \ref{thm:injective}
is uniquely determined by
\begin{equation} \label{eq:rinvkerop}
H (\mathcal{U}_\WW(R) w_{m,n}) = \mathcal{U}_\FF(R) H(w_{m,n}) \mathcal{U}_\FF^*(R) , \quad \forall R \in SO(3) ,
\end{equation}
and it is given by $\mathcal{U}_\WW(R) w_{0,0}(r) = w_{0,0}(r)$ and for $m+n \geq 1$ by
\begin{eqnarray}
\lefteqn{
\left( \mathcal{U}_\WW(R)w_{m,n} \right)(r , k_1,\lambda_1,\ldots,\widetilde{k}_n,\widetilde{\lambda}_n) } \label{eq:defofronkernels} \\
&&= \sum_{(\lambda_1',..., \widetilde{\lambda}_n') \in \Z_2^{m+n}}
     D_{\lambda_1 \lambda_1'}(R,k_1) \cdots D_{\widetilde{\lambda}_n \widetilde{\lambda}_n'}(R,\widetilde{k}_n)
     w_{m,n}(r , R^{-1} k_1,\lambda_1',\ldots ,R^{-1} \widetilde{k}_n, \widetilde{\lambda}_n') . \nonumber
\end{eqnarray}
  That \eqref{eq:defofronkernels} implies \eqref{eq:rinvkerop} can be seen from \eqref{eq:repofcreation}.
 The representation on $\WW_{m,n}^\#$ yields a natural representation
on $\WW_\xi^\#$, which is given  by $(\mathcal{U}_\WW(R) w )_{m,n} = \mathcal{U}_\WW(R) w_{m,n}$ for all $R \in SO(3)$.
We say that a kernel $w_{m,n} \in \WW_{m,n}^\# $ is rotation invariant if $\mathcal{U}_\WW(R) w_{m,n} = w_{m,n}$ for all $R \in SO(3)$
and we say
a kernel $w \in \WW_{\xi}^\# $ is rotation invariant if each component is rotation invariant.

\begin{lemma} \label{lem:wHinvequiv} (i)
Let $w_{m,n} \in \WW_{m,n}^\#$. Then $H(w_{m,n})$ is rotation invariant if and only if $w_{m,n}$ is rotation invariant.
  Let $w \in \WW_{\xi}^\#$. Then $H(w)$ is rotation invariant if and only if $w$ is rotation invariant.
(ii) If
$w_{m,n} \in \WW_{m,n}^\#$ with $m+n=1$  is rotation invariant, then $w_{m,n} = 0$.
\end{lemma}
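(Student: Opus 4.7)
For part (i), the key tool is the defining property \eqref{eq:rinvkerop} of $\UU_\WW$ together with the injectivity of $H$ from Theorem \ref{thm:injective}. In one direction, if $\UU_\WW(R) w_{m,n} = w_{m,n}$ for all $R$, then \eqref{eq:rinvkerop} immediately gives $\UU_\FF(R) H(w_{m,n}) \UU_\FF(R)^* = H(\UU_\WW(R) w_{m,n}) = H(w_{m,n})$. In the converse direction, rotation invariance of $H(w_{m,n})$ combined with \eqref{eq:rinvkerop} yields $H(\UU_\WW(R) w_{m,n}) = H(w_{m,n})$; embedding $w_{m,n}$ into $\WW_\xi^\#$ by placing it in the $(m,n)$-slot and zero elsewhere, injectivity of $H$ forces $\UU_\WW(R) w_{m,n} = w_{m,n}$. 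For a general $w \in \WW_\xi^\#$, since $\UU_\WW(R)$ acts componentwise and $H$ is linear and norm-continuous by \eqref{eq:opestimatgeq12}, exactly the same argument applied to the full kernel gives $H(\UU_\WW(R) w) = \UU_\FF(R) H(w) \UU_\FF(R)^*$, and injectivity again closes the converse.

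For part (ii), I would first apply (i) to conclude that rotation invariance of $H(w_{m,n})$ implies $\UU_\WW(R) w_{m,n} = w_{m,n}$ for every $R \in SO(3)$. Specializing to $m+n=1$, comparing the explicit formula \eqref{eq:defofronkernels} with the one-particle representation \eqref{eq:repofh} reveals that for $(m,n)=(1,0)$ (the case $(0,1)$ being identical)
\begin{equation*}
(\UU_\WW(R) w_{1,0})(r, K) \;=\; \bigl(\UU_\hh(R)\, w_{1,0}(r, \cdot)\bigr)(K),
\end{equation*}
so invariance of $w_{1,0}$ is equivalent to $\UU_\hh(R)\, w_{1,0}(r, \cdot) = w_{1,0}(r, \cdot)$ for every $R$ and every $r \in [0,1]$. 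Since $w_{1,0}(r, \cdot)$ is bounded and supported in $\un{B}_1$, a set of finite measure, it lies in $\hh$, and the argument from the proof of Lemma \ref{lem:mainidea} shows that any $\UU_\hh$-invariant element of $\hh$ must vanish. Hence $w_{1,0}(r, \cdot) = 0$ for each $r$, which means $w_{1,0} = 0$ in $\WW_{1,0}^\#$.

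The main obstacle is essentially measure-theoretic bookkeeping: the identity $\UU_\WW(R) w_{m,n} = w_{m,n}$ initially holds only for a.e.\ $(r,K)$ and each $R$, but since $R$ and $r$ enter separately and $r \mapsto w_{1,0}(r, K)$ is $C^1$ for a.e.\ $K$, one can first fix a countable dense set $\mathcal{C} \subset SO(3)$ to obtain a common full-measure set in $K$, then invoke continuity in $r$ to extend to all $r$, and finally use continuity of the $D$-matrices together with density of $\mathcal{C}$ to upgrade to all $R$. This is precisely the pattern used in the proof of Lemma \ref{lem:mainidea}, which is why I expect that argument to transplant without difficulty.
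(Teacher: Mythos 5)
Your proposal is correct and follows essentially the same route as the paper: for (i), the equivalence \eqref{eq:rinvkerop} plus injectivity of $H$ from Theorem \ref{thm:injective}; for (ii), reduction to $\UU_\hh$-invariance of the one-variable slice $w_{1,0}(r,\cdot)\in\hh$ and then the vanishing argument of Lemma \ref{lem:mainidea}. The only cosmetic difference is that the paper passes through the operator $a^*(w_r)$ so it can invoke Lemma \ref{lem:mainidea} verbatim, whereas you skip that intermediate step and appeal directly to the $\UU_\hh$-invariance-implies-zero argument inside its proof; also, your opening appeal to part (i) is unnecessary, since the hypothesis of (ii) is already that $w_{m,n}$ (not $H(w_{m,n})$) is rotation invariant.
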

\begin{proof} (i).
The if part follows from \eqref{eq:rinvkerop}. The only if part follows from \eqref{eq:rinvkerop} and the injectivity
of the map $H(\cdot)$, see Theorem \ref{thm:injective}. (ii) Let $w_{1,0} \in \WW_{1,0}^\#$ be rotation invariant. Then $w_r$ defined
by $w_r(k,\lambda) := w_{1,0}(r,k,\lambda)$ is in $\hh$ for all $r \in [0,1]$.
By \eqref{eq:defofronkernels},   \eqref{eq:repofh}, and
\eqref{eq:repofcreation2} it follows that $a^*(w_r)$ is rotation invariant. By Lemma \ref{lem:mainidea},
$w_r = 0$. The proof of the corresponding statement for  $\WW_{0,1}^\#$ is analogous.
\end{proof}

To state that the contraction property of the renormalization transformation will need to introduce the balls of integral kernels
which are invariant under rotations
$$
\mathcal{B}_0(\alpha,\beta,\gamma) := \{ w \in \mathcal{B}(\alpha,\beta,\gamma) | \
w_{m,n}(z) \ {\rm is \ rotation \ invariant \ for \ all \ }  z \in D_{1/2} \ \} .
$$

To show the continuity of the ground state and the ground state energy as a function of the infrared cutoff we need to introduce a coarser norm
in $\WW_{m,n}^\#$. The supremum norm is to fine. To this end we introduce the Banach space $L^2_\omega(\underline{B}_1^{m+n};C[0,1])$  with norm
$$
\| w_{m,n} \|_{2} := \left[ \int_{\underline{B}_1^{m+n}} \frac{d  K^{(m,n)}}{(8\pi)^{m+n} |K^{(m,n)}|^2} \sup_{r \in [0,1]} | w_{m,n}(r,K^{(m,n)}) |^2 \right]^{1/2} .
$$
Observe that $L^\infty( \underline{B}_1^{m +n}; C[0,1]) \subset L^2_\omega(\underline{B}_1^{m+n};C[0,1])$ and that by \eqref{eq:intofwKminus2} we have
\begin{equation} \label{eq:infinity2ineq}
\| w_{m,n} \|_{2} \leq \frac{ \|w_{m,n} \|_{\infty} }{\sqrt{ n! m!}} , 
\end{equation}
for all
$w_{m,n} \in \WW_{m,n}^\#$.
We have the following lemma which is a consequence of  Lemma \ref{kernelopestimate}.
\begin{lemma} For $w_{m,n} \in L^2_{\omega}( \underline{B}_1^{m+n}; C[0,1])$\label{lem:operatornormestimates2}   we have
\beqn \label{eq:operatornormestimate12}
\|H_{m,n}(w_{m,n}) \|_{} \leq \| w_{m,n} \|_{2} \; .
\eeqn
\end{lemma}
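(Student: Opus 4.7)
The plan is to invoke Lemma \ref{kernelopestimate} from Appendix A in the same way it was used to prove Lemma \ref{lem:operatornormestimates}, but without taking a crude supremum over $K^{(m,n)}$ partway through the estimate. Starting from the pull-through representation in \eqref{eq:defintegralkernel}, the operator $H_{m,n}(w_{m,n})$ is already a Wick-ordered integral operator whose generalized kernel (with respect to the creation and annihilation variables) is
$$
\widetilde w_{m,n}(r, K^{(m,n)}) \;:=\; \frac{\chi_{\underline{Q}_{m,n}}(r, K^{(m,n)})}{(8\pi)^{(m+n)/2} \, |K^{(m,n)}|^{1/2}} \, w_{m,n}(r, K^{(m,n)}),
$$
after the factors $(8\pi)^{-(m+n)/2}$ from $d\mu$ and $|K^{(m,n)}|^{-1/2}$ have been absorbed into the kernel.

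Lemma \ref{kernelopestimate}, which is precisely the $L^2$-type operator norm bound for Wick-ordered integral operators on the reduced Fock space, gives a bound of the form
$$
\| H_{m,n}(w_{m,n}) \|^2 \;\leq\; \int_{\underline B_1^{m+n}} dK^{(m,n)} \, \prod_{i=1}^m |k_i|^{-1} \prod_{j=1}^n |\widetilde k_j|^{-1} \, \sup_{r \in [0,1]} |\widetilde w_{m,n}(r, K^{(m,n)})|^2.
$$
Substituting $\widetilde w_{m,n}$ above and using $\chi_{\underline{Q}_{m,n}} \leq 1$ produces exactly
$$
\| H_{m,n}(w_{m,n}) \|^2 \;\leq\; \int_{\underline B_1^{m+n}} \frac{dK^{(m,n)}}{(8\pi)^{m+n} |K^{(m,n)}|^2} \, \sup_{r \in [0,1]} |w_{m,n}(r, K^{(m,n)})|^2 \;=\; \|w_{m,n}\|_2^{\,2},
$$
which is the claimed bound. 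Taking square roots finishes the proof.

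Essentially all the analytic work is in Lemma \ref{kernelopestimate}, so the only real task here is bookkeeping of constants: match the $(8\pi)^{-(m+n)/2}$ in $d\mu$ and the $|K^{(m,n)}|^{-1/2}$ in the defining integrand with the two weights that appear on squaring in $\|\cdot\|_2^2$, namely $(8\pi)^{-(m+n)}$ and $|K^{(m,n)}|^{-2}$. The only thing to note is that, in contrast with the proof of Lemma \ref{lem:operatornormestimates}, one must \emph{not} first bound $|w_{m,n}(r,K^{(m,n)})|$ by $\|w_{m,n}\|_{\underline\infty}$ and then integrate — that would lose information and reintroduce the combinatorial $(n!m!)^{-1/2}$ factor. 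Retaining $\sup_r |w_{m,n}(r, \cdot)|$ inside the $K$-integral is what yields the sharper $L^2_\omega$ bound, and the inequality \eqref{eq:infinity2ineq} then exhibits Lemma \ref{lem:operatornormestimates} as a corollary.
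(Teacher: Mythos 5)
Your proof is correct and matches the paper's justification, which simply states that the lemma is a consequence of Lemma \ref{kernelopestimate}. One small point worth making explicit: the $\|\cdot\|_\sharp$ norm in Lemma \ref{kernelopestimate} carries the additional weights $\prod_{l=1}^m(r+\Sigma[K^{(l)}])\,\prod_{\widetilde l=1}^n(r+\Sigma[\widetilde K^{(\widetilde l)}])$ inside the $\sup_{r\geq 0}$, which your stated intermediate bound silently drops; this is justified because on the support of $\chi_{\underline{Q}_{m,n}}$ each such factor is bounded by $1$ (this is precisely what the two $P_{\rm red}$ cutoffs in \eqref{eq:defintegralkernel} buy you, and it also reduces $\sup_{r\geq 0}$ to $\sup_{r\in[0,1]}$), but that step deserves a sentence.
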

\begin{definition}
Let $S$ be topological space. We say that the mapping $w : S \to \WW_\xi^\#$ is componentwise $L^2$--continuous (c-continuous) if for all
$m,n \in \N_0$ the map $s \mapsto w_{m,n}(s)$ is a $L^2_{\omega}( \underline{B}_1^{m+n}; C[0,1])$--valued continuous function, that is
$$
\lim_{s \in S, s \to s_0} \left\| w(s_0)_{m,n} - w(s)_{m,n} \right\|_2 = 0
$$
for all $s_0 \in S$.
\end{definition}
The above notion of continuity for integral kernels, yields continuity of the associated operators with respect to the operator norm topology. This is
the content of the following lemma.
\begin{lemma} \label{lem:pcontop}
Let $w: S \to \WW_\xi^\#$ be c-continuous and uniformly bounded, that is \\ $\sup_{s \in S} \| w(s) \|^\#_\xi < \infty$. Then
$H(w(\cdot)) : S \to \mathcal{B}(\HH_{\rm red})$ is continuous, with respect to the operator norm topology.
\end{lemma}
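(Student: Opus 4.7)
The plan is to split the series defining $H(w(s))$ into a finite ``head'' part and a ``tail'' part, control the tail uniformly in $s$ using the uniform bound on $\|w(s)\|^\#_\xi$ (via the estimate \eqref{eq:opestimatgeq123}), and then control each of the finitely many head terms in operator norm by the c-continuity assumption via Lemma \ref{lem:operatornormestimates2}. This is a standard ``$\varepsilon/2$ plus $\varepsilon/2$'' argument that makes essential use of the fact that the $L^2_\omega$-norm (not just the $L^\infty$ norm) dominates the operator norm of $H_{m,n}(\cdot)$, which is precisely what Lemma \ref{lem:operatornormestimates2} provides.

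Fix $s_0 \in S$ and $\varepsilon > 0$, and set $C := \sup_{s \in S} \|w(s)\|^\#_\xi < \infty$. First I would observe that for any integer $N \geq 1$,
\[
H(w(s)) - H(w(s_0)) = \sum_{m+n < N} \bigl(H_{m,n}(w_{m,n}(s)) - H_{m,n}(w_{m,n}(s_0))\bigr) + H\bigl((w(s) - w(s_0))_{\geq N}\bigr).
\]
For the tail, applying \eqref{eq:opestimatgeq123} together with the elementary bound $\|(w(s) - w(s_0))_{\geq N}\|^\#_\xi \leq 2C$ yields
\[
\|H((w(s) - w(s_0))_{\geq N})\| \leq \xi^N \cdot 2C,
\]
which is independent of $s$. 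Since $0 < \xi < 1$, I may choose $N$ large enough that $2C \xi^N < \varepsilon/2$.

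With $N$ fixed, the head part is a finite sum of at most $N(N+1)/2$ terms, each of which I handle by Lemma \ref{lem:operatornormestimates2}:
\[
\|H_{m,n}(w_{m,n}(s)) - H_{m,n}(w_{m,n}(s_0))\| = \|H_{m,n}(w_{m,n}(s) - w_{m,n}(s_0))\| \leq \|w_{m,n}(s) - w_{m,n}(s_0)\|_2.
\]
By the c-continuity hypothesis, each of these norms tends to $0$ as $s \to s_0$. Hence there exists a neighborhood $U$ of $s_0$ in $S$ such that for $s \in U$ the entire head sum is bounded by $\varepsilon/2$. Combining with the tail estimate gives $\|H(w(s)) - H(w(s_0))\| < \varepsilon$ for $s \in U$, which is the desired continuity.

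The main (and really only) obstacle is to make sure that the tail is controlled uniformly in $s$, and this is exactly what the uniform bound $\sup_s \|w(s)\|^\#_\xi < \infty$ is there to guarantee; without it, c-continuity alone would not suffice to interchange the limit $s \to s_0$ with the infinite sum over $(m,n)$. The use of the $L^2_\omega$-norm rather than the $L^\infty$-norm for the head terms is not essential for this argument but is what makes the hypothesis of c-continuity (rather than uniform continuity) natural and sufficient.
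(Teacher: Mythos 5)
Your proof is correct and is exactly the ``simple argument'' the paper leaves implicit: the paper's proof consists of one sentence invoking Lemma \ref{lem:operatornormestimates2} for the term-by-term convergence of $H_{m,n}(w(s))$, followed by one sentence pointing to \eqref{eq:opestimatgeq123} and the uniform bound, and you have supplied precisely the head/tail decomposition with the $\varepsilon/2$ split that turns those two ingredients into a complete argument. The tail estimate $\|H((w(s)-w(s_0))_{\geq N})\| \leq 2C\xi^N$ and the use of c-continuity only on the finitely many head terms are exactly right.
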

\begin{proof}
From Lemma \ref{lem:operatornormestimates2}
it follows that $H_{m,n}(w(s)) \stackrel{\| \cdot \|_{}}{\longrightarrow} H_{m,n}(w(s_0))$ as $s$ tends to $s_0$.
The lemma now follows from a simple argument using the estimate \eqref{eq:opestimatgeq123}
 and the uniform bound on $w(\cdot)$.
\end{proof}

\section{Initial Feshbach Transformations}

\label{sec:ini}

In this section we shall assume that the assumptions of Hypothesis (H) hold. Without loss of generality, see Section \ref{sec:outline}, we
assume that the distance between the lowest eigenvalue of $H_{\rm at}$ and the rest of the spectrum is one, that is
\begin{equation} 
 \inf \left( \sigma( H_{\rm at}) \setminus \{ E_{\rm at} \} \right) - E_{\rm at} = 1 .
\end{equation}
Let $\chi_1$ and $\chib_1$ be two functions in $C^\infty(\R_+;[0,1])$ with $\chi_1^2 + \chib_1^2 = 1$,
$\chi_1 = 1$ on $[0,3/4)$, and $\supp \chi_1 \subset [0,1]$. For an explicit choice of $\chi_1$ and $\chib_1$ see for example
\cite{BCFS03}.
We use the abbreviation $\chi_1 = \chi_1(H_f)$ and $\chib_1 = \chib_1(H_f)$.
It should be clear from the context whether $\chi_1$ or $\chib_1$ denotes a function or an operator.
By $\varphi_{\rm at}$ we denote the normalized eigenstate of $H_{\rm at}$ with eigenvalue
 $E_{\rm at}$ and by $P_{\rm at}$ the eigen-projection of $H_{\rm at}$ corresponding to the
 eigenvalue $E_{\rm at}$.
 By Hypothesis (H) the range of
$P_{\rm at}$ is one dimensional. This allows us to identify the range of
$P_{\rm at} \otimes P_{\rm red}$ with $\HH_{\rm red}$, and we will do so.
We define $\chi^{(I)}(r) := P_{\rm at} \otimes \chi_1(r)$ and
$\chib^{(I)}(r) = \bar{P}_{\rm at} \otimes 1 + P_{\rm at} \otimes \chib_1(r)$,
with $\bar{P}_{\rm at} = 1 - P_{\rm at}$. We set
$\chi^{(I)} := \chi^{(I)}(H_f)$ and $\chib^{(I)} := \chib^{(I)}(H_f)$. It is
evident to see that ${\chi^{(I)}}^2 + {\chib^{(I)}}^2 = 1$. The next theorem is the main
theorem of this section. It states properties about the Feshbach map and the associated auxiliary operator, see Appendix C.

\begin{theorem} \label{thm:inimain1} Assume Hypothesis (H). For any $0 < \xi < 1$ and any positive numbers $\delta_1,\delta_2,\delta_3$ there exists a positive number
$g_0$ such that following is satisfied. For all
$(g,\beta,\sigma,z) \in D_{g_0} \times \R \times \R_+ \times D_{1/2}$
the pair of operators
$(H_{g,\beta,\sigma} - z - E_{\rm at}, H_0 - z - E_{\rm at} )$ is a Feshbach pair for $\chi^{(I)}$.
The operator valued function
\begin{equation} \label{eq:qdef111}
 Q_{\chi^{(I)}}(g,\beta,\sigma,z) := Q_{\chi^{(I)}}( H_{g,\beta,\sigma} - z - E_{\rm at}, H_0 - z -E_{\rm at} )
\end{equation}
defined  on $ D_{g_0} \times \R \times \R_+ \times D_{1/2}$ is bounded, analytic in
 $(g,z)$, and a continuous function of  $(\sigma,z)$.
There exists a unique kernel
$w^{(0)}(g,\beta,\sigma,z) \in \WW_\xi^\#$ such that
\begin{equation} \label{eq:inimainA}
H(w^{(0)}(g,\beta,\sigma,z)) \cong F_{\chi^{(I)}}( H_{g,\beta,\sigma} - z - E_{\rm at} , H_0 - z - E_{\rm at} ) \upharpoonright \ran P_{\rm at} \otimes P_{\rm red} .
\end{equation}
Moreover, $w^{(0)}$ satisfies the following properties.
\begin{itemize}
\item[(a)] We have
$w^{(0)}(g,\beta,\sigma) := w^{(0)}(g,\beta,\sigma, \cdot ) \in \mathcal{B}_0(\delta_1,\delta_2,\delta_3)$
for all $(g,\beta,\sigma) \in D_{g_0} \times \R \times \R_+$.
\item[(b)]
$w^{(0)}(g,\beta,\sigma)$ is a symmetric kernel for all $(g,\beta,\sigma) \in ( D_{g_0} \cap \R) \times \R \times \R_+$.
\item[(c)]
The function $(g,z) \mapsto w^{(0)}(g,\beta,\sigma,z)$ is a $\WW_\xi^\#$-valued analytic function on $D_{g_0} \times D_{1/2}$ for
all $(\beta,\sigma) \in \R \times \R_+$.
\item[(d)] The function $(\sigma,z) \mapsto w^{(0)}(g,\beta,\sigma,z) \in \WW_\xi^\#$
is a c-continuous function on $\R_+ \times D_{1/2}$ for all $(g,\beta) \in D_{g_0} \times \R$.
\end{itemize}
\end{theorem}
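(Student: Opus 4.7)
The plan is to verify the Feshbach pair condition, compute the auxiliary operator via a Neumann expansion, then pass to Wick-ordered kernels and read off the required analyticity, rotation invariance, and continuity properties.

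First I would establish that $T := H_0 - z - E_{\rm at}$ is boundedly invertible on $\ran \chib^{(I)}$ uniformly in $z \in D_{1/2}$: on $\ran(\bar P_{\rm at} \otimes 1)$ the spectral gap from Hypothesis (H), normalized as in Section \ref{sec:outline}, gives $H_0 - E_{\rm at} \geq 1$, while on $\ran(P_{\rm at} \otimes \chib_1(H_f))$ one has $H_0 - E_{\rm at} = H_f \geq 3/4$; together with $|z| \leq 1/2$ this yields $\|T^{-1} \chib^{(I)}\| \leq 4$. The interaction $V_I := T_{\beta,\sigma}(g) = H_{g,\beta,\sigma} - H_0$ is a polynomial in $g$ whose coefficients admit standard relative bounds $\|A_\sigma(\beta x_j)(H_f + 1)^{-1/2}\| \leq C$ and $\|{:}A_\sigma(\beta x_j)^2{:}(H_f + 1)^{-1}\| \leq C$ that are uniform in $(\beta,\sigma)$ because $e^{i\beta k \cdot x_j}$ is a unitary phase. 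Combining these, $\|\chib^{(I)} T^{-1} V_I \chib^{(I)}\| \leq C'|g|$, so for $g_0$ small enough one obtains the Feshbach pair property together with a convergent Neumann series $Q_{\chi^{(I)}}(g,\beta,\sigma,z) = \chib^{(I)} \sum_{k \geq 0} (-1)^k (T^{-1} \chib^{(I)} V_I)^k$. Uniform operator-norm convergence yields boundedness, joint analyticity in $(g,z)$, and norm continuity in $(\sigma,z)$, the latter because $\sigma \mapsto A_\sigma(\beta x_j)(H_f+1)^{-1/2}$ is strongly continuous with a uniform norm bound.

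Next I would construct $w^{(0)}$ by expanding $F_{\chi^{(I)}}(H_{g,\beta,\sigma} - z - E_{\rm at},\, T)$ using the above Neumann series and Wick-ordering each resulting monomial in creation and annihilation operators via the pull-through formula (Lemma \ref{lem:pullthrough}), commuting all annihilation operators to the right and absorbing the $H_f$-dependence into scalar resolvent functions. Each summand then has exactly the form \eqref{eq:defintegralkernel} once the factor $|K|^{-1/2}$ of \eqref{eq:defhmn11} is extracted from the half-powers $|k|^{-1/2}$ carried by the field operators; the residual kernel is bounded in $K$ and lies in $C^1[0,1]$ in $r$ since the scalar resolvents are smooth in the shifted field-energy variable on $[0,1]$. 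Summing over $(m,n)$ and symmetrizing yields $w^{(0)} \in \WW_\xi^\#$, with uniqueness guaranteed by Theorem \ref{thm:injective}. Claim (a) is then verified by reading off the $k=0$ term, which gives $w^{(0)}_{0,0}(g,\beta,\sigma,z)(r) = r - z$ exactly, so that $\partial_r w_{0,0} - 1$, $w_{0,0}(\cdot,0) + z$, and $w_{\geq 1}$ are all $O(|g|)$ in the $\|\cdot\|_\xi^\#$ norm by the estimates of the previous paragraph; choosing $g_0$ small produces the stated bounds by $\delta_1, \delta_2, \delta_3$. Rotation invariance follows because $H_{g,\beta,\sigma}$, $H_0$, $H_f$, $\chi^{(I)}$, and $\chib^{(I)}$ are rotation invariant ($P_{\rm at}$ commutes with $\mathcal{U}_{\rm at}(R)$ by non-degeneracy of $E_{\rm at}$ and rotation invariance of $H_{\rm at}$), whence the Feshbach Hamiltonian is rotation invariant and Lemma \ref{lem:wHinvequiv}(i) transfers the symmetry to $w^{(0)}$; this places $w^{(0)}$ in $\mathcal{B}_0(\delta_1,\delta_2,\delta_3)$. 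Claim (b) follows because for real $g$ the Hamiltonian is self-adjoint and $T$ is real, so the Feshbach map intertwines complex conjugation in $z$ with adjunction; Lemma \ref{lem:symmetry} then gives symmetry of the kernel. Claim (c) is immediate from $\WW_\xi^\#$-norm convergence of a termwise-analytic series in $(g,z)$.

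The main obstacle is claim (d): $c$-continuity in $\sigma \in \R_+$, including the infrared endpoint $\sigma = 0$, cannot be obtained in the $\|\cdot\|^\#$ norm since $\kappa_{\sigma',\Lambda} - \kappa_{\sigma,\Lambda}$ does not tend to $0$ in $L^\infty$ as $\sigma' \to \sigma$. The resolution is that in the weighted norm $\|\cdot\|_2$ the difference is controlled by the measure of the symmetric-difference shells $\{\sigma \wedge \sigma' \leq |k| \leq \sigma \vee \sigma'\}$ integrated against $|k|^{-2}$, which tends to $0$ by dominated convergence (using $\int_{|k|<1} |k|^{-2} d^3k < \infty$). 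Term by term in the Neumann series this yields $\|w^{(0)}_{m,n}(\sigma') - w^{(0)}_{m,n}(\sigma)\|_2 \to 0$, and the geometric tail bound $\xi^{-(m+n)}\|w_{\geq r}\|^\# \to 0$ derived in the first paragraph, uniform in $\sigma$, allows interchanging $\sum_{m,n}$ with the limit $\sigma' \to \sigma$. This is the only step whose verification is not routine Feshbach/Wick bookkeeping, and it depends crucially on Lemma \ref{lem:operatornormestimates2}, which certifies that the coarser $L^2_\omega$ norm still controls the relevant operator norms.
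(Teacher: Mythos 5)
Your proposal is correct and follows essentially the same route as the paper: establish the uniform resolvent bound on $\ran\chib^{(I)}$ and the relative bounds on the interaction (Theorem \ref{ini:thm1}), expand the Feshbach map in a Neumann series, Wick-order via the pull-through formula to read off the kernels $w^{(0)}_{m,n}$ (the paper's $V_{\umm,\upp,\unn,\uqq}$ expansion), and obtain parts (a)--(d) from the resulting uniform estimates together with Lemmas \ref{lem:wHinvequiv}, \ref{lem:symmetry}, \ref{lem:analytwI}, and \ref{lem:contwI}. You correctly identify the $L^2_\omega$-norm (Lemma \ref{lem:operatornormestimates2}) as the crucial device for claim (d), which is exactly what the paper uses in Lemma \ref{lem:contwI}.
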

The remaining part of this section is devoted to the proof of Theorem \ref{thm:inimain1}.
Throughout this section we assume that
\begin{equation} \label{eq:zzetarelation}
z = \zeta - E_{\rm at } \in D_{1/2} .
\end{equation}
To prove Theorem \eqref{thm:inimain1}, we write the interaction part of the Hamiltonian in terms
of integral kernels as follows,
$$
H_{g,\beta,\sigma} = H_{\rm at} + H_f + : W_{g, \beta,\sigma} : ,
$$
\begin{equation} \label{eq:sumofws}
W_{g,\beta,\sigma} := \sum_{m+n=1,2} W_{m,n}(g,\beta,\sigma) .
\end{equation}
where $W_{m,n}(g,\beta,\sigma) := \underline{H}_{m,n}(w_{m,n}^{(I)}(g,\beta,\sigma))$
with
\begin{align}
 \label{eq:defhlinemn}
& \underline{H}_{m,n}(w_{m,n}) := \int_{{(\underline{\R}^3)}^{m+n}} \frac{ dK^{(m,n)}}{|K^{(m,n)}|^{1/2}}
a^*(K^{(m)}) w_{m,n}(K^{(m,n)}) a(\widetilde{K}^{(n)}) ,
\end{align}
and
\begin{align}
&w^{(I)}_{1,0}(g,\beta,\sigma)( K) := 2 g \sum_{j=1}^N p_j \cdot \varepsilon(k,\lambda) \frac{\kappa_{\sigma,\Lambda}(k)e^{ i \beta k \cdot x_j }}{\sqrt{2}} , \label{defofwI} \\
&w^{(I)}_{1,1}(g,\beta,\sigma)(K,\widetilde{K}) := g^2 \sum_{j=1}^N \varepsilon(k,\lambda) \cdot \varepsilon(\widetilde{k},\widetilde{\lambda}) \frac{\kappa_{\sigma,\Lambda}(k)e^{- i \beta k \cdot x_j }}{\sqrt{2}} \frac{\kappa_{\sigma,\Lambda}(\widetilde{k})e^{ i \beta \widetilde{k}\cdot x_j }}{\sqrt{2}} ,
\nonumber \\
&w^{(I)}_{2,0}(g,\beta,\sigma)( K_1, K_2 ) := g^2 \sum_{j=1}^N \varepsilon(k_1,\lambda_1) \cdot \varepsilon({k}_2,{\lambda}_2) \frac{\kappa_{\sigma,\Lambda}(k_1)e^{ - i \beta k_1 \cdot x_j }}{\sqrt{2}} \frac{\kappa_{\sigma,\Lambda}(k_2)e^{ - i \beta k_2 \cdot x_j }}{\sqrt{2}} ,
\nonumber
\end{align}
$w^{(I)}_{0,1}(g,\beta,\sigma)(\widetilde{K}) := w^{(I)}_{0,1}(\overline{g},\beta,\sigma)(\widetilde{K})^*$, and $w^{(I)}_{0,2}(g,\beta,\sigma)(\widetilde{K}_1,\widetilde{K}_2) := \overline{w^{(I)}_{2,0}(\overline{g},\beta,\sigma)(\widetilde{K}_1,\widetilde{K}_2 )}$.
We note that \eqref{eq:defhlinemn} is understood in the sense of forms, c.f. Appendix A.
We set
\begin{align*}
w^{(I)}_{0,0}(z)(r) := H_{\rm at} - z + r .
\end{align*}
By $w^{(I)}$ we denote the vector consisting of the components $w^{(I)}_{m,n}$ with $m+n=0,1,2$.

The next theorem establishes the Feshbach property. 
To state it, we denote by ${P}_0$ the orthogonal projection onto the closure of ${\ran \chib^{(I)}}$.
We will use the convention that
$( H_0 - z)^{-1} \chib^{(I)}$
stands for $( H_0 - z \upharpoonright \ran \chib^{(I)} ) )^{-1} \chib^{(I)}$, and that
$( H_0 - z)^{-1} P_0$ stands for $( H_0 - z \upharpoonright \ran {P}_0 )^{-1} P_0 $.
The proof of the Feshbach property is based on the fact that
\begin{equation} \label{ini:eq1}
{\rm inf} \sigma ( H_0 \upharpoonright \ran {P_0} ) = E_{\rm at} + \sfrac{3}{4} ,
\end{equation}
which follows directly from the definition, and the
fact that the interaction part of the Hamiltonian is bounded with respect to the free Hamiltonian.
\begin{theorem} \label{ini:thm1} Let $| E_{\rm at} - \zeta | < \frac{1}{2}$. Then
\begin{equation} \label{ini:thm1:eq1}
\left\| ( ( H_0 - \zeta ) \upharpoonright \ran P_0 )^{-1} \right\| \leq 4 .
\end{equation}
There is a $C<\infty$ and $g_0 > 0$ such that for all $(\beta,\sigma) \in \R \times \R_+$ and $|g| < g_0$,
\begin{equation} \label{ini:thm1:eq2}
\left\| ( H_0 - \zeta)^{-1} \chib^{(I)} W_{g,\beta,\sigma} \right\| \leq C |g| , \quad \left\| W_{g, \beta,\sigma} ( H_0 - \zeta)^{-1} \chib^{(I)} \right\| \leq C |g| ,
\end{equation}
and $(H_{g,\beta,\sigma}-\zeta,H_0-\zeta)$ is a Feshbach pair for $\chi^{(I)}$. The function
$(g,\beta,\sigma, \zeta) \mapsto  ( H_0 - \zeta)^{-1} \chib^{(I)} W_{g,\beta,\sigma}$ on $\C \times \R \times \R_+ \times  D_{1/2}(E_{\rm at})$
is analytic in  $(g,\zeta)$ and continuous in
$(\sigma,\zeta)$.
\end{theorem}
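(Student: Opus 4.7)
The plan is to prove Theorem \ref{ini:thm1} in three stages: the spectral bound (\ref{ini:thm1:eq1}) via a decomposition of $\ran P_0$; the relative bounds (\ref{ini:thm1:eq2}) via Pauli--Fierz estimates combined with the pull-through formula; and finally the Feshbach pair property together with the analyticity and continuity claims, all of which follow from the first two stages.

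For (\ref{ini:thm1:eq1}), I note that $\chib^{(I)} = \bar{P}_{\rm at}\otimes 1 + P_{\rm at}\otimes \chib_1(H_f)$ commutes with $H_0 = H_{\rm at}+H_f$, so $\ran P_0 = \overline{\ran \chib^{(I)}}$ is $H_0$-invariant and decomposes as $\bar{P}_{\rm at}\HH_{\rm at}\otimes \FF \ \oplus\  P_{\rm at}\HH_{\rm at}\otimes \overline{\ran \chib_1(H_f)}$. On the first summand (\ref{eq:hatscale}) gives $H_{\rm at}\geq E_{\rm at}+1$, hence $H_0 \geq E_{\rm at}+1$. On the second, $\chib_1$ vanishes on $[0,3/4)$, so $H_f\geq 3/4$ and $H_0\geq E_{\rm at}+3/4$. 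Therefore $\sigma(H_0\upharpoonright \ran P_0) \subset [E_{\rm at}+3/4,\infty)$, and since $|\zeta-E_{\rm at}|<1/2$ the distance from $\zeta$ to this half-line is at least $1/4$, yielding the bound $4$ by the spectral theorem.

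For (\ref{ini:thm1:eq2}), since $\chib^{(I)}$ commutes with $H_0$ the operator $(H_0-\zeta)^{-1}\chib^{(I)} = \chib^{(I)}(H_0-\zeta)^{-1}P_0$ is bounded by $4$. I decompose $W_{g,\beta,\sigma}$ as in (\ref{eq:sumofws}): the terms are of the form $g\sum_j 2 p_j\cdot A_\sigma(\beta x_j)$ and $g^2\sum_j : A_\sigma(\beta x_j)^2 :$. Standard Pauli--Fierz estimates apply: transversality $k\cdot \varepsilon(k,\lambda)=0$ lets me commute $p_j$ past $A_\sigma(\beta x_j)$; the bounds $\|A_\sigma(\beta x)^{\#}\psi\|\leq c(\|\kappa_{0,\Lambda}/\sqrt{\omega}\|_2+\|\kappa_{0,\Lambda}/\omega\|_2)\|(H_f+1)^{1/2}\psi\|$ are uniform in $(\beta,x,\sigma)$; and Hypothesis (H)(ii) gives $\|p_j\psi\|\leq c_1\|H_{\rm at}\psi\|+c_2\|\psi\|$. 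Together these show $W_{g,\beta,\sigma}$ is $H_0$-bounded with relative bound $O(|g|)$ uniformly in $(\beta,\sigma)$. To convert this into the asserted operator norm bound, I apply the pull-through formula $(H_0-\zeta)^{-1}a^*(K)=a^*(K)(H_0+|k|-\zeta)^{-1}$ to each kernel in (\ref{eq:defhlinemn}), so that all unbounded factors sit between two resolvents; Lemma \ref{kernelopestimate} then controls the remaining $K$-integrals against the weights $|K|^{-1/2}$, which are square-integrable on $\underline{B}_1^{m+n}$. The companion bound $\|W_{g,\beta,\sigma}(H_0-\zeta)^{-1}\chib^{(I)}\|\leq C|g|$ is obtained by the same argument with $a^*$ and $a$ interchanged.

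The Feshbach pair property follows from the criterion of Appendix C: the smallness $\|(H_0-\zeta)^{-1}\chib^{(I)} W_{g,\beta,\sigma}\|\leq C|g|<1$ for $|g|<g_0$ yields, via Neumann series, the invertibility of $\chib^{(I)}(H_{g,\beta,\sigma}-\zeta)\chib^{(I)}$ on $\ran P_0$, which is the only nontrivial condition. Analyticity in $(g,\zeta)$ is immediate because $W_{g,\beta,\sigma}$ is a polynomial of degree two in $g$ and $(H_0-\zeta)^{-1}P_0$ is operator-norm analytic in $\zeta \in D_{1/2}(E_{\rm at})$. Continuity in $\sigma$ reduces, via the same kernel estimates, to $L^2$-convergence of $\kappa_{\sigma,\Lambda}/\sqrt{\omega}$ as $\sigma\to\sigma_0$, which is dominated convergence with uniform majorant $\kappa_{0,\Lambda}/\sqrt{\omega}$. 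The main technical obstacle is the pull-through manipulation in the second stage: the weight $|k|^{-1/2}$ in (\ref{eq:defhlinemn}) combined with the cutoff $\kappa_{\sigma,\Lambda}$ and Lemma \ref{kernelopestimate} is precisely what promotes the mere relative bound into an honest operator norm bound, and uniformity in $\beta$ comes for free because $\beta$ enters only through the phases $e^{\pm i\beta k\cdot x_j}$ of modulus one.
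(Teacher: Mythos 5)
Your proof takes essentially the same route as the paper. For \eqref{ini:thm1:eq1} you unpack the spectral fact $\inf\sigma(H_0\upharpoonright\ran P_0)=E_{\rm at}+3/4$ which the paper simply cites as \eqref{ini:eq1}; for \eqref{ini:thm1:eq2} the core tools (pull-through, Lemma \ref{kernelopestimate}, $H_{\rm at}$-boundedness of $p_j$ via Hypothesis (H)(ii), square-integrability of $\kappa_{\sigma,\Lambda}/\omega$, and the adjoint trick for the companion bound) are exactly those the paper uses, the only organizational difference being that the paper explicitly factors out $\frac{H_0-E_{\rm at}+2}{H_0-\zeta}P_0$ (bounded by 11) before estimating each $W_{m,n}$, whereas you describe the same sandwiching via pull-through directly. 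The Feshbach-pair step via Lemma \ref{fesh:thm2} and the analyticity/continuity from explicit $g$-polynomial structure and dominated convergence also match the paper. Two small imprecisions worth tidying: the integral in \eqref{eq:defhlinemn} ranges over $(\underline{\R}^3)^{m+n}$, with the ultraviolet cutoff supplied by $\kappa_{\sigma,\Lambda}$, so the square-integrability you invoke is that of $\kappa_{\sigma,\Lambda}/\omega$ near the origin rather than of $|K|^{-1/2}$ on $\underline{B}_1^{m+n}$; and the Feshbach condition to be checked (Lemma \ref{fesh:thm2}(b')–(c')) is invertibility of $T+\chib W\chib$ on $\ran\chib^{(I)}$ with $\|T^{-1}\chib W\chib\|<1$, which is what your smallness bound delivers, but is not literally $\chib^{(I)}(H_{g,\beta,\sigma}-\zeta)\chib^{(I)}$.
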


\begin{proof} Eq. \eqref{ini:thm1:eq1} follows directly from Eq. \eqref{ini:eq1}. We will only show the
first inequality of \eqref{ini:thm1:eq2}, since the second one will then follow from
$$
\| W_{g, \beta,\sigma} ( H_0 - \zeta)^{-1} \chib^{(I)} \| = \| ( H_0 - \overline{\zeta})^{-1} \chib^{(I)} W_{\overline{g}, \beta,\sigma} \| ,$$
where we used that the norm of an operator is equal to the norm of its adjoint.
 The Feshbach property will follow by Lemma \ref{fesh:thm2} as a consequence of \eqref{ini:thm1:eq1} and \eqref{ini:thm1:eq2}.
For $| E_{\rm at} - \zeta | < \frac{1}{2}$, we estimate
\begin{eqnarray}
\left\| ( H_0 - \zeta )^{-1} \chib^{(I)} W_{g, \beta,\sigma} \right\|
&\leq & \left\| ( H_0 - \zeta)^{-1} P_0 (H_0 - E_{\rm at} + 2 ) P_0 (H_0 - E_{\rm at} + 2 )^{-1} W_{g, \beta,\sigma} \right\| \nonumber \\
&\leq & \left\| \frac{H_0 - E_{\rm at} + 2 }{ H_0 - \zeta } P_0 \right\| \left\| ( H_0 - E_{\rm at} + 2 )^{-1} W_{g, \beta,\sigma} \right\| .
\label{eq:FestimateAA}
\end{eqnarray}
Using the spectral theorem we estimate the first factor in \eqref{eq:FestimateAA} by
\begin{equation} \label{eq:FestimateI}
\left\| \frac{H_0 - E_{\rm at} + 2 }{ H_0 - \zeta } P_0 \right\| \leq
\sup_{r \geq 0} \left| \frac{ \frac{3}{4} + 2 + r }{E_{\rm at} + \frac{3}{4} - \zeta + r } \right|
\leq \sup_{r \geq 0} \left| \frac{ 11 + 4 r}{1 + 4r } \right| \leq 11 .
\end{equation}
It remains to estimate the second factor in \eqref{eq:FestimateAA}. We insert \eqref{eq:sumofws}
and use the triangle
inequality,
\begin{equation} \label{eq:estingthesummands}
 \left\| ( H_0 - E_{\rm at} + 2 )^{-1} W_{g, \beta,\sigma} \right\| \leq \sum_{m+n=1,2}
\left\| ( H_0 - E_{\rm at} + 2 )^{-1} W_{m,n}(g, \beta,\sigma) \right\| .
\end{equation}
We estimate each summand occurring in the sum on the right hand side individually.
To estimate the summands with $m+n=2$ we first use the trivial bound
\begin{eqnarray} \label{eq:estonhfw}
\left\| ( H_0 - E_{\rm at} + 2 )^{-1} W_{m,n}({g, \beta,\sigma}) \right\| \leq \left\| ( H_f + 1 )^{-1} W_{m,n}({g, \beta,\sigma}) \right\| .
\end{eqnarray}
The right hand side of \eqref{eq:estonhfw} is estimated for $(m,n)=(0,2)$ as follows,
\begin{eqnarray} \lefteqn{
 \left\| ( H_f + 1 )^{-1} W_{0,2}({g, \beta,\sigma}) \right\|} \nonumber \\
 && \leq \frac{|g|^2 N}{2} \left[ \int_{({\underline{\R}^3})^2 } \frac{ d
\widetilde{K}^{(2)}}{|\widetilde{K}^{(2)}|^2} \left| \kappa_{\sigma,\Lambda}(\widetilde{k}_1) \right|^2
 \left| \kappa_{\sigma,\Lambda}(\widetilde{k}_2) \right|^2 \sup_{r \geq 0} \frac{ ( r + |\widetilde{k}_1| + |\widetilde{k}_2|)^2}{(r + 1)^2} \right]^{1/2}
\nonumber \\
 && \leq \frac{|g|^2 N}{2} \left[ 3 \| \kappa_{\sigma,\Lambda}/ \omega \|_\hh^4 + 6 \| \kappa_{\sigma,\Lambda}/ \omega \|_\hh^2
 \| \kappa_{\sigma,\Lambda} \|_\hh^2 \right]^{1/2} , \label{eq:someestimate}
\end{eqnarray}
where in the first inequality we used Lemma \ref{kernelopestimate} and
 in the last inequality we used the following estimate for $r \geq 0$,
$$\frac{ ( r + |\widetilde{k}_1| + |\widetilde{k}_2|)^2}{(r + 1)^{2}} \leq 3 ( 1 + |\widetilde{k}_1|^2 + |\widetilde{k}_2|^2) .
$$
To estimate the
right hand side of \eqref{eq:estonhfw} for $(m,n)=(2,0)$ we
use the  fact that the norm of an operator is equal to the norm of its adjoint, the pull-through formula, and a similar estimate
as used in \eqref{eq:someestimate},
$$
 \left\| ( H_f + 1 )^{-1} W_{2,0}({g, \beta,\sigma}) \right\| =
\left\| W_{0,2}({\overline{g}, \beta,\sigma}) ( H_f + 1 )^{-1} \right\| \leq {\rm r.h.s.} \ \eqref{eq:someestimate} .
$$
 To estimate the
right hand side of \eqref{eq:estonhfw} for $(m,n)=(1,1)$ we first use the pull-through formula and then
 Lemma \ref{kernelopestimate} to obtain
\begin{eqnarray}
\lefteqn{ \left\| ( H_f + 1 )^{-1} W_{1,1}({g, \beta,\sigma}) \right\|} \nonumber \\
&& \leq \frac{|g|^2 N}{2} \left[ \int_{{(\underline{\R}^3)}^2 } \frac{ d K^{(1,1)}}{|K^{(1,1)}|^2} \left| \kappa_{\sigma,\Lambda}({k}_1) \right|
 \left| \kappa_{\sigma,\Lambda}(\widetilde{k}_1) \right| \sup_{r \geq 0} \frac{ ( r + |{k}_1|)( r + |\widetilde{k}_1|)}{(r + 1)^2}\right]^{1/2} \nonumber \\
 && \leq \frac{|g|^2 N}{2} \left[ 2 \| \kappa_{\sigma,\Lambda}/ \omega \|_\hh^4 + 2 \| \kappa_{\sigma,\Lambda}/ \omega \|_\hh^2 \| \kappa_{\sigma,\Lambda} \|_\hh^2 \right]^{1/2} , \label{eq:usedforqcont1}
\end{eqnarray}
where in the last inequality we used the following estimate for $r \geq 0$,
$$
\frac{ ( r + |{k}_1|)( r + |\widetilde{k}_1|)}{(r + 1)^2} \leq 2 + |k_1|^2 + |\widetilde{k}_1|^2 .
$$
To estimate the summands with $m+n=1$ on the right hand side of \eqref{eq:estingthesummands}
we insert the trivial identity $1=(H_f + 1)^{1/2} (-\Delta + 1)^{1/2} (H_f + 1)^{-1/2} (-\Delta + 1)^{- 1/2}$ and obtain
the estimate
\begin{eqnarray*}
\lefteqn{ \left\| ( H_0 - E_{\rm at} + 2)^{-1} W_{m,n}({g, \beta,\sigma}) \right\|} \\
 && \leq \left\| \frac{ ( H_f + 1 )^{1/2} (H_{\rm at} - E_{\rm at} + 1 )^{1/2}}{ H_0 - E_{\rm at} + 2 } \right\|
  \left\| ( H_{\rm at} - E_{\rm at} + 1 )^{-1/2} (-\Delta + 1)^{1/2} \right\| \\
 && \times
  \left\| (-\Delta + 1 )^{-1/2} (H_f + 1 )^{-1/2} W_{m,n}({g, \beta,\sigma}) \right\| .
\end{eqnarray*}
The first factor on the right hand side is bounded by $1/2$, which follows from a trivial application of the spectral theorem.
The second factor on the right hand side is bounded, since $V$ is infinitesimally operator bounded with respect to $-\Delta$. The last factor
on the right hand side is estimated as follows. For $m+n=1$,
\begin{eqnarray}
\lefteqn{ \left\| (-\Delta + 1 )^{-1/2} (H_f + 1 )^{-1/2} W_{m,n}({g, \beta,\sigma}) \right\| } \nonumber \\
 &&\leq 2 |g| \sum_{j=1}^N \sum_{l=1}^3
\left\| \frac{ (p_j)_l }{ (-\Delta + 1 )^{1/2}} \right\| \left\| (H_f + 1 )^{-1/2} \left[ \delta_{m0} \underline{H}_{1,0}( \omega^{1/2}f_{(l,\beta x_j)}) +
  \delta_{n0} \underline{H}_{0,1}( \omega^{1/2}\overline{f_{(l,\beta x_j)}}) \right] \right\| \nonumber \\
&& \leq 6 N |g| \left( \| \kappa_{\sigma,\Lambda} /\omega \|_\hh^2 + \delta_{n0} \| \kappa_{\sigma,\Lambda} / \sqrt{\omega}\|_\hh^2 \right)^{1/2} , \label{eq:usedforqcont2}
\end{eqnarray}
where in the first inequality we used the triangle inequality and  \eqref{defofintkernela},
and in the second inequality we used the pull-through formula and Lemma \ref{kernelopestimate}.
Collecting estimates we obtain the desired bound on the second factor in \eqref{eq:FestimateAA}.
The statement about the analyticity and continuity follow from the explicit expression and the bounds
 in \eqref{eq:FestimateAA}--\eqref{eq:usedforqcont2}.
\end{proof}

As a consequence of  the first equation in \eqref{ini:thm1:eq2} it follows that the operator valued function
\eqref{eq:qdef111} is uniformly bounded for $g_0$ sufficiently small.
Theorem  \ref{ini:thm1} furthermore implies that \eqref{eq:qdef111} is continuous in $(\sigma,z)$ and analytic in $(g,z)$,
provided $g_0$ is sufficiently small.
Next we want to show that there exists a $w^{(0)}(g,\beta,\sigma,z) \in \WW_\xi^\#$ such that \eqref{eq:inimainA} holds.
Uniqueness will follow from Theorem \ref{thm:injective}.
In view of Theorem \ref{ini:thm1} we can define for $z = \zeta - E_{\rm at} \in D_{1/2}$ and $g$ sufficiently small
the Feshbach map and express it in terms of a Neumann series.
\begin{eqnarray*}
\lefteqn{ F_{\chi^{(I)}}( H_{g,\beta,\sigma} - \zeta, H_0 - \zeta) \upharpoonright X_{\rm at} \otimes \HH_{\rm red} } \\
&& = \left( T + \chi^{} W \chi^{} - \chi^{} W \chib^{} ( T + \chib^{} W_{}\chib^{} )^{-1}
\chib^{} W_{} \chi^{} \right) \upharpoonright X_{\rm at} \otimes \HH_{\rm red} \\
&& = \left( T^{} + \chi W^{} \chi - \chi W^{} \chib \sum_{n=0}^\infty \left( - {T^{}}^{-1} \chib W^{} \chib \right)^n
{T^{}}^{-1} \chib W^{} \chi \right) \upharpoonright X_{\rm at} \otimes \HH_{\rm red} \; ,
\end{eqnarray*}
where here we used the abbreviations
$T^{} = H_0 - \zeta$, $W^{} = W^{}_{g,\beta,\sigma}$, $\chi = \chi^{(I)}$, $\chib = \chib^{(I)}$.
We normal order above expression, using the pull-through formula. To this end we use the identity of
Theorem \ref{thm:wicktheorem}, see Appendix B.
Moreover we will use the definition
\begin{eqnarray} \label{eq:defofWW}
\underline{W}_{p,q}^{m,n}[w](K^{(m,n)}) &:=& \int_{{(\underline{\R}^3)}^{p+q}} \frac{d X^{(p,q)}}{|X^{(p,q)}|^{1/2}} a^*(X^{(p)}) w_{m+p,n+q}(K^{(m)}, X^{(p)}, \widetilde{K}^{(n)},
 \widetilde{X}^{(q)}) a(\widetilde{X}^{(q)}) . \nonumber
\end{eqnarray}
We obtain a sequence of integral kernels
$\widetilde{w}^{(0)}$, which are given as follows.
For $M+N \geq 1$,
\begin{eqnarray} \label{eq:defofwmnschlange}
\lefteqn{ \widetilde{w}^{(0)}_{M,N}(g,\beta,\sigma,z)(r , K^{(M,N)})} \label{initial:eq7} \\ &&=
( 8 \pi )^{\frac{M+N}{2}} \sum_{L=1}^\infty (-1)^{L+1}
\sum_{\substack{ (\umm,\upp,\unn,\uqq) \in \N_0^{4L}: \\ |\umm|=M, |\unn|=N, \\ 1 \leq m_l+p_l+q_l+n_l \leq 2 } } \prod_{l=1}^L \left\{
\binom{ m_l + p_l}{ p_l} \binom{ n_l + q_l}{ q_l }
\right\} \nonumber \\ && \times
V_{(\umm,\upp,\unn,\uqq)}[w^{I}(g,\beta,\sigma,\zeta)](r,K^{(M,N)}). \nonumber
\end{eqnarray}
 Furthermore,
\begin{align*}
\widetilde{w}^{(0)}_{0,0}(g,\beta,\sigma,z)(r) = - z + r + \sum_{L=2}^\infty (-1)^{L+1}
\sum_{(\upp,\uqq)\in \N_0^{2L}: p_l+q_l = 1, 2}
V_{(\uzz,\upp,\uzz,\uqq)}[w^{(I)}(g,\beta,\sigma,\zeta)](r) \; .
\end{align*}
Above we have used the definition
\begin{eqnarray} \label{eq:defofV}
\lefteqn{ V_{\umm,\upp,\unn,\uqq}[w](r, K^{(|\umm|,|\unn|)}) := } \\
&&
\left\langle \varphi_{\rm at} \otimes \Omega , F_0[w](H_f + r) \prod_{l=1}^L \left\{
\underline{W}_{p_l,q_l}^{m_l,n_l}[w](K^{(m_l,n_l)})
 F_l[w](H_f + r + \widetilde{r}_l ) \right\} \varphi_{\rm at} \otimes \Omega
\right\rangle \nonumber ,
\end{eqnarray}
where for $l=0,L$ we set $F_l[w](r) := \chi_1(r )$ , and for $l=1,...,L-1$ we set
\begin{eqnarray*}
F_l[w](r) := F[w](r) := \frac{ { \chib}^{(I)}(r )^2}{ w_{0,0}(r )} .
\end{eqnarray*}
Moreover, see \eqref{eq:rltildedef} for the definition of $\widetilde{r}_l$.
We define
$w^{(0)}(g,\beta,\sigma,z) := \left( \widetilde{w}^{(0)} \right)^{({\rm sym})}(g,\beta,\sigma,z) $.
So far we have determined $w^{(0)}$ on a formal
level. We have not yet shown that the involved series converge.
Our next goal is to show estimates \eqref{eq:feb2:1}, \eqref{eq:feb2:2}, and \eqref{eq:feb2:3},
below. These estimates will then imply that
 $w^{(0)}(g,\beta,\sigma,z) \in \WW^\#_\xi$ and they will be used to show part (a) of Theorem \ref{thm:inimain1}.
To this end we need an estimate on $V_{\umm,\upp,\unn,\uqq}[w^{(I)}]$, which is given in the following lemma.

\begin{lemma} \label{initial:thmE22} There exists finite constants $C_W$ and $C_F$
 such that with $C_W(g) := C_{W} |g|$ we have for $|\zeta-E_{\rm at} | < {1/2} $,
\begin{align} \label{initial:thmmain:eq2}
 \| V_{\umm,\upp,\unn,\uqq}[w^{(I)}(g,\beta,\sigma,\zeta)] \|^\# \leq (L + 1) C_F^{L+1} C_W(g)^L ,
\end{align}
for all $(g,\beta,\sigma) \in \C \times \R \times \R_+$.
\end{lemma}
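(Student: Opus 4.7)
The plan is to bound the scalar $V_{\umm,\upp,\unn,\uqq}[w^{(I)}](r,K^{(M,N)})$ defined in \eqref{eq:defofV} factor by factor, using operator-norm bounds on each of the $L+1$ resolvent-type factors $F_l[w^{(I)}]$ and on each of the $L$ vertex factors $\underline{W}^{m_l,n_l}_{p_l,q_l}[w^{(I)}]$. The Leibniz rule applied to $\partial_r$ will then account for the prefactor $(L+1)$.

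First I would establish uniform bounds $\|F_l[w^{(I)}](s)\|\le C_F$ and $\|\partial_s F_l[w^{(I)}](s)\|\le C_F$ for $l=1,\ldots,L-1$ and all $s\ge 0$. Writing out
$$F_l[w^{(I)}](s) = \frac{\chib^{(I)}(s)^{2}}{H_{\rm at}-z+s},$$
Hypothesis (H) together with the normalization $E_{\rm at,1}-E_{\rm at}=1$ guarantees that $H_{\rm at}-z+s$ is bounded below by a positive constant on $\ran\chib^{(I)}$ whenever $|z|<1/2$, so this operator is bounded uniformly in $s$. The $s$-derivative has the form $2\chib^{(I)}\partial_s\chib^{(I)}(H_{\rm at}-z+s)^{-1}-\chib^{(I)\,2}(H_{\rm at}-z+s)^{-2}$ and is bounded by the same mechanism. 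For the boundary factors $F_0$ and $F_L$, equal to $\chi_1$, both $\|\chi_1\|_\infty$ and $\|\chi_1'\|_\infty$ are finite constants. These constants can be absorbed into $C_F$.

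Second, I would bound $\|\underline{W}^{m_l,n_l}_{p_l,q_l}[w^{(I)}](K^{(m_l,n_l)})\|\le C_W|g|$ uniformly in $K^{(m_l,n_l)}\in\underline{B}_1^{m_l+n_l}$. Since the constraint $1\le m_l+p_l+q_l+n_l\le 2$ forces at least one boson at each vertex, the relevant kernel $w^{(I)}_{m_l+p_l,n_l+q_l}$ carries a factor $g$ or $g^2$, so after applying Lemma \ref{kernelopestimate} (following the same procedure used in Theorem \ref{ini:thm1} to handle the unbounded atomic factor $p_j$ in $w^{(I)}_{1,0}$ by pairing it with adjacent square-roots of $H_{\rm at}-z+s$), the vertex norm is dominated by a constant multiple of $|g|$ for $|g|<g_0$. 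This is the main technical step, and the principal obstacle is the $p_j$ in $w^{(I)}_{1,0}$: one has to split an adjacent resolvent $F_l$ as $(H_{\rm at}-z+s)^{-1/2}(H_{\rm at}-z+s)^{-1/2}$ and use Hypothesis (H)(ii) (infinitesimal $\Delta$-boundedness of $V$) so that $(p_j)(H_{\rm at}-z+s)^{-1/2}$ is bounded uniformly; for the endpoint vertices one uses $p_j\varphi_{\rm at}\in\HH_{\rm at}$ since $\varphi_{\rm at}\in D(H_{\rm at})$. The same reasoning yields the bound uniformly in $\beta\in\R$, $\sigma\ge 0$, and $r\in[0,1]$, because the $\beta x_j$-dependence appears only through the phase $e^{\pm i\beta k\cdot x_j}$ of modulus one and because $\kappa_{\sigma,\Lambda}$ is uniformly bounded.

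Third, I would combine everything. Inserting the operator bounds into \eqref{eq:defofV} and applying Cauchy--Schwarz with $\|\varphi_{\rm at}\otimes\Omega\|=1$ gives
$$\bigl|V_{\umm,\upp,\unn,\uqq}[w^{(I)}](r,K^{(M,N)})\bigr|\le C_F^{L+1}(C_W|g|)^L.$$
For the derivative, the product rule applied to the $L+1$ factors $F_l(H_f+r+\widetilde r_l)$ (only these depend on $r$, since $\underline W$ is independent of $r$) yields $L+1$ terms, each identical in structure to $V$ but with a single $F_l$ replaced by $\partial_s F_l$. By the first step, each such term still obeys the same bound $C_F^{L+1}(C_W|g|)^L$, giving $|\partial_r V|\le(L+1)C_F^{L+1}(C_W|g|)^L$. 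Since both bounds are uniform in $r\in[0,1]$ and in $K^{(M,N)}\in\underline B_1^{M+N}$, the definition of $\|\cdot\|^\#$ produces \eqref{initial:thmmain:eq2} with $C_W(g):=C_W|g|$ after possibly enlarging $C_F$ and $C_W$ to absorb the $(L+1)$-independent constants.
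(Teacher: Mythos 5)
Your overall strategy---insert weights between adjacent factors so that every block is a bounded operator, apply Cauchy--Schwarz, and use the Leibniz rule to account for the factor $(L+1)$---is exactly the paper's strategy. However, the proof as written has an internal inconsistency that conceals the one genuinely nontrivial lemma.

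You claim two separate naked operator-norm bounds, $\|F_l[w^{(I)}](s)\|\le C_F$ and $\|\underline{W}^{m_l,n_l}_{p_l,q_l}[w^{(I)}](K^{(m_l,n_l)})\|\le C_W|g|$, and then multiply them. The first claim is true, but the second is false: $\underline{W}$ contains the unbounded operator $p_j$ when $w^{(I)}_{1,0}$ is involved, and moreover for $p_l,q_l\ge 1$ the contraction integrals over $X^{(p,q)}$ are not absolutely convergent without a weight on $H_f$ (the $\|\cdot\|_\sharp$ quantity in Lemma~\ref{kernelopestimate} contains a $\sup_r\,(r+\Sigma[X^{(p)}])^p(r+\Sigma[\widetilde X^{(q)}])^q$, which is infinite unless tamed by inverse powers of $H_f$). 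Your parenthetical fix of ``splitting an adjacent $F_l$'' into two halves and distributing them over the neighbouring vertices would indeed repair both problems (the $H_f$ appearing in the argument of $F_l$ controls the contraction integrals, and Hypothesis (H)(ii) controls $p_j$), but it directly contradicts your earlier step of already having bounded $\|F_l\|$ as a standalone factor: you cannot both consume $F_l$ as a unit and cut it into halves for the neighbouring vertices. The paper avoids this accounting problem by introducing the fixed weight $G_0 := -\Delta + H_f + 1$, which is independent of $z$, $r$ and the position $l$, and inserting $G_0^{-1/2}G_0^{1/2}=1$ between \emph{every} adjacent pair in the chain. The entire lemma then reduces to Lemma~\ref{ini:lemelemestimates}, namely $\|G_0^{-1/2}\underline{W}^{m,n}_{p,q}[w^{(I)}]G_0^{-1/2}\|\le C_W|g|^{m+n+p+q}$ and $\|G_0^{1/2}F[w^{(I)}]G_0^{1/2}\|\le C_F$ (plus the analogous bound for $\partial_r F$), after which \eqref{eq:babyestimate1} and the Leibniz rule give precisely $(L+1)C_F^{L+1}C_W(g)^L$. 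Your sketch has the right ingredients, but you should replace the two naked bounds by the single pair of $G_0$-weighted bounds, and state that the endpoint factors $F_0=F_L=\chi_1$ contribute $\|\chi_1\|_\infty\|G_0^{1/2}(\varphi_{\rm at}\otimes\Omega)\|$, which is finite because $\varphi_{\rm at}\in D(-\Delta)$ and is absorbed into $C_F$.
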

To show this lemma we will use the estimates from the following lemma
and we introduce the following operator
$$
G_0 := - \Delta + H_f + 1 .
$$
\begin{lemma} \label{ini:lemelemestimates} There exist finite constants $C_W$ and $C_F$ such that the following holds.
We have
\begin{align}
& \| G_0^{-1/2} \underline{W}_{p,q}^{m,n}[w^{(I)}(g,\beta,\sigma,\zeta)](K^{(m,n)}) G_0^{-1/2} \| \leq C_{W} g^{m+p+n+q} , \label{eq1:ini:lemelemestimates}
\end{align}
for all $(g,\beta,\sigma,\zeta, K^{(m,n)}) \in \C \times \R \times \R_+ \times \C \times \underline{B}_1^{m+n}$.
For $ |\zeta- E_{\rm at} | < {1/2}$, we have
\begin{align}
& \| G_0^{1/2} F[w^{(I)}( g,\beta,\sigma,\zeta ) ](r + H_f ) G_0^{1/2} \| \leq C_F \label{eq2:ini:lemelemestimates} , \\
& \| G_0^{1/2} \partial_r F[w^{(I)}( g,\beta,\sigma,\zeta)](r + H_f ) G_0^{1/2} \| \leq C_F \label{eq3:ini:lemelemestimates} ,
\end{align}
for all $(g,\beta,\sigma,r) \in \C \times \R \times \R_+ \times \R_+$.
\end{lemma}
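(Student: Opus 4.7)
The plan is to prove the three estimates essentially by direct computation, in the same spirit as the proof of Theorem \ref{ini:thm1}. For \eqref{eq1:ini:lemelemestimates}, I will inspect the explicit formulas \eqref{defofwI} for the kernels $w^{(I)}_{m+p,n+q}$; for \eqref{eq2:ini:lemelemestimates}--\eqref{eq3:ini:lemelemestimates}, the crucial input is a uniform lower bound on the denominator $w^{(I)}_{0,0}(z)(r+H_f) = H_0 - z + r$ on the range of $\chib^{(I)}(r+H_f)$.

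To establish \eqref{eq1:ini:lemelemestimates}, I would first note that $w^{(I)}_{m+p,n+q}$ is nonzero only for $m+p+n+q\in\{1,2\}$, so only finitely many cases arise. Each such kernel carries an explicit $g^{m+p+n+q}$ prefactor, together with bounded spatial factors $\kappa_{\sigma,\Lambda}$, $\varepsilon(\cdot,\lambda)$ and plane waves; when $m+p+n+q=1$ there is in addition one unbounded atomic factor $p_j$. Applying the pull-through formula and Lemma \ref{kernelopestimate} to control the $X^{(p,q)}$-integration, the resulting $(H_f+1)^{(p+q)/2}$ factors from the creation and annihilation operators are absorbed through $\|(H_f+1)^{1/2}G_0^{-1/2}\|\leq 1$, and when present the single $p_j$ through $\|p_j G_0^{-1/2}\|\leq \|p_j(-\Delta+1)^{-1/2}\|\leq 1$. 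The remaining scalar integrals of $\kappa_{\sigma,\Lambda}(k)/\sqrt{|k|}$ over $|k|\leq\Lambda$ are finite uniformly in $\sigma\geq 0$, which yields the bound with $C_W$ independent of $(g,\beta,\sigma,\zeta)$ and of $K^{(m,n)}$.

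For \eqref{eq2:ini:lemelemestimates}, since $\chib^{(I)}(r+H_f)$ commutes with $H_0$ I would write
$$
F[w^{(I)}](r+H_f) = \chib^{(I)}(r+H_f)^2 \,(H_0 - z + r)^{-1},
$$
and split $\chib^{(I)}(r+H_f)^2 = \bar{P}_{\rm at} + P_{\rm at}\chib_1(r+H_f)^2$. On the first summand, the gap \eqref{eq:hatscale} gives $H_0\bar{P}_{\rm at}\geq(E_{\rm at}+1)\bar{P}_{\rm at}$, and since $|z|<1/2$ the modulus of the normal operator $(H_0-z+r)\bar{P}_{\rm at}$ is bounded below by a linear function of $r+H_f$. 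On the second summand, $H_0 = E_{\rm at}+H_f$ and the support condition $r+H_f\geq 3/4$ of $\chib_1$ gives an analogous lower bound. Hence $\|(H_0+c)F[w^{(I)}](r+H_f)\|$ is bounded uniformly in $(r,z)$ for $c$ large enough. Since $V$ is infinitesimally $-\Delta$-bounded, the operators $G_0$ and $H_0+c$ are equivalent as positive self-adjoint operators, so sandwiching by $G_0^{1/2}$ reduces to sandwiching by $(H_0+c)^{1/2}$; here every factor commutes with $P_{\rm at}$ and $\bar{P}_{\rm at}$, so the resulting operator is bounded by a scalar spectral-calculus computation. For \eqref{eq3:ini:lemelemestimates} I would differentiate the quotient,
$$
\partial_r F[w^{(I)}](r+H_f) = 2 P_{\rm at}\chib_1(r+H_f)\chib_1'(r+H_f)(H_0-z+r)^{-1} - \chib^{(I)}(r+H_f)^2(H_0-z+r)^{-2},
$$
and treat each term by the same method, using $\chib_1'\in C_c^\infty$ for the first and the squared lower bound on the denominator for the second.

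The main obstacle will be the interplay between $G_0^{1/2}$ and the atomic projection $P_{\rm at}$: since $P_{\rm at}$ commutes with $H_{\rm at}=-\Delta+V$ but not with $-\Delta$ alone, $G_0$ and $P_{\rm at}$ do not commute directly. I would circumvent this by replacing $G_0$ with the equivalent operator $H_0+c$ before applying the decomposition, as $H_0+c$ does commute with $P_{\rm at}$ and thereby reduces each summand of the $\chib^{(I)2}$ splitting to a purely scalar spectral-calculus estimate.
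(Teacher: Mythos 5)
Your proposal is correct and, in substance, the same as the paper's proof. For \eqref{eq1:ini:lemelemestimates} the paper also applies the pull-through formula together with Lemma \ref{kernelopestimate} (in the form of the estimate \eqref{eq:contsigmainitial1}) and absorbs the extra $(H_f+1)^{1/2}$ and the single $p_j$ factor into $G_0^{-1/2}$, then evaluates a short list of cases $m+n+p+q\in\{1,2\}$. For \eqref{eq2:ini:lemelemestimates}--\eqref{eq3:ini:lemelemestimates} the paper's device is to insert $1=(H_0+r-E_{\rm at}+1)^{1/2}(H_0+r-E_{\rm at}+1)^{-1/2}$ on each side, which is exactly your "replace $G_0$ by the $H_0$-equivalent operator" idea: the factor $\|G_0^{1/2}(H_0+r-E_{\rm at}+1)^{-1/2}\|^2$ is bounded by infinitesimal $-\Delta$-boundedness of $V$, and the remaining operator commutes with the spectral projections of $H_0$, so its norm is controlled by a scalar supremum. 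The paper does not spell out the $\bar P_{\rm at}/P_{\rm at}\chib_1^2$ decomposition but uses the equivalent input \eqref{ini:eq1}, $\inf\sigma(H_0\restriction\ran P_0)=E_{\rm at}+\tfrac34$, so your explicit splitting is just a more verbose presentation of the same spectral estimate; the derivative bound \eqref{eq3:ini:lemelemestimates} is likewise handled by the Leibniz rule and the squared lower bound on the denominator, as you propose. One small notational caution: in your formula for $F$ the denominator should read $H_0+r-\zeta$ (equivalently $H_0+r-E_{\rm at}-z$ with $z=\zeta-E_{\rm at}$), not $H_0-z+r$; with the convention $|\zeta-E_{\rm at}|<\tfrac12$ the lower bound on $\bar P_{\rm at}$ is then $\mathrm{Re}(H_0+r-\zeta)\geq\tfrac12+r$, independent of $E_{\rm at}$, which is what makes the constants uniform.
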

\begin{proof}
First we show \eqref{eq1:ini:lemelemestimates}.
For simplicity we drop the $(g, \beta, \sigma, \zeta)$--dependence in the notation.
If $p=q=0$ it follows directly from the definition that
$$
{\rm l . h. s. \ of \ } \eqref{eq1:ini:lemelemestimates} \ \leq 2 |g|^{m+n+p+q} N .
$$
To see the corresponding estimate for $ p + q \geq 1$ we first introduce the notation
\begin{equation} \label{eq:defofbor}
B_0(r) := (- \Delta + r + 1 )^{-1/2}.
\end{equation}
Hence
by definition $B_0(H_f) = G_0^{-1/2}$.
Using the pull-through formula and Lemma \ref{kernelopestimate} we see that
\begin{eqnarray}
\lefteqn{
I^{m,n}_{p,q} := \left\| G_0^{-1/2} \underline{W}_{p,q}^{m,n}[w^{(I)}](K^{(m,n)}) G_0^{-1/2} \right\| } \nonumber \\
 && \leq \int_{{(\underline{\R}^3)}^{p+q}} \frac{d X^{(p,q)}}{|X^{(p,q)}|^2}
\sup_{r \geq 0} \Bigg[ \left\| B_0( r + \Sigma[X^{(p)}]) w_{m+p,n+q}^{(I)}( K^{(m)}, X^{(p)} , \widetilde{K}^{(n)},
 \widetilde{X}^{(q)}) B_0( r + \Sigma[\widetilde{X}^{(q)}]) \right\|^2
 \nonumber \\
 && \times \left( r + \Sigma[{X}^{(p)}]\right)^p \left( r + \Sigma[\widetilde{X}^{(q)}]\right)^q \Bigg] ,
 \label{eq:contsigmainitial1}
\end{eqnarray}
where we used the trivial estimate for $r \geq 0$,
\begin{equation} \label{eq:trivestimateforkernel}
\prod_{l=1}^p \left( r + \Sigma[{X}^{(l)}]\right) \leq \left( r + \Sigma[{X}^{(p)}]\right)^p .
\end{equation}
Now we use \eqref{eq:contsigmainitial1} to estimate the remaining cases for $m , n,p,q$ separately. We find
$$
I^{m,n}_{p,q} \leq
\left\{ \begin{array}{l} |g| 2 N \| {\kappa_{\sigma,\Lambda}}/\omega \|_\hh , \qquad {\rm if } \ \ S =1,\ p+q =1 , \\
  |g|^2 N \| {\kappa_{\sigma,\Lambda}}/\omega \|_\hh^{p+q} , \qquad {\rm if } \ \ S = 2 ,\ \max(p,q) = 1 , \\
 |g|^2 N \left( \|  {\kappa_{\sigma,\Lambda}}/\omega \|_\hh^2 + 2 \| \kappa_{\sigma,\Lambda} / \omega \|_\hh^2 \| \kappa_{\sigma,\Lambda} / \omega^{1/2} \|_\hh \right)^{1/2}
  , \qquad {\rm if } \ \ S = 2 , \ \max(p,q) = 2 ,
\end{array} \right.
$$
with $S := m + n + p + q$.
Collecting estimates, \eqref{eq1:ini:lemelemestimates} follows.
Next we show \eqref{eq2:ini:lemelemestimates}. Inserting two times the identity
$1 = (H_0 + r - E_{\rm at} + 1)^{1/2} (H_0 + r - E_{\rm at} + 1)^{-1/2}$ into the left hand side of
\eqref{eq2:ini:lemelemestimates} we find,
\begin{eqnarray*}
   {\rm l . h. s. \ of \ } \eqref{eq2:ini:lemelemestimates} \leq \left\| G_0^{1/2} (H_0 + r - E_{\rm at} + 1)^{-1/2} \right\|^2 \left\| \frac{H_{0} + r - E_{\rm at} + 1}{H_{0} + r - \zeta }
 \left[ \chib^{(I)}(H_f + r ) \right]^2 \right\| .
\end{eqnarray*}
The first factor is bounded since $V$ is infinitesimally bounded with respect to
$-\Delta$. The second factor
can be  bounded using a similar estimate as
\eqref{eq:FestimateI}. Finally \eqref{eq3:ini:lemelemestimates} is estimated in a similar way using
$$
F[w^{(I)}(g,\beta,\sigma,\zeta)]'(r) = \frac{- \left[ \chib^{(I)}(r) \right]^2}{\left(w_{0,0}^{(I)}(\zeta)(r)\right)^2} +
\frac{2 \chib^{(I)}(r) \partial_r \chib^{(I)}(r)}{w_{0,0}^{(I)}(\zeta)(r)}
$$
and the bound
\begin{eqnarray*}
\left\| \frac{ H_0 + r - E_{\rm at} + 1}{( H_0 + r -\zeta)^2} \left[ \chib^{(I)}(H_f + r ) \right]^2 \right\| &&\leq
\left\| \frac{ H_0 + r - E_{\rm at} + 1}{(H_0 + r - E_{\rm at} - 1/2)^2} \left[ \chib^{(I)}(H_f + r ) \right]^2 \right\| \\
&&\leq \sup_{r \geq 0} \left| \frac{ r + \frac{3}{4} + 1}{(r + 1/4)^2} \right| \leq 32 .
\end{eqnarray*}
\end{proof}

\noindent {\it Proof of Lemma \ref{initial:thmE22}.}
We estimate $\| V_{\umm,\upp,\unn,\uqq}[w^{(I)}(g,\beta,\sigma,\zeta)] \|_{\underline{\infty}}$ using
\begin{equation} \label{eq:babyestimate1}
| \langle \varphi_{\rm at} \otimes \Omega , A_1 A_2 \cdots A_n \varphi_{\rm at} \otimes \Omega \rangle | \leq \| A_1 \|_{\rm op} \| A_2 \|_{\rm op} \cdots \| A_n \|_{\rm op},
\end{equation}
where $\| \cdot \|_{\rm op}$ denotes the operator norm,
and Inequalities \eqref{eq1:ini:lemelemestimates} and \eqref{eq2:ini:lemelemestimates}.
To estimate $\| \partial_r V_{\umm,\upp,\unn,\uqq}[w^{(I)}(g,\beta,\sigma,\zeta)] \|_{\underline{\infty}}$ we first calculate
the derivative using the Leibniz rule. The resulting expression is estimated using again \eqref{eq:babyestimate1}
and Inequalities \eqref{eq1:ini:lemelemestimates}--\eqref{eq3:ini:lemelemestimates}.
\qed

\vspace{0.5cm}

Now we are ready to establish Inequalities \eqref{eq:feb2:1}--\eqref{eq:feb2:3}, below.
Recall that we assume \eqref{eq:zzetarelation}.
Let $S^L_{M,N}$ denote the set of tuples $(\umm,\upp,\unn,\uqq) \in \N_0^{4L}$ with
$|\umm|=M$, $|\unn|=N$, and $1 \leq m_l+p_l+q_l+n_l \leq 2$.
We estimate the norm of \eqref{eq:defofwmnschlange} using \eqref{initial:thmmain:eq2} and
find, with $\widetilde{\xi} := (8 \pi)^{-1/2} \xi$,
\begin{align}
\| w^{(0)}_{\geq 1}(g,\beta,\sigma,z) \|_\xi^\# &= \sum_{M+N \geq 1} {\xi}^{-(M+N)} \| \widetilde{w}_{M,N}(g,\beta,\sigma,z) \|^\# \nonumber \\
&\leq \sum_{M+N\geq 1} \sum_{L=1}^\infty \sum_{(\umm,\upp,\unn,\uqq) \in S^L_{M,N}}
\widetilde{\xi}^{-(M+N)} 4^L \| V_{\umm,\upp,\unn,\uqq}[w^{(I)}(g,\beta,\sigma,\zeta)] \|^\# \nonumber \\
&\leq \sum_{L=1}^\infty \sum_{M+N\geq 1} \sum_{(\umm,\upp,\unn,\uqq) \in S^L_{M,N} }
\widetilde{\xi}^{-|\umm|-|\unn|} (L+1) C_F \left( 4 C_W(g) C_F \right)^L \nonumber \\
&\leq\sum_{L=1}^\infty (L+1) 14^L \widetilde{\xi}^{-2L} C_F \left( 4 C_W(g) C_F \right)^L \; , \label{eq:feb2:1}
\end{align}
for all $(g,\beta,\sigma) \in \C \times \R \times \R_+$,
where in the second line we used $\binom{ m + p }{ p} \leq 2^{m+p}$
and in the last line we used  $|\umm|+|\unn| \leq 2L$ and that the number of elements  $(\umm,\upp,\unn,\uqq) \in \N_0^L$ with
$1 \leq m_l+n_l+p_l+q_l \leq 2$ is bounded by $14^{L}$.
A similar but simpler estimate yields
\begin{align}
\sup_{r \in [0,1]} | \partial_r w^{(0)}_{0,0}(g,\beta,\sigma,z)(r) - 1 |
&\leq \sum_{L=2}^\infty \sum_{(\upp,\uqq) \in \N_0^{2L}: p_l+q_l= 1, 2}
 \| V_{\uzz,\upp,\uzz,\uqq}[w^{(I)}(g,\beta,\sigma,\zeta)] \|^\# \nonumber \\
&\leq \sum_{L=2}^\infty 3^L (L+1) C_F \left( C_W(g) C_F \right)^L \; , \label{eq:feb2:2}
\end{align}
for all $(g,\beta,\sigma) \in \C \times \R \times \R_+$.
Analogously we have for all $(g,\beta,\sigma) \in \C \times \R \times \R_+$,
\begin{align}
| w^{(0)}_{0,0}(g,\beta,\sigma,z)(0) + z |
& \leq \sum_{L=2}^\infty \sum_{(\upp,\uqq) \in \N_0^{2L}: p_l+q_l= 1, 2}
\| V_{\uzz,\upp,\uzz,\uqq}[w^{(I)}(g,\beta,\sigma,\zeta)] \|^\# \nonumber
\\ \label{eq:feb2:3}
& \leq \sum_{L=2}^\infty 3^L (L+1) C_F \left( C_W(g) C_F \right)^L .
\end{align}
In view of the definition of $C_W(g)$ the right hand sides in \eqref{eq:feb2:1}--\eqref{eq:feb2:3} can be made arbitrarily small for sufficiently small $|g|$.
This implies that the kernel $w^{(0)}(g,\beta,\sigma,z)$ is in $\WW_\xi^\#$ and that the inequalities in the definition of $\mathcal{B}_0(\delta_1,\delta_2,\delta_3)$
are satisfied. Rotation invariance of $w^{(0)}$ follows since the right hand side of \eqref{eq:inimainA} is invariant under rotations and Lemma \ref{lem:wHinvequiv}.
(b) follows from the properties of the right hand side of \eqref{eq:inimainA} and Lemma \ref{lem:symmetry}. It remains to show (c) and (d).
(c) respectively (d) follow from the convergence established in \eqref{eq:feb2:1}--\eqref{eq:feb2:3}, which is
uniform in $(g,\beta,\sigma,z) \in D_{g_0} \times \R \times \R_+ \times D_{1/2}$, and Lemma \ref{lem:analytwI}
respectively Lemma \ref{lem:contwI}, shown below.

\begin{lemma} \label{lem:analytwI} The mapping $(g,z) \mapsto V_{\umm,\upp,\unn,\uqq}[w^{(I)}(g,\beta,\sigma,E_{\rm at} + z )]$ is a $\WW_{|\umm|,|\unn|}^\#$-valued
analytic function on $D_{g_0} \times D_{1/2}$.
\end{lemma}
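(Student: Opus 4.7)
The plan is to decompose the dependence of $V_{\umm,\upp,\unn,\uqq}[w^{(I)}(g,\beta,\sigma,E_{\rm at}+z)]$ on $(g,z)$ into elementary building blocks whose joint analyticity is transparent, and then to upgrade pointwise analyticity of the scalar matrix element to analyticity in the $\WW_{|\umm|,|\unn|}^\#$ norm by invoking the uniform bound from Lemma \ref{initial:thmE22}. The key structural observation is that in $w^{(I)}(g,\beta,\sigma,E_{\rm at}+z)$ only the free component $w^{(I)}_{0,0}(E_{\rm at}+z)(r)=H_{\rm at}-E_{\rm at}-z+r$ depends on $z$ (affinely) and is independent of $g$, whereas each interaction component $w^{(I)}_{m,n}$ with $m+n\in\{1,2\}$ is polynomial in $g$ (of degree $m+n$) and independent of $z$. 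Consequently, in the defining expression \eqref{eq:defofV} each $\underline{W}^{m_l,n_l}_{p_l,q_l}[w^{(I)}(g,\beta,\sigma,E_{\rm at}+z)](K^{(m_l,n_l)})$ is a polynomial in $g$ independent of $z$, while each intermediate propagator $F_l[w^{(I)}(g,\beta,\sigma,E_{\rm at}+z)](H_f+s)=\chib^{(I)}(H_f+s)^2(H_0+s-E_{\rm at}-z)^{-1}$ depends only on $z$ and is jointly bounded analytic in $z\in D_{1/2}$, since the spectrum of $H_0\upharpoonright \ran P_0$ is contained in $[E_{\rm at}+3/4,\infty)$ by \eqref{ini:eq1}.

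From this I would conclude that for every fixed $(r,K^{(|\umm|,|\unn|)})$ the scalar matrix element \eqref{eq:defofV} is analytic on $D_{g_0}\times D_{1/2}$: it is a polynomial in $g$ whose coefficients are matrix elements, taken between the fixed vectors $\varphi_{\rm at}\otimes\Omega$, of norm-analytic operator-valued functions of $z$. The sandwich bounds \eqref{eq1:ini:lemelemestimates} and \eqref{eq2:ini:lemelemestimates} of Lemma \ref{ini:lemelemestimates} ensure that these operator products are well defined and uniformly bounded, so that the composition is pointwise analytic in $(g,z)$. The same reasoning applied to the $r$-derivative of \eqref{eq:defofV}, via Leibniz's rule and the bound \eqref{eq3:ini:lemelemestimates} on $\partial_r F_l$, yields pointwise analyticity of $\partial_r V_{\umm,\upp,\unn,\uqq}[w^{(I)}]$ as well. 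If desired, Neumann expansion of the resolvents $(H_0+s-E_{\rm at}-z)^{-1}$ in powers of $z$ gives an explicit jointly convergent power series representation in $(g,z)$ at each fixed $(r,K^{(|\umm|,|\unn|)})$.

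Finally, Lemma \ref{initial:thmE22} supplies a uniform norm bound
$$
\|V_{\umm,\upp,\unn,\uqq}[w^{(I)}(g,\beta,\sigma,E_{\rm at}+z)]\|^\# \leq (L+1)\,C_F^{L+1}\,C_W(g)^L
$$
valid for all $(g,z)\in D_{g_0}\times D_{1/2}$. Combining pointwise analyticity with this local boundedness, one obtains $\WW_{|\umm|,|\unn|}^\#$-valued analyticity by applying the bivariate Cauchy integral formula fibrewise in $(r,K^{(|\umm|,|\unn|)})$: the uniform norm bound together with dominated convergence ensures that the resulting Cauchy integral converges in $\|\cdot\|^\#$ and represents $V_{\umm,\upp,\unn,\uqq}[w^{(I)}]$ as an absolutely convergent $\WW^\#$-valued power series in $(g,z)$. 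The main obstacle I anticipate is precisely this last step: because $\|\cdot\|^\#$ is an $L^\infty$-type norm that also controls the $r$-derivative, the passage from pointwise to norm analyticity must be effected directly (rather than by appealing to standard weak-analyticity theorems in reflexive or separable Banach spaces), and it is the explicit uniform bound of Lemma \ref{initial:thmE22} that underwrites the argument.
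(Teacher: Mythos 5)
Your proposal is correct in substance but takes a genuinely different route than the paper. The paper proves $\WW^\#$-valued analyticity in $z$ \emph{directly}: it writes down the algebraic identity \eqref{eq:telescoping2} for difference quotients of a product, applies it with the $A_i$'s taken to be the $G_0$-sandwiched factors of \eqref{eq:defofV} and $s=z$, and then verifies that each $z$-dependent factor $F^{(I)}_{g,\beta,\sigma}(z)(r)$ has a norm-convergent difference quotient, uniformly in $r$, both for the function and its $r$-derivative. Combined with \eqref{eq:babyestimate1} and Lemma~\ref{ini:lemelemestimates}, this gives norm convergence of the difference quotient of $V$ (and of $\partial_r V$) uniformly in $(r,K)$, i.e.\ complex differentiability in $\|\cdot\|^\#$, without ever discussing pointwise analyticity. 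Analyticity in $g$ is dispatched, as you do, by noting that $V$ is a polynomial in $g$ with $\WW^\#$-valued coefficients. Your alternative establishes pointwise analyticity fibrewise in $(r,K)$ and then upgrades to $\|\cdot\|^\#$-analyticity via Cauchy's formula plus the uniform bound of Lemma~\ref{initial:thmE22}. This does work, and you rightly flag that the upgrade cannot be outsourced to the standard ``weakly analytic and locally bounded $\Rightarrow$ strongly analytic'' theorem, because evaluation at a point $(r,K)$ is not a bounded functional on an $L^\infty$-type space and because the $\|\cdot\|^\#$ norm also controls $\partial_r$. To make the last step airtight you must: (a) apply Cauchy's estimate separately to $V$ and to $\partial_r V$ to get $\|a_n\|_{\underline\infty}$ and $\|\partial_r a_n\|_{\underline\infty}$ both $\leq M\rho^{-n}$; (b) justify $\partial_r a_n = b_n$ by differentiating under the Cauchy integral, which uses the uniform boundedness of $\partial_r V$; and (c) check measurability of $a_n$ in $K$. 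With these details the power series $\sum_n a_n(z-z_0)^n$ converges absolutely in $\|\cdot\|^\#$ and agrees pointwise with $V(z)$, which forces equality as elements of $\WW_{|\umm|,|\unn|}^\#$. What the paper's approach buys is that it never leaves the norm topology, so none of these measurability or ``pointwise-to-norm'' issues arise; what yours buys is a cleaner separation between the (easy) pointwise analyticity and the (uniform) size estimate, at the cost of the bookkeeping in steps (a)--(c).
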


\begin{proof}
The analyticity in $g$ follows since $ V_{\umm,\upp,\unn,\uqq}[w^{(I)}(g,\beta,\sigma,z + E_{\rm at})]$ is a polynomial in $g$ and
the coefficients of this polynomial are elements in $\WW_{|\umm|,|\unn|}^\#$ because of \eqref{initial:thmmain:eq2}.
To show the analyticity in $z$ first observe that $V_{\umm,\upp,\unn,\uqq}$ is multilinear expression of integral kernels and that
the kernels $w^{(I)}_{m,n}$ do not depend on $z$ if $m+n \geq 1$. We will use the following algebraic identity
\begin{eqnarray} \label{eq:telescoping2}
\lefteqn{\frac{ A_1(s) \cdots A_n(s) - A_1(s_0) \cdots A_n(s_0) }{s-s_0} } \\
&& - \sum_{i=1}^n A_1(s_0) \cdots A_{i-1}(s_0) A'_i(s_0) A_{i+1}(s_0) \cdots A_n(s_0) \nonumber \\
&&= \sum_{i=1}^n A_1(s) \cdots A_{i-1}(s) \left[ \frac{A_i(s) - A_i(s_0)}{s-s_0} - A'_i(s_0) \right] A_{i+1}(s_0) \cdots A_n(s_0) \nonumber \\
&& + \sum_{i=1}^n \left[ A_1(s) \cdots A_{i-1}(s) - A_1(s_0) \cdots A_{i-1}(s_0) \right] A'_i(s_0) A_{i+1}(s_0) \cdots A_n(s_0) \nonumber .
\end{eqnarray}
Using \eqref{eq:telescoping2} and \eqref{eq:babyestimate1} the analyticity in $z$ follows as a consequence of
the estimates in Lemma \ref{ini:lemelemestimates}
and the following limits for the function $$F_{g,\beta,\sigma}^{(I)}(r) (z) := G_0^{1/2} F[w^{(I)}(g,\beta,\sigma,E_{\rm at} + z)] (H_f + r) G_0^{1/2} . $$
 If $z, z+h \in D_{1/2}$ then for $t=0,1$,
\begin{align*}
& \sup_{r \geq 0} \left\| \frac{1}{h} \partial_r^t \left( F_{g,\beta,\sigma}^{(I)}( z + h)(r) - F_{g,\beta,\sigma}^{(I)}( z )(r) \right) +
\partial_r^t G_0^{1/2}
\frac{\left[\chib^{(I)}(r)\right]^2}{(H_{\rm at} + H_f + r - E_{\rm at} - z )^2 } G_0^{1/2} \right\| \stackrel{ h \to 0 }{\longrightarrow} 0 , \\
&\sup_{r \geq 0} \left\| \partial_r^t  F_{g,\beta,\sigma}^{(I)}( z + h)(r) - \partial_r^t
 F_{g,\beta,\sigma}^{(I)}( z )(r) \right\| \stackrel{ h \to 0 }{\longrightarrow} 0 .
\end{align*}
\end{proof}

\begin{lemma} \label{lem:contwI} The mapping $(\sigma,z) \mapsto V_{\umm,\upp,\unn,\uqq}[w^{(I)}(g,\beta,\sigma,E_{\rm at} + z)]$ is
a $L^2_\omega(\underline{B}_1^{|\umm|+|\unn|} ; C[0,1])$--valued continuous function on $\R_+ \times D_{1/2}$.
\end{lemma}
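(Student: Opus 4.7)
The overall strategy is to decompose the increment as
$V(\sigma,z) - V(\sigma_0,z_0) = [V(\sigma,z) - V(\sigma,z_0)] + [V(\sigma,z_0) - V(\sigma_0,z_0)]$
and handle continuity in $z$ (uniformly in $\sigma$) and continuity in $\sigma$ (at fixed $z$) separately. For the first piece, I observe that the analyticity of $z \mapsto V_{\umm,\upp,\unn,\uqq}[w^{(I)}(g,\beta,\sigma,E_{\rm at}+z)]$ in the $\|\cdot\|^\#$-norm, established in Lemma~\ref{lem:analytwI}, only uses the bounds of Lemma~\ref{ini:lemelemestimates}, whose constants $C_W$, $C_F$ are independent of $\sigma$. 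Hence the $\|\cdot\|^\#$-continuity in $z$ is uniform in $\sigma$. Since $\|\cdot\|_2 \leq \|\cdot\|_\infty$ up to the combinatorial factor in \eqref{eq:infinity2ineq} and $\|\cdot\|_\infty \leq \|\cdot\|^\#$, the same uniform continuity holds in the $L^2_\omega$-norm.

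For the second piece, fix $\zeta = E_{\rm at}+z$ and apply the telescoping identity \eqref{eq:telescoping2} to the string of $2L+1$ operators sandwiched between $\varphi_{\rm at}\otimes\Omega$ in \eqref{eq:defofV}. The factors $F_l$ are independent of $\sigma$, so each telescoping summand contains exactly one replacement
$$
\Delta\underline{W}_l := \underline{W}^{m_l,n_l}_{p_l,q_l}[w^{(I)}(g,\beta,\sigma,\zeta)] - \underline{W}^{m_l,n_l}_{p_l,q_l}[w^{(I)}(g,\beta,\sigma_0,\zeta)].
$$
I insert pairs $G_0^{1/2} G_0^{-1/2}$ between consecutive factors and apply the uniform operator-norm bounds of Lemma~\ref{ini:lemelemestimates} to every factor other than $\Delta\underline{W}_l$. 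By Cauchy-Schwarz in the external variables $K^{(M,N)}$, the $L^2_\omega$-norm of each summand is controlled by a product of operator norms times the $L^2_\omega$-norm (in the $K^{(m_l,n_l)}$ variables assigned to $\Delta\underline{W}_l$) of $\|G_0^{-1/2}\Delta\underline{W}_l(\cdot)G_0^{-1/2}\|_{\rm op}$.

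Unfolding \eqref{eq:defofWW} and applying Lemma~\ref{kernelopestimate}, just as in the derivation of \eqref{eq1:ini:lemelemestimates}, reduces the remaining task to showing that
$$
\int_{\underline{B}_1^{m+p+n+q}} \frac{dK\,dX}{|K|^{2}|X|^{2}}\sup_{r\in[0,1]}\bigl\|w^{(I)}_{m+p,n+q}(g,\beta,\sigma)(r,K,X) - w^{(I)}_{m+p,n+q}(g,\beta,\sigma_0)(r,K,X)\bigr\|_{\rm op}^{2}\,P(r+\Sigma[\cdot]) \longrightarrow 0
$$
as $\sigma\to\sigma_0$, with $P$ a polynomial of degree at most two and the operator norm on $\HH_{\rm at}$. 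From the explicit formulas \eqref{defofwI}, the only $\sigma$-dependence is through $\kappa_{\sigma,\Lambda}$, and $|\kappa_{\sigma,\Lambda}(k) - \kappa_{\sigma_0,\Lambda}(k)|$ is bounded by the indicator of a set of Lebesgue measure tending to zero as $\sigma\to\sigma_0$. The remaining integrand is dominated by an integrable function, using the infinitesimal $-\Delta$-boundedness of $V$ exactly as in the proof of Theorem~\ref{ini:thm1}; dominated convergence then yields the desired limit.

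The principal obstacle is the bookkeeping required to allocate operator norms to the $L^\infty$-factors and the $L^2_\omega$-norm to the single difference factor: the external variables $K^{(M,N)}$ are split among $L$ different $\underline{W}$-factors, and the fact that the operator bounds of Lemma~\ref{ini:lemelemestimates} are uniform in the external variables is essential to allow the non-difference factors to be pulled out of the integral. A secondary subtlety is that $\kappa_{\sigma,\Lambda}$ is discontinuous in $|k|$, but after integration against $dk/|k|^2$ this discontinuity is harmless.
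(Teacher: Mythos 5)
Your proposal is correct, and the technical core coincides with the paper's proof; the organization differs slightly. The paper treats the $(\sigma,z)$-dependence jointly: it telescopes the full product in \eqref{eq:defofV} via \eqref{eq:telescoping} and establishes both the $L^2_\omega$-convergence of the sandwiched $\underline{W}$-factor (\eqref{eq:ccontini1}, via the $\flat$-norm estimate \eqref{eq:ccontini1-p1} and dominated convergence \eqref{eq:ccontini1-p2}) and the uniform operator-norm convergence of the $F$-factor (\eqref{eq:ccontini2}). You instead split the increment into a $z$-part and a $\sigma$-part. For the $\sigma$-part (at fixed $\zeta$) only the $\underline{W}$-factors change, so the telescoping has one difference factor per summand, and you arrive at precisely the paper's $\flat$-norm estimate and dominated-convergence argument for $\kappa_{\sigma,\Lambda}$; this is the same computation. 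For the $z$-part you invoke Lemma~\ref{lem:analytwI} together with the $\sigma$-uniformity of the constants $C_W,C_F$ in Lemma~\ref{ini:lemelemestimates}: to make this airtight you should note that the uniform bound \eqref{initial:thmmain:eq2} plus Cauchy's formula gives a $\sigma$-uniform bound on the $z$-derivative on any disk $D_{1/2-\varepsilon}$, whence local Lipschitz continuity in $z$ uniformly in $\sigma$, which then transfers to $\|\cdot\|_2$ via \eqref{eq:infinity2ineq}. This spares you the analogue of \eqref{eq:ccontini2} at the cost of a small amount of function-theoretic bookkeeping. One small slip: in the $\sigma$-part you cite \eqref{eq:telescoping2}, the identity with derivative terms used for analyticity; for mere continuity the simpler identity \eqref{eq:telescoping} is what you need and what the paper uses. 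Everything else — the insertion of $G_0^{\pm 1/2}$, the factorization of the $K^{(M,N)}$-integral across the $L$ factors, the uniformity of the operator bounds in the external variables, the dominated-convergence step for $\kappa_{\sigma,\Lambda}$ — matches the paper.
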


\begin{proof}
First observe that the kernel $V_{\umm,\upp,\unn,\uqq}$ is a multi-linear expression of integral kernels, thus to show continuity
we can use the following identity,
\begin{eqnarray} \label{eq:telescoping}
\lefteqn{ A_1(s) \cdots A_n(s) - A_1(s_0) \cdots A_n(s_0) } \nonumber \\
&&= \sum_{i=1}^n A_1(s) \cdots A_{i-1}(s) ( A_i(s) - A_i(s_0)) A_{i+1}(s_0) \cdots A_n(s_0) .
\end{eqnarray}
The lemma follows using \eqref{eq:telescoping}, \eqref{eq:babyestimate1}, and the following estimates
\begin{align}
& \| W^{(I)}_{g,\beta}(\sigma_0,z_0) (K^{(m,n)}) - W^{(I)}_{g,\beta}(\sigma,z) (K^{(m,n)}) \|_2 \stackrel{ (\sigma,z) \to (\sigma_0,z_0) }{\longrightarrow} 0 , \label{eq:ccontini1}
\\
&
\sup_{r \geq 0} \left\| F^{(I)}_{g,\beta}(\sigma_0,z_0)(r) - F^{(I)}_{g,\beta}(\sigma,z)(r) \right\| \stackrel{ (\sigma,z) \to (\sigma_0,z_0) }{\longrightarrow} 0 ,
\label{eq:ccontini2}
\end{align}
for the kernels
\begin{align*}
&W^{(I)}_{g,\beta}(\sigma,z) := G_0^{-1/2} \underline{W}_{p,q}^{m,n}[w^{(I)}(g,\beta,\sigma,z + E_{\rm at}) ] G_0^{-1/2} , \\
&F^{(I)}_{g,\beta}(\sigma,z)(r) := G_0^{1/2} F[w^{(I)}(g,\beta,\sigma,z + E_{\rm at})](r + H_f) G_0^{1/2} .
\end{align*}
It remains to show \eqref{eq:ccontini1} and \eqref{eq:ccontini2}. The limit given in \eqref{eq:ccontini2} is
verified by inserting the definitions. Using the notation introduced in \eqref{eq:defofbor} we find
for $m + n + p + q \geq 1$
\begin{eqnarray}
\lefteqn{
\int_{{(\underline{\R}^{3})}^{m+n}} \frac{dK^{(m,n)}}{|K^{(m,n)}|^2} \| B_0(H_f) \underline{W}_{p,q}^{m,n}[w^{}](K^{(m,n)}) B_0(H_f) \|^2 } \nonumber \\
 && \leq
\int_{{(\underline{\R}^3)}^{m+n+p+q}} \frac{d K^{(m,n)}}{|K^{(m,n)}|^2} \frac{d X^{(p,q)}}{|X^{(p,q)}|^2} \nonumber \\
&& \times
\sup_{r \geq 0} \Bigg[ \left\| B_0( r + \Sigma[X^{(p)}]) w_{m+p,n+q}( K^{(m)}, X^{(p)} , \widetilde{K}^{(n)},
 \widetilde{X}^{(q)}) B_0( r + \Sigma[\widetilde{X}^{(q)}]) \right\|^2
 \nonumber \\
 && \times \left( r + \Sigma[{X}^{(p)}]\right)^p \left( r + \Sigma[\widetilde{X}^{(q)}]\right)^q \Bigg] , \nonumber \\
&& =: \left[ \| w \|^\flat_{m,n,p,q} \right]^2 , \label{eq:ccontini1-p1}
 \end{eqnarray}
where we used Lemma \ref{kernelopestimate} and \eqref{eq:trivestimateforkernel}.
Now using dominated convergence it follows from the explicit expression for the kernels
 $w^{(I)}$ that
\begin{equation} \label{eq:ccontini1-p2}
\lim_{ (z,\sigma) \to (z_0,\sigma_0) } \left\| w_{p,q}^{(I)}(g,\beta,\sigma_0,z_0) - w_{p,q}^{(I)}(g,\beta,\sigma,z) \right\|^\flat_{m,n,p,q} = 0 .
\end{equation}
Now \eqref{eq:ccontini1-p1} and \eqref{eq:ccontini1-p2} imply \eqref{eq:ccontini1}.
\end{proof}

\section{Renormalization Transformation}
\label{sec:ren:def}

In this section we define the Renormalization transformation as in \cite{BCFS03} and use results from \cite{HH10}.
Let $0<\xi<1$ and $0 < \rho < 1$.
For $w \in \mathcal{W}_\xi$ we define the analytic function
$$E_\rho[w](z) := \rho^{-1} E[w](z) := - \rho^{-1} w_{0,0}(z,0) = - \rho^{-1} \langle \Omega , H(w(z)) \Omega \rangle$$
 and the set
$$
U[w] := \{ z \in D_{1/2} | | E[w](z) | < \rho / 2 \} .
$$

\begin{lemma} \label{renorm:thm3} Let $0< \rho\leq 1/2$. Then for all $w \in \mathcal{B}(\rho/8, \rho/8, \rho/8 )$, the
function $E_{\rho}[w]: U[w] \to D_{1/2}$ is an analytic bijection.
\end{lemma}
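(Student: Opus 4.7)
The plan is to reduce the bijectivity of $E_\rho[w]$ to a Rouché-theorem argument exploiting the fact that $E[w]$ is a small perturbation of the identity on $D_{1/2}$.

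First I would write $E[w](z) = z + g(z)$, where $g(z) := E[w](z) - z = -w_{0,0}(z,0) - z$. The hypothesis $w \in \mathcal{B}(\rho/8, \rho/8, \rho/8)$ gives precisely
\[
|g(z)| = |w_{0,0}(z,0)+z| \leq \rho/8, \qquad z \in D_{1/2},
\]
and $g$ is analytic on $D_{1/2}$. Analyticity of $E_\rho[w]$ on $U[w]$ is then clear, as is the fact that $E_\rho[w]$ maps $U[w]$ into $D_{1/2}$ (by definition of $U[w]$). Openness of $U[w]$ follows from continuity of $E[w]$.

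For surjectivity, fix $\zeta \in D_{1/2}$ and set $\zeta_0 := \rho\zeta \in D_{\rho/2}$. Since $\rho \leq 1/2$, any $r \in (\rho/8, 1/4)$ satisfies $\overline{D_r(\zeta_0)} \subset D_{1/2}$, so $g$ is defined on this closed disk. On $\partial D_r(\zeta_0)$ one has $|z - \zeta_0| = r > \rho/8 \geq |g(z)|$, so Rouché's theorem applied to $f(z) := z - \zeta_0$ and $f+g$ produces a unique zero $z_* \in D_r(\zeta_0)$ of $E[w](z) - \zeta_0 = (z - \zeta_0) + g(z)$. By construction $E_\rho[w](z_*) = \zeta$, and $|E[w](z_*)| = |\zeta_0| < \rho/2$, so $z_* \in U[w]$.

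For injectivity, suppose $z \in U[w]$ with $E_\rho[w](z) = \zeta$. Then $z + g(z) = \zeta_0$, so $|z - \zeta_0| = |g(z)| \leq \rho/8$. Hence any such $z$ lies in $\overline{D_{\rho/8}(\zeta_0)} \subset D_r(\zeta_0)$ for any $r \in (\rho/8, 1/4)$, and by the Rouché count above there is only one. Analyticity of the inverse is then automatic since $E_\rho[w]$ is a holomorphic bijection between open subsets of $\C$ (alternatively, by the preceding count the derivative of $E_\rho[w]$ at $z_*$ cannot vanish, so the inverse function theorem applies). The only delicate point is the bookkeeping in the Rouché step — matching the radius $r$ so that simultaneously $r > \rho/8$ (to make the perturbation small), $\overline{D_r(\zeta_0)} \subset D_{1/2}$ (to stay in the domain of $w$), and $D_{\rho/8}(\zeta_0) \subset D_r(\zeta_0)$ (so injectivity on $U[w]$ really follows) — but the inequalities $\rho/8 \leq 1/16 < 1/8 < 1/4$ make this straightforward.
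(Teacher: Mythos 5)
Your proof is correct and complete. The paper itself provides no argument for this lemma, instead citing \cite{BCFS03} and \cite{HH10} (Lemma 21); the standard proof in those references proceeds essentially as you do, by writing $E[w]$ as a small perturbation of the identity and invoking Rouch\'e's theorem (or an equivalent argument-principle count). Your radius bookkeeping is sound: $|\zeta_0| < \rho/2 \leq 1/4$ together with $r < 1/4$ gives $\overline{D_r(\zeta_0)} \subset D_{1/2}$, the boundary estimate $r > \rho/8 \geq |g|$ makes Rouch\'e applicable, and the inclusion $\overline{D_{\rho/8}(\zeta_0)} \subset D_r(\zeta_0)$ captures every candidate preimage in $U[w]$, so the unique-zero count gives injectivity as well as surjectivity. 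You also correctly verify that the produced zero lies in $U[w]$ (since $|E[w](z_*)| = |\zeta_0| < \rho/2$) and that multiplicity one forces $E_\rho[w]'(z_*) \neq 0$, so the inverse is automatically holomorphic. No gaps.
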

For a proof of the lemma see \cite{BCFS03} or \cite{HH10} (Lemma 21).
In the previous section we introduced smooth functions $\chi_1$ and $\chib_1$. We set
$$
\chi_\rho(\cdot) = \chi_1(\cdot /\rho) \quad , \quad \chib_\rho(\cdot) = \chib_1(\cdot /\rho) \; ,
$$
and use the abbreviation $\chi_\rho = \chi_\rho(H_f)$ and $\chib_\rho = \chib_\rho(H_f)$.
It should be clear from the context whether $\chi_\rho$ or $\chib_\rho$ denotes a function or an operator.

\begin{lemma} \label{renorm:thm1} Let $0 < \rho \leq 1/2$. Then for all $w \in \mathcal{B}(\rho/8,\rho/8,\rho/8)$, and
all $z \in D_{1/2}$ the pair of operators $(H(w(E_\rho[w]^{-1}(z)),H_{0,0}(E_\rho[w]^{-1}(z)))$ is a Feshbach pair for $\chi_\rho$.
\end{lemma}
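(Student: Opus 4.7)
The plan is to verify directly that the pair $(H(w(z')), H_{0,0}(w(z')))$ with $z' := E_\rho[w]^{-1}(z)$ satisfies the three standard Feshbach pair conditions: commutation with $\chi_\rho$, invertibility of $T + \chib_\rho W \chib_\rho$ on $\ran \chib_\rho$, and the relative bound on $W$. Write $T := H_{0,0}(w(z')) = w_{0,0}(z', H_f)$ and $W := W[w(z')] = \sum_{m+n\geq 1} H_{m,n}(w(z'))$, so that $H(w(z')) = T + W$. The commutation relations are immediate because $\chi_\rho, \chib_\rho$ and $T$ are all bounded Borel functions of the single operator $H_f$.

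The heart of the proof is a quantitative lower bound for $|w_{0,0}(z', r)|$ on the support of $\chib_\rho$. By Lemma \ref{renorm:thm3} applied to $w \in \mathcal{B}(\rho/8,\rho/8,\rho/8) \subset \mathcal{B}(\rho/8,\rho/8,\rho/8)$, the point $z' = E_\rho[w]^{-1}(z)$ lies in $U[w]$, so $|w_{0,0}(z', 0)| < \rho/2$. For $r \geq 3\rho/4$ (the left edge of $\supp \chib_\rho$), I will write
\begin{equation*}
w_{0,0}(z', r) = w_{0,0}(z',0) + r + \int_0^r (\partial_s w_{0,0}(z', s) - 1)\, ds,
\end{equation*}
and use $\|\partial_r w_{0,0}(z') - 1\|_\infty \leq \rho/8$ together with the bound on $|w_{0,0}(z', 0)|$ to obtain
\begin{equation*}
|w_{0,0}(z', r)| \geq r - \rho/2 - (\rho/8)\, r \geq (7/8)(3\rho/4) - \rho/2 = 5\rho/32.
\end{equation*}
Via the functional calculus for $H_f$ on $P_{\mathrm{red}}\FF$, this yields $\|(T\!\upharpoonright\!\ran\chib_\rho)^{-1}\| \leq 32/(5\rho)$.

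Next, from the estimate \eqref{eq:opestimatgeq12} and the assumption $\|w_{\geq 1}\|_\xi \leq \rho/8$ I get $\|W\| \leq \rho/8$, so that
\begin{equation*}
\|(T\!\upharpoonright\!\ran\chib_\rho)^{-1}\, \chib_\rho W \chib_\rho\| \leq \frac{32}{5\rho} \cdot \frac{\rho}{8} = \frac{4}{5} < 1.
\end{equation*}
Thus $T + \chib_\rho W \chib_\rho$ is boundedly invertible on $\ran\chib_\rho$ via the Neumann series, which verifies the remaining Feshbach pair conditions (appealing to Lemma \ref{fesh:thm2} and the fact that $W$ is a bounded operator, so the relative boundedness hypothesis is trivial). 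The only subtle point, rather than an obstacle, is the bookkeeping of constants: one must check that the constants $\rho/8$ chosen in the definition of $\mathcal{B}(\rho/8,\rho/8,\rho/8)$ produce a contraction factor strictly less than $1$ in the Neumann series, which the computation above confirms for any $\rho \leq 1/2$.
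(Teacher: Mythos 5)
Your proof is correct and is essentially the standard argument that the paper delegates to its references ([BCFS03], [HH10] Lemma~23 and Remark~24). The key points — using the definition of $U[w]$ to get $|w_{0,0}(z',0)| < \rho/2$ at the point $z' = E_\rho[w]^{-1}(z)$, using the $\alpha$-condition $\|\partial_r w_{0,0}(z') - 1\|_\infty \le \rho/8$ to lower-bound $|w_{0,0}(z',r)|$ on $\supp\chib_\rho \subset [3\rho/4,1]$, and then combining with $\|W\| \le \rho/8$ via \eqref{eq:opestimatgeq12} to get a Neumann-series contraction factor strictly less than $1$ — are exactly what is needed, and the arithmetic checks out (the weakening $1 - \rho/8 \ge 7/8$ is harmless).
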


A proof of Lemma \ref{renorm:thm1} can be found in   \cite{BCFS03}  or  \cite{HH10}  (Lemma 23 and Remark 24).
The definition of the renormalization transformation involves a scaling transformation $S_\rho$ which scales the energy value $\rho$ to the value 1.
It is defined as follows.
For operators $A \in \mathcal{B}(\FF)$ set
$$
S_\rho(A) = \rho^{-1} \Gamma_\rho A \Gamma_\rho^* ,
$$
where $\Gamma_\rho$ is the unitary dilation on $\FF$ which is uniquely determined by
\begin{align*}
& \Gamma_\rho a^\#(k) \Gamma_\rho^* = \rho^{-3/2} a^\#(\rho^{-1} k)  , \quad \Gamma_\rho \Omega = \Omega .
\end{align*}
It is easy to check that $\Gamma_\rho H_f \Gamma_\rho^* = \rho H_f$ and hence $\Gamma_\rho \chi_\rho \Gamma_\rho^* = \chi_1$.
We are now ready to define the renormalization transformation, which in
view of Lemmas \ref{renorm:thm3} and \ref{renorm:thm1} is well defined.

\begin{definition} Let $0 < \rho \leq 1/2$. For $w\in \mathcal{B}(\rho/8,\rho/8,\rho/8)$ we define the renormalization transformation
\begin{equation} \label{eq:defofrenorm}
\left( {R}_\rho H(w) \right)(z) := S_\rho F_{\chi_{\rho}}(H(w(E_\rho[w]^{-1}(z)),H_{0,0}(E_\rho[w]^{-1}(z))) \upharpoonright \HH_{\rm red}
\end{equation}
where $z \in D_{1/2}$.
\end{definition}

\begin{theorem} \label{thm:maingenerala} Let $0<\rho \leq 1/2$ and $0< \xi \leq 1/2$.
For
$w\in \mathcal{B}(\rho/8,\rho/8,\rho/8)$ there exists a unique integral kernel $\mathcal{R}_\rho(w) \in \WW_{\xi}$ such that
\begin{equation} \label{eq:defofrenormkern}
({R}_\rho H(w))(z) = H(\mathcal{R}_\rho(w)(z)) .
\end{equation}
If $w$ is symmetric then also $\mathcal{R}_\rho(w)$ is symmetric. If $w(z)$ is invariant under rotations for all $z \in D_{1/2}$
than also $\mathcal{R}_\rho(w)(z)$ is invariant under rotations for all $z \in D_{1/2}$.
\end{theorem}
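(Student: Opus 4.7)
The plan is to mimic the construction carried out in Section \ref{sec:ini} for the initial Feshbach transformation. By Lemma \ref{renorm:thm1}, for each $z \in D_{1/2}$ the pair $(H(w(\zeta)), H_{0,0}(w(\zeta)))$ with $\zeta := E_\rho[w]^{-1}(z)$ is a Feshbach pair for $\chi_\rho$, so the right hand side of \eqref{eq:defofrenorm} is well defined. Writing $H(w(\zeta)) = H_{0,0}(w(\zeta)) + W[w(\zeta)]$ and expanding $(H_{0,0}(w(\zeta)) + \chib_\rho W[w(\zeta)]\chib_\rho)^{-1}$ as a Neumann series---whose convergence follows from the smallness encoded in $w \in \mathcal{B}(\rho/8,\rho/8,\rho/8)$ together with resolvent bounds on $F[w(\zeta)] = \chib_\rho^2 / w_{0,0}(\zeta)$ guaranteed by $\|\partial_r w_{0,0}(\zeta) - 1\|_\infty \le \rho/8$ and $|w_{0,0}(\zeta,0) + \zeta|\le \rho/8$---the Feshbach map becomes a sum over $L$-fold products of $W$-factors separated by resolvents. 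Normal-ordering each such product via Wick's theorem (Theorem \ref{thm:wicktheorem}) and the pull-through formula produces a formal sum of normal-ordered operators parametrized by integral kernels on $\underline{B}_\rho^{M+N}$, exactly in the spirit of \eqref{eq:defofwmnschlange}--\eqref{eq:defofV}. Subsequent application of $S_\rho$ rescales momenta by $\rho^{-1}$ to produce kernels on $\underline{B}_1^{M+N}$; after symmetrization these define the candidate components $\mathcal{R}_\rho(w)_{M,N}(z)$.

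To upgrade this formal construction to the statement $\mathcal{R}_\rho(w) \in \WW_\xi$ one needs $\xi$-weighted norm estimates analogous to \eqref{eq:feb2:1}--\eqref{eq:feb2:3}: each term of order $L$ in the Neumann series contributes at most $\mathrm{const}^L (\gamma/\rho)^L$ to $\|\mathcal{R}_\rho(w)_{\geq 1}\|_\xi^\#$ (with $\gamma = \rho/8$), so that the series converges absolutely, uniformly in $z \in D_{1/2}$. The scaling factor $\rho$ coming from $S_\rho$ in momentum space combines with the requirement $\xi \le 1/2$ to make the $\xi$-weighted sum converge; this is precisely the estimate established in detail in \cite{BCFS03} and \cite{HH10}, which can be cited here with only the bookkeeping of the extra polarization label $\lambda \in \Z_2$ needing comment. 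Analyticity of $z \mapsto \mathcal{R}_\rho(w)(z)$ then follows from analyticity of $z \mapsto E_\rho[w]^{-1}(z)$ on $D_{1/2}$ (Lemma \ref{renorm:thm3}) and the uniform convergence. Uniqueness of $\mathcal{R}_\rho(w)(z)$ is immediate from the injectivity of $H$ (Theorem \ref{thm:injective}), and this also ensures consistency of the symmetrization.

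For the symmetry statement, if $w$ is symmetric then $H(w(\overline z)) = H(w(z))^*$ by Lemma \ref{lem:symmetry}, hence $E_\rho[w](\overline z) = \overline{E_\rho[w](z)}$ and so $E_\rho[w]^{-1}(\overline z) = \overline{E_\rho[w]^{-1}(z)}$. Because $\chi_\rho = \chi_\rho(H_f)$ is self-adjoint the Feshbach map preserves adjoints, and $\Gamma_\rho$ is unitary; thus $(R_\rho H(w))(\overline z) = ((R_\rho H(w))(z))^*$, and Lemma \ref{lem:symmetry} together with uniqueness yields symmetry of $\mathcal{R}_\rho(w)$. For rotation invariance, observe that $\UU_\FF(R)$ commutes with $H_f$ (hence with $\chi_\rho$, $\chib_\rho$, and $H_{0,0}(w(\zeta))$) and also with the dilation $\Gamma_\rho$, since both $\Gamma_\rho$ and $\UU_\FF(R)$ act as second quantizations of commuting one-particle operators on $\hh$. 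Consequently if $H(w(z))$ is rotation invariant for all $z \in D_{1/2}$ then so is $(R_\rho H(w))(z)$, and Lemma \ref{lem:wHinvequiv}(i) transfers the invariance back to $\mathcal{R}_\rho(w)(z)$.

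The principal obstacle is the convergence analysis in the second paragraph: one must track the combinatorial factors from the Wick expansion together with the $C^1[0,1]$-control in $r$ required by $\|\cdot\|^\#$, in particular estimating $\partial_r$ of the normal-ordered kernels by a Leibniz expansion and then bounding each summand via the analogues of \eqref{eq1:ini:lemelemestimates}--\eqref{eq3:ini:lemelemestimates} (now for the scaled resolvent $F[w(\zeta)]$ on $\underline{B}_\rho$). The hope is to appeal directly to the version of this result proved in \cite{HH10} rather than redoing the bookkeeping.
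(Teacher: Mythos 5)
Your proposal is correct and follows essentially the same route as the paper: the existence of the kernel and the $\xi$-weighted convergence estimates are delegated to \cite{BCFS03} and \cite{HH10} (Theorem 32), uniqueness comes from Theorem~\ref{thm:injective}, and symmetry and rotation invariance are tracked through the Feshbach map and the scaling $S_\rho$. You supply slightly more detail than the paper on two points it leaves implicit---the verification that $\UU_\FF(R)$ commutes with $\Gamma_\rho$ (needed so $S_\rho$ preserves rotation invariance), and the chain $E_\rho[w](\overline z)=\overline{E_\rho[w](z)}$, $E_\rho[w]^{-1}(\overline z)=\overline{E_\rho[w]^{-1}(z)}$ in the symmetry argument---but these are elaborations, not deviations.
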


A proof of  the existence of the integral kernel as stated in  Theorem \ref{thm:maingenerala} can be found in \cite{BCFS03}
or  \cite{HH10} (Theorem 32). The uniqueness follows from Theorem  \ref{thm:injective}. The statement about  the  rotation invariance
can be seen as follows. If $w(z)$ is rotation invariant for all $z \in D_{1/2}$, then
$H(w(z))$ and $H_{0,0}(w(z))$ and $E_\rho[w](z)$ are rotation
invariant for all $z \in D_{1/2}$, by Lemma \ref{lem:wHinvequiv}.
In that case it follows from the definition of the Feshbach map \eqref{eq:defoffesh} that the right hand side of
\eqref{eq:defofrenorm} is rotation invariant. Now \eqref{eq:defofrenormkern} and  Lemma  \ref{lem:wHinvequiv}
imply that $\mathcal{R}_\rho(w)(z)$ is rotation invariant for all $z \in D_{1/2}$.
The statement about the symmetry
follows from  Lemma  \ref{lem:symmetry}   and the fact  that the Feshbach transformation, the rescaling of the energy, and
reparameterization of the spectral parameter  preserve the  symmetry property.

\begin{theorem} \label{codim:thm1} For any positive numbers $\rho_0 \leq 1/2$ and $\xi_0 \leq 1/2$ there exist numbers $\rho, \xi, \epsilon_0$ satisfying
$\rho \in (0, \rho_0]$, $\xi \in (0, \xi_0]$, and $0 < \epsilon_0 \leq \rho/8$
such that the following property holds,
\begin{equation} \label{codim:thm1:eq}
\mathcal{R}_\rho : \mathcal{B}_0(\epsilon, \delta_1, \delta_2 ) \to
\mathcal{B}_0( \epsilon + \delta_2/2 , \delta_2/2 , \delta_2/2 ) \quad , \quad \forall \ \epsilon, \delta_1, \delta_2 \in [0, \epsilon_0) .
\end{equation}
\end{theorem}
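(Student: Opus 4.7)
The plan is to invoke the renormalization-contraction machinery of \cite{HH10,BCFS03} with one crucial adaptation: Lemma \ref{lem:wHinvequiv}(ii) implies that for every $w \in \mathcal{B}_0$ and every $z \in D_{1/2}$ one has $w_{1,0}(z) = w_{0,1}(z) = 0$, so the potentially dangerous components with $m+n=1$ are absent. By Theorem \ref{thm:maingenerala} the renormalization map $\mathcal{R}_\rho$ preserves rotation invariance, so this vanishing propagates along iteration.

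First I would fix $\epsilon_0 \leq \rho/8$, so that $\mathcal{B}_0(\epsilon,\delta_1,\delta_2) \subset \mathcal{B}(\rho/8,\rho/8,\rho/8)$ for all $\epsilon,\delta_1,\delta_2 \in [0,\epsilon_0)$. Lemmas \ref{renorm:thm3} and \ref{renorm:thm1} then make $\zeta = E_\rho[w]^{-1}(z)$ and the Feshbach pair for $\chi_\rho$ well defined. Expanding the Feshbach operator in a Neumann series in $T^{-1}\chib_\rho W \chib_\rho$ with $T = H_{0,0}(w(\zeta))$ and $W = W[w(\zeta)]$, normal-ordering via Theorem \ref{thm:wicktheorem}, and applying the scaling $S_\rho$, one arrives at an explicit integral-kernel representation of $\mathcal{R}_\rho(w)$ entirely analogous to \eqref{eq:defofwmnschlange}. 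By rotation invariance every factor $w_{m_l,n_l}$ appearing in this expansion satisfies $m_l + n_l \geq 2$, and the output components with $M + N = 1$ vanish identically.

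The heart of the matter is the scaling computation. Under $S_\rho$ the $(M,N)$-component takes the form $\rho^{M+N-1} w_{M,N}(\rho\,\cdot,\rho\,\cdot)$, contributing a factor $\rho^{-1}(\rho/\xi)^{M+N}$ in the norm $\|\cdot\|_\xi^\#$. Since rotation invariance forces $M+N \in \{0,2,3,\ldots\}$, the smallest non-trivial scaling is $\rho/\xi^2$, occurring at $M+N=2$, and since each internal vertex has $m_l+n_l \geq 2$ one may bound a generic $L$-th order Neumann term by $(C\delta_2)^L$ times a gain $(\rho/\xi^2)^{\ge 1}$, where the resolvent bound $\|\chib_\rho T^{-1}\|\leq C$ from the Feshbach estimate controls the constants. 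Choosing first $\xi \leq \xi_0$, then $\rho \in (0,\rho_0]$ with $\rho/\xi^2$ small enough to contribute a factor $\tfrac{1}{2}$ per vertex, and finally $\epsilon_0$ small so that $\delta_2 \leq \epsilon_0$ makes all geometric sums bounded by $\delta_2/2$, one obtains
\begin{align*}
\|\mathcal{R}_\rho(w)_{\geq 1}\|_\xi &\leq \delta_2/2, \\
\sup_{z \in D_{1/2}}|\mathcal{R}_\rho(w)_{0,0}(z,0)+z| &\leq \delta_2/2, \\
\sup_{z \in D_{1/2}}\|\partial_r \mathcal{R}_\rho(w)_{0,0}(z,\cdot)-1\|_\infty &\leq \epsilon + \delta_2/2.
\end{align*}
In the last bound the $\epsilon$ contribution comes from the transport of $w_{0,0}$ itself under scaling (since $\partial_r[\rho^{-1}w_{0,0}(\zeta,\rho r)] = (\partial_r w_{0,0})(\zeta,\rho r)$, whose sup is preserved), while the $\delta_2/2$ contribution collects all Neumann terms of order $L \geq 1$; the second bound uses the defining identity $\rho^{-1}w_{0,0}(\zeta,0) = -z$ for $E_\rho[w]^{-1}$.

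The principal obstacle is the combinatorial bookkeeping needed to verify that every vertex in the Neumann expansion contributes a contractive factor $\leq \tfrac{1}{2}$ in the $\|\cdot\|_\xi^\#$ norm; because rotation invariance forces every active index to satisfy $m_l+n_l \geq 2$, no vertex contributes a non-decaying $\rho^0$ factor, and the scheme closes. Most of this bookkeeping is already contained in \cite{HH10} (specifically the analogue of its contraction theorem), so the genuinely new content is limited to propagating rotation invariance through $\mathcal{R}_\rho$ (which Theorem \ref{thm:maingenerala} hands us) and tracking the resulting improvement in the contraction exponents.
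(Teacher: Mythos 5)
Your approach is essentially identical to the paper's: the paper proves Theorem \ref{codim:thm1} simply by invoking Theorem 38 of \cite{HH10}, whose contraction argument rests on the absence of terms linear in creation/annihilation operators, and then noting that Lemma \ref{lem:wHinvequiv} together with the preservation of rotation invariance under $\mathcal{R}_\rho$ (Theorem \ref{thm:maingenerala}) supplies exactly this absence. Your proposal unpacks what that citation is doing internally — the Neumann/Wick expansion, the rescaling gain $\rho^{M+N-1}$ per output component, and why $M+N\geq 2$ forces contraction — and correctly isolates propagation of rotation invariance through $\mathcal{R}_\rho$ as the only genuinely new ingredient.
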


A proof  of Theorem  \ref{codim:thm1} can be found in \cite{HH10} (Theorem 38).
The proof given there  relies on the  fact that there are no terms which are linear in
creation or annihilation operators. Since by rotation invariance and Lemma \ref{lem:wHinvequiv}
there are no terms which
are linear in creation and annihilation operators,  Theorem  \ref{codim:thm1} follows from
the same proof.
Using the contraction property we can iterate the renormalization transformation.  To this end we
introduce the following Hypothesis.

\vspace{0.5cm}

\noindent
(R) \quad Let $\rho, \xi, \epsilon_0$ are positive numbers such that the contraction property \eqref{codim:thm1:eq}
holds and $\rho \leq 1/4$, $\xi \leq 1/4$ and $\epsilon_0 \leq \rho/8$.

\vspace{0.5cm}

Hypothesis (R)
allows us to iterate the renormalization transformation as follows,
\begin{equation} \nonumber
\mathcal{B}_0(\sfrac{1}{2}\epsilon_0 , \sfrac{1}{2} \epsilon_0 , \sfrac{1}{2} \epsilon_0 ) \stackrel{\mathcal{R}_\rho}{\longrightarrow}
\mathcal{B}_0( [ \sfrac{1}{2}+ \sfrac{1}{4} ] \epsilon_0 , \sfrac{1}{4} \epsilon_0 , \sfrac{1}{4} \epsilon_0 ) \stackrel{\mathcal{R}_\rho}{\longrightarrow} \cdots
\mathcal{B}_0( \Sigma_{l=1}^n \sfrac{1}{2^l} \epsilon_0 , \sfrac{1}{2^n} \epsilon_0 , \sfrac{1}{2^n} \epsilon_0 )  \stackrel{\mathcal{R}_\rho}{\longrightarrow} \cdots   .
\end{equation}

\begin{theorem} \label{thm:bcfsmain} Assume Hypothesis (R).
There exist functions
\begin{align*}
&e_{(0)}[ \cdot ] : \mathcal{B}_0(\epsilon_0/2,\epsilon_0/2,\epsilon_0/2) \to D_{1/2} \\
&\psi_{(0)}[ \cdot ] : \mathcal{B}_0(\epsilon_0/2,\epsilon_0/2,\epsilon_0/2) \to \FF
\end{align*}
such that the following holds.
\begin{itemize}
\item[(a)] For all $w \in \mathcal{B}_0(\epsilon_0/2,\epsilon_0/2,\epsilon_0/2)$,
$$
{\rm dim} \ker \{ H(w(e_{(0)}[w]) \} \geq 1 ,$$
and $\psi_{(0)}[w]$ is a nonzero element in the kernel of $H(w(e_{(0)}[w])$.
\item[(b)] If $w$ is symmetric and $-1/2 < z < e_{(0)}[w]$, then $H(w(z))$ is bounded invertible.
\item[(c)] The function $\psi_{(0)}[ \cdot ]$ is uniformly bounded with bound
$$\sup_{w \in \mathcal{B}_0(\epsilon_0/2,\epsilon_0/2,\epsilon_0/2)} \| \psi_{(0)}[w] \| \leq 4 e^4 . $$
\item[(d)] Let $S$ be an open subset of $\C$ respectively a topological space. Suppose
\begin{eqnarray*}
w(\cdot, \cdot) : &&S \times D_{1/2} \to \WW_\xi^\# \\
&&(s,z) \mapsto w(s,z)
\end{eqnarray*}
is an analytic respectively a c-continuous function such that $w(s)(\cdot) := w(s, \cdot)$ is in $\mathcal{B}_0(\epsilon_0/2,\epsilon_0/2,\epsilon_0/2)$.
Then $s \mapsto e_{(0)}[w(s)]$ and $\psi_{(0)}[w(s)]$ are analytic respectively continuous functions.
\end{itemize}
\end{theorem}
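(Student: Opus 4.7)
The plan is to prove Theorem \ref{thm:bcfsmain} by the standard iterative renormalization scheme, using the contraction property of $\mathcal{R}_\rho$ from Theorem \ref{codim:thm1}. Starting from $w \in \mathcal{B}_0(\epsilon_0/2,\epsilon_0/2,\epsilon_0/2)$, I set $w^{(0)} := w$ and $w^{(n+1)} := \mathcal{R}_\rho(w^{(n)})$. By the iteration diagram displayed just before the theorem, $w^{(n)} \in \mathcal{B}_0((1-2^{-n})\epsilon_0,\, 2^{-n}\epsilon_0,\, 2^{-n}\epsilon_0)$, so the interaction part $w^{(n)}_{\geq 1}$ and the off-zero value $w^{(n)}_{0,0}(z,0)+z$ decay geometrically, and $H(w^{(n)}(z))$ is in this sense driven toward the free operator $H_f - z$ on $\HH_{\rm red}$, whose ground state at $z=0$ is the vacuum $\Omega$.

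For the eigenvalue $e_{(0)}[w]$ in part (a), I would pull the value $0 \in D_{1/2}$ back through the analytic bijections $E_\rho[w^{(n)}]:U[w^{(n)}]\to D_{1/2}$ of Lemma \ref{renorm:thm3}. Define
\[
 \zeta_0^{(N)} := (E_\rho[w^{(0)}])^{-1} \circ (E_\rho[w^{(1)}])^{-1} \circ \cdots \circ (E_\rho[w^{(N-1)}])^{-1}(0).
\]
The definition of $\mathcal{B}_0$ gives $|w^{(n)}_{0,0}(z,0)+z|\le 2^{-n}\epsilon_0$, hence $|E_\rho[w^{(n)}](z)+\rho^{-1}z|\le\rho^{-1}2^{-n}\epsilon_0$, so each inverse $(E_\rho[w^{(n)}])^{-1}$ is, up to an error of order $2^{-n}\epsilon_0$, a contraction by a factor $\rho\le 1/4$. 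Telescoping shows $\{\zeta_0^{(N)}\}$ is Cauchy in $D_{1/2}$, and I set $e_{(0)}[w] := \lim_{N\to\infty}\zeta_0^{(N)}$. More generally, $\zeta_n:=\lim_N\zeta_n^{(N)}$ is well-defined with $\zeta_{n+1}=E_\rho[w^{(n)}](\zeta_n)$ and $\zeta_n\to 0$ geometrically.

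For the eigenvector $\psi_{(0)}[w]$, I would use the isospectrality of the Feshbach map (reconstruction via $Q_\chi$, recorded in the Appendix) combined with the dilation $S_\rho$: if $\phi\in\ker F_{\chi_\rho}(H(w^{(n)}(\zeta_n)))$, then $Q_{\chi_\rho}(H(w^{(n)}(\zeta_n)))\phi\in\ker H(w^{(n)}(\zeta_n))$, and the dilation $\Gamma_\rho^*$ intertwines the kernel of $H(w^{(n+1)}(\zeta_{n+1}))$ with that of $F_{\chi_\rho}(H(w^{(n)}(\zeta_n)))$. Defining the truncated reconstruction
\[
 \psi^{(N)} := Q_{\chi_\rho}(H(w^{(0)}(\zeta_0^{(N)})))\,\Gamma_\rho^* \cdots Q_{\chi_\rho}(H(w^{(N-1)}(\zeta_{N-1}^{(N)})))\,\Gamma_\rho^*\,\Omega,
\]
I would set $\psi_{(0)}[w]:=\lim_N\psi^{(N)}$. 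The main obstacle is establishing the norm-convergence and non-triviality of this infinite product. The estimate I would push through is
\[
 \bigl\|\,Q_{\chi_\rho}(H(w^{(n)}(\zeta_n^{(N)})))\,\Gamma_\rho^*\,\Omega - \Omega\,\bigr\|\le C\,2^{-n},
\]
obtained by expanding $Q_{\chi_\rho}=\chi_{\rho}-\bar\chi_\rho(\bar\chi_\rho H\bar\chi_\rho)^{-1}\bar\chi_\rho W[w^{(n)}]$ in a Neumann series and bounding the result by $\|w^{(n)}_{\ge1}\|_\xi$ via \eqref{eq:opestimatgeq123}, noting that $\Gamma_\rho^*\Omega=\Omega$. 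The telescoping product $\prod_n(1+C\,2^{-n})\le\exp(2C)$ then gives both convergence and the uniform bound in part (c), yielding the claimed constant $4e^4$ with $C$ chosen appropriately from the explicit bounds on $\epsilon_0,\rho,\xi$.

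For part (b), I would argue iteratively via the Feshbach isomorphism (Theorem \ref{fesh:thm2} in Appendix C): for symmetric $w$ and real $z<e_{(0)}[w]$, $H(w(z))$ is bounded invertible on $\HH_{\rm red}$ iff $F_{\chi_\rho}(H(w(z)))$ is, iff $H(w^{(1)}(E_\rho[w](z)))$ is, and so on; after finitely many steps the pulled-forward spectral parameter $z_n$ stays bounded away from $\zeta_n^{(\infty)}$ on the negative real axis and $H(w^{(n)}(z_n))$ is invertible by a perturbative argument around $H_f-z_n$. Finally, for part (d), I would show by induction on $N$ that $s\mapsto\zeta_n^{(N)}(s)$ and $s\mapsto\psi^{(N)}[w(s)]$ are analytic respectively c-continuous: analyticity is preserved by $\mathcal{R}_\rho$ (as in \cite{HH10}, and by Lemma \ref{lem:pcontop} for the c-continuous case), by $E_\rho[\cdot]^{-1}$ (open mapping / implicit function), and by the bounded operator $Q_{\chi_\rho}\Gamma_\rho^*$; the uniform convergence of the sequences in $N$ established above then passes analyticity and continuity to the limits $e_{(0)}[w(s)]$ and $\psi_{(0)}[w(s)]$.
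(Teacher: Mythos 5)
The paper does not actually prove Theorem \ref{thm:bcfsmain}; it cites \cite{HH10} (Theorems 42 and 43), whose argument in turn follows \cite{BCFS03}. Your iterate-and-reconstruct scheme (pull $0$ back through the bijections $E_\rho[w^{(n)}]^{-1}$ to get $e_{(0)}[w]$, build $\psi_{(0)}[w]$ as a norm-convergent product $Q_{\chi_\rho}\Gamma_\rho^*\cdots Q_{\chi_\rho}\Gamma_\rho^*\Omega$ bounded by $\prod_n(1+C2^{-n})$, and pass analyticity and c-continuity to the limit by uniform convergence) is exactly the standard construction from those references, so the approach matches; the one step you should make explicit, which the sketch glosses over, is the verification that the limit vector really lies in $\ker H(w(e_{(0)}[w]))$ --- since $\Omega$ is not in the kernel at any finite level, one needs the supplementary estimate $\|H(w(e_{(0)}[w]))\psi^{(N)}\|\to 0$ (coming from $|w^{(N)}_{0,0}(\zeta_N,0)|\le 2^{-N}\epsilon_0$ and the Feshbach identity), rather than convergence and non-triviality of $\psi^{(N)}$ alone.
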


A proof of Theorem
\ref{thm:bcfsmain} is given in  \cite{HH10} (Theorem 42 and Theorem 43). We want to note that the
proof which can be found there of part (a) of Theorem \ref{thm:bcfsmain}  is from   \cite{BCFS03}.

\section{Main Theorem}

\label{sec:prov}

In this section, we prove Theorem \ref{thm:main1}, the main result of this paper.
Its proof is based on Theorems \ref{thm:inimain1} and \ref{thm:bcfsmain}.

\vspace{0.5cm}

\noindent
{\it Proof of Theorem \ref{thm:main1}.} Choose $\rho, \xi, \epsilon_0$ such that Hypothesis (R) holds. Choose $g_0$ such that the conclusions
of Theorem \ref{thm:inimain1} hold for $\delta_1=\delta_2=\delta_3 = \epsilon_0/2$. Let   $ g \in D_{g_0}$. It  follows from
Theorem \ref{thm:bcfsmain} (a)
 that
$\psi_{(0)}[w^{(0)}(g,\beta,\sigma)]$ is a nonzero element in the kernel of $H_{g,\beta,\sigma}^{(0)}(e_{(0)}[w^{(0)}(g,\beta,\sigma)])$.
From Theorem \ref{thm:inimain1} we know that there exists a finite $C_Q$ such that
\begin{equation} \label{eq:qdef11}
\sup_{ (g,\beta,\sigma,z) \in B_0 \times \R \times \R_+ \times D_{1/2} } | Q_{\chi^{(I)}}(g,\beta,\sigma,z) \| \leq C_Q .
\end{equation}
 From the Feshbach property, Theorem  \ref{thm:fesh}, it follows that
\begin{equation} \label{eq:psiis}
\psi_{\beta,\sigma}(g) := Q_{\chi^{(I)}}(g,\beta,\sigma, e_{(0)}[w^{(0)}(g,\beta,\sigma)]) \psi_{(0)}[w^{(0)}(g,\beta,\sigma)]  ,
\end{equation}
is nonzero and an eigenvector of $H_{g,\beta,\sigma}$ with eigenvalue $E_{\beta,\sigma}(g) := E_{\rm at} + e_{(0)}[w^{(0)}(g,\beta,\sigma)] $.
By Theorem \ref{thm:inimain1}, we know that $(g,z) \mapsto w^{(0)}(g,\beta,\sigma,z)$ is analytic and hence by Theorem \ref{thm:bcfsmain} (d) it follows that
 $g \mapsto \psi_{(0)}[w^{(0)}(g,\beta,\sigma)] $ and $g \mapsto E_{\beta,\sigma}(g)$
are analytic. This implies that $g \mapsto \psi_{\beta,\sigma}(g)$ is analytic because of the analyticity
of $(g,z) \mapsto Q_{\chi^{(I)}}(g,\beta,\sigma,z)$ and  \eqref{eq:psiis}.
By Theorem \ref{thm:inimain1}, we know that $(\sigma,z) \mapsto w^{(0)}(g,\beta,\sigma,z)$ is c-continuous.
By Theorem \ref{thm:bcfsmain} (d) it now follows that $\sigma \mapsto \psi_{(0)}[w^{(0)}(g,\beta,\sigma)] $ and $\sigma \mapsto E_{\beta,\sigma}(g)$
are continuous. This implies that $\sigma \mapsto \psi_{\beta,\sigma}(g)$ is continuous because of the continuity of
of $(\sigma,z) \mapsto Q_{\chi^{(I)}}(g,\beta,\sigma,z)$ and  \eqref{eq:psiis}.
As a consequence of the  definition it follows that we have the bound
\begin{equation} \label{eq:boundonE11}
\sup_{(g,\beta,\sigma) \in D_{g_0} \times \R \times \R_+} | E_{\beta,\sigma}(g) | \leq E_{\rm at} + 1/2 .
\end{equation}
By \eqref{eq:psiis}, Theorem \ref{thm:bcfsmain} (c), and the bound in \eqref{eq:qdef11} we have
\begin{equation} \label{eq:boundonpsi11}
\sup_{ (g,\beta,\sigma) \in D_{g_0} \times \R \times \R_+ } | \psi_{\beta,\sigma}(g) | \leq C_Q 4 e^4 .
\end{equation}
By possibly restricting to a smaller ball than $D_{g_0}$ we can ensure that the projection operator
\begin{equation} \label{eq:projectionlambda0}
P_{\sigma,\beta}(g)
:= \frac{ \left| \psi_{\beta,\sigma}(g) \right\rangle \left\langle \psi_{\beta,\sigma}(\overline{g}) \right| }{ \left\langle \psi_{\beta,\sigma}(\overline{g}) , \psi_{\beta,\sigma}({g}) \right\rangle } ,
\end{equation}
is well defined for all $( \beta,\sigma) \in \R \times \R_+$ and $g \in D_{g_0}$,
which is shown as follows. First observe that the denominator of \eqref{eq:projectionlambda0} is an analytic
function of $g$. By fixing the normalization we can assume that
$\langle \psi_{\beta,\sigma}(0) , \psi_{\beta,\sigma}(0) \rangle = 1$. If we estimate
the remainder of the Taylor expansion of the denominator of \eqref{eq:projectionlambda0} using analyticity and
the uniform bound \eqref{eq:boundonpsi11} it follows, by possibly choosing $g_0$ smaller but still positive, that there exists a positive constant $c_0$ such that
$| \langle \psi_{\beta,\sigma}(\overline{g}) , \psi_{\beta,\sigma}({g}) \rangle | \geq c_0 $
for all $|g| \leq {g}_0$.
 Now
using the corresponding property of $\psi_{\beta,\sigma}(g)$, it follows from \eqref{eq:projectionlambda0} that
$P_{\beta,\sigma}(g)$ is analytic on $D_{g_0}$, continuous in $\sigma$ and that
\begin{equation} \label{eq:boundonP}
\sup_{(g,\beta,\sigma) \in D_{g_0} \times \R \times \R_+} \| P_{\sigma,\beta}(g) \| < \infty .
\end{equation}
If $g \in D_{g_0} \cap \R$, then by definition \eqref{eq:projectionlambda0} we see that
$P_{\beta,\sigma}(g)^* = P_{\beta,\sigma}(\overline{g})$.

The kernel
$w^{(0)}(g,\beta,\sigma)$ is symmetric for $g \in D_{g_0} \cap \R$, see Theorem \ref{thm:inimain1}.
It now follows from Theorem \ref{thm:bcfsmain} (b) that $H^{(0)}_{g,\beta,\sigma}(z)$ is bounded invertible if $z \in (-\frac{1}{2}, e_{(0,\infty)}[w^{(0)}(g,\beta,\sigma)] )$. Applying the Feshbach property, Theorem  \ref{thm:fesh},
it follows that $H_{g,\beta,\sigma} - \zeta$ is bounded invertible for $\zeta \in ( E_{\rm at} -\frac{1}{2}, E_{\rm at} + e_{(0,\infty)}[w^{(0)}(g,\beta,\sigma)] )$.
For $\zeta \leq E_{\rm at} - 1/2$ the bounded invertibility of $H_{g,\beta,\sigma} - \zeta$ for $g$ sufficiently small follows from
the estimate $$\| ( H_0 - \zeta )^{-1} W_{g,\beta,\sigma} \| \leq 4 \| ( H_0 - E_{\rm at} + 2 )^{-1} W_{g,\beta,\sigma} \| \leq C|g|,$$
where in the first inequality we used that $E_{\rm at}$ is the infimum of the spectrum of $H_0$ and in the second inequality we used
the estimate of the second factor in \eqref{eq:FestimateAA}, which is given in the proof of Theorem \ref{ini:thm1}.
Thus $E_{\beta,\sigma}(g) = \inf \sigma(H_{g,\beta,\sigma})$ for real $g \in D_{g_0} \cap \R$.
\qed

\vspace{0.5cm}

We want to note that the proof provides an  explicit  bound on the ground state energy, Eq.  \eqref{eq:boundonE11}.
Next we show that Theorem \ref{thm:main1} implies Corollary \ref{cor:main1}.

\vspace{0.5cm}

\noindent
{\it Proof of Corollary \ref{cor:main1}.}
We use Cauchy's formula.
 For any positive $r$ which is less than $g_0$, we have
\begin{equation} \label{eq:cauchy223}
E_{\beta,\sigma}^{(n)} = \frac{1 }{2 \pi i} \int_{|z|= r} \frac{E_{\beta,\sigma}(z)}{z^{n+1}} d z , \
\psi_{\beta,\sigma}^{(n)} = \frac{1 }{2 \pi i} \int_{|z|= r} \frac{\psi_{\beta,\sigma}(z)}{z^{n+1}} d z , \
P_{\beta,\sigma}^{(n)} = \frac{1 }{2 \pi i} \int_{|z|= r} \frac{P_{\beta,\sigma}(z)}{z^{n+1}} d z .
\end{equation}
The first equation of  \eqref{eq:cauchy223}
 implies  that $|E_{\beta,\sigma}^{(n)}| \leq r^{-n} \sup_{(g,\beta,\sigma) \in D_{g_0} \times \R \times \R_+} | E_{\beta,\sigma}(g) |$
and that $\sigma \mapsto E_{\beta,\sigma}^{(n)}$ is continuous on $\R_+$ by dominated convergence.
Similarly we conclude by \eqref{eq:cauchy223}
that there exists a finite constant $C$ such that $\| \psi_{\beta,\sigma}^{(n)}\| \leq C r^{-n}$,
respectively $\|P^{(n)}_{\beta,\sigma} \| \leq C r^{-n}$,
and that $ \psi_{\beta,\sigma}^{(n)}$, respectively $P_{\beta,\sigma}^{(n)}$, are continuous functions of $ \sigma \in \R_+$.
Finally observe that $(-1)^N H_{g,\beta,\sigma}(-1)^N = H_{-g,\beta,\sigma}$ where $N$ is the closed linear operator on $\FF$ with $N \upharpoonright \FF^{(n)}(\hh) = n$.
This implies that the ground state energy $E_{\beta,\sigma}(g)$ cannot depend on odd powers of $g$.
\qed

\section*{Acknowledgements}

 D. H. wants to thank J. Fr\"ohlich, G.M. Graf, M. Griesemer, and A. Pizzo for interesting conversations.
Moreover, D.H. wants to thank ETH Zurich   for hospitality and financial support.

\section*{Appendix A: Elementary Estimates and the Pull-through Formula}
\label{sec:appA}

To give a precise meaning to expressions which occur in \eqref{eq:defhmn11} and \eqref{eq:defhlinemn}, we introduce the following.
For $\psi \in \FF$ having finitely many particles we have
\beqn \label{eq:defofa}
\left[ a(K_1) \cdots a(K_m) \psi \right]_n(K_{m+1},...,K_{m+n}) = \sqrt{\frac{(m+n)!}{n!}} \psi_{m+n}(K_{1},...,K_{m+n}) ,
\eeqn
for all $K_1,...,K_{m+n} \in \underline{\R}^3 := \R^3 \times \Z_2$, and
using Fubini's theorem it is elementary to see that the vector valued map
 $(K_1,...,K_m) \mapsto a(K_1) \cdots a(K_m) \psi$ is
an element of $L^2((\underline{\R}^{3})^m; \FF)$. The following lemma states the well
known pull-through formula.
For a proof see for example \cite{BFS98,HH10}.
\begin{lemma} \label{lem:pullthrough}
Let $f : \R_+ \to \C$ be a bounded measurable function. Then for all $K \in \R^3 \times \Z_2$
$$
f(H_f) a^*(K) = a^*(K) f(H_f + \omega(K) ) , \quad a(K) f(H_f) = f(H_f + \omega(K) ) a(K) .
$$
\end{lemma}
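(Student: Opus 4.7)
The plan is to verify both identities on the dense subspace of Fock vectors with only finitely many particles, using the explicit formula \eqref{eq:defofa} for the annihilation operator together with the fact that $H_f$ acts on an $n$-particle state as multiplication by $\omega(K_1)+\cdots+\omega(K_n)$.

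For the annihilation formula, I would fix such a $\psi$. The bounded Borel functional calculus gives $[f(H_f)\psi]_m(K_1,\ldots,K_m)=f\bigl(\sum_{i=1}^m\omega(K_i)\bigr)\psi_m(K_1,\ldots,K_m)$ almost everywhere. Combining this with \eqref{eq:defofa}, the $n$-th component of $a(K)f(H_f)\psi$ at $(K_1,\ldots,K_n)$ equals
$$
\sqrt{n+1}\,f\!\Bigl(\omega(K)+\sum_{i=1}^n\omega(K_i)\Bigr)\psi_{n+1}(K,K_1,\ldots,K_n),
$$
and the same expression arises from $f(H_f+\omega(K))a(K)\psi$ by applying the spectral action of $H_f+\omega(K)$ to the $n$-particle vector $a(K)\psi$. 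This yields the identity as $L^2((\underline{\R}^3)^n;\C)$-valued maps in $K$, and hence pointwise almost everywhere, which is precisely the intended distributional meaning of the annihilation pull-through.

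For the creation formula, the intrinsic statement is the smeared one: for every $h\in\hh$,
$$
f(H_f)\,a^*(h) = \int dK\,h(K)\,a^*(K)\,f(H_f+\omega(K))
$$
on the finite-particle domain. I would deduce this either by a directly analogous componentwise computation using the definition \eqref{eq:formala}, or by taking the adjoint of the already-proven annihilation identity with $\overline{f}$ in place of $f$ and using $a^*(h)=a(\bar h)^*$ restricted to the finite-particle domain, where no boundary issue from the unboundedness of $a^\#(K)$ intrudes.

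The only technical point is the extension from nice $f$ (say, continuous and bounded) to arbitrary bounded measurable $f$. This is routine: on each fixed $n$-particle sector both sides act as multiplication by $f$ composed with a fixed measurable function of the momenta, so stability under bounded pointwise limits follows by dominated convergence. Accordingly the "hard part" is simply keeping the distributional interpretation of $a^*(K)$ straight via smearing against test functions $h\in\hh$; there is no genuine analytical obstacle.
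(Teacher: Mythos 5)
The paper does not supply its own proof of Lemma~\ref{lem:pullthrough}; it simply cites \cite{BFS98,HH10}. Your argument is correct and is precisely the standard proof found in those references: verify componentwise on the dense subspace of finite-particle vectors using the explicit action \eqref{eq:defofa} of the annihilation operator and the fact that, by the spectral theorem, $f(H_f)$ acts on the $n$-particle sector as multiplication by $f\bigl(\sum_{i\le n}\omega(K_i)\bigr)$; then read off the creation identity either by the dual componentwise computation or by taking adjoints. Two small remarks. First, the final paragraph about extending from continuous to arbitrary bounded measurable $f$ is superfluous: the Borel functional calculus already gives the multiplication-operator formula for $f(H_f)$ on each sector for any bounded Borel $f$, so there is nothing to extend; you in fact invoked this at the start. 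Second, with the paper's convention $a(h) := (a^*(h))^*$, the correct adjoint relation on the finite-particle domain is $a^*(h)=a(h)^*$, not $a^*(h)=a(\bar h)^*$; this is only a notational slip and does not affect the adjoint argument, since after smearing the annihilation identity against $h$ one gets $a(h)f(H_f)=\int dK\,\overline{h(K)}\,f(H_f+\omega(K))\,a(K)$ on finite-particle vectors, and taking adjoints with $\overline f$ in place of $f$ yields the smeared creation identity.
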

Let $w_{m,n}$ be function on $\R_+ \times ({\underline{\R}^{3})}^{n+m}$ with values
in the linear operators of $\HH_{\rm at}$ or the complex numbers.
To such a function we associate the quadratic form
\begin{eqnarray*}
q_{w_{m,n}}(\varphi,\psi) := \int_{{(\underline{\R}^3)}^{m+n}} \frac{ dK^{(m,n)}}{|K^{(m,n)}|^{1/2}}
 \left\langle a(K^{(m)}) \varphi ,w_{m,n}(H_f, K^{(m,n)}) a(\widetilde{K}^{(n)}) \psi \right \rangle ,
\end{eqnarray*}
defined  for all $\varphi$ and $\psi$ in $\HH$ respectively $\FF$, for which the right hand side is defined as a complex number.
To associate an operator to the quadratic form we will use the following lemma.
\begin{lemma} \label{kernelopestimate} Let $\underline{X} = \R^3 \times \Z_2$. Then
\begin{eqnarray} \label{eq:defofH}
| q_{w_{m,n}}(\varphi,\psi) | \leq \| w_{m,n} \|_\sharp \| \varphi \| \| \psi \| ,
\end{eqnarray}
where
\begin{eqnarray*}
 \| w_{m,n} \|_\sharp^2 :=
\int_{\underline{X}^{m+n}} \frac{d K^{(m,n)}}{|K^{(m,n)}|^2} \sup_{r \geq 0} \left[ \|w_{m,n}(r,K^{(m,n)}) \|^2 \prod_{l=1}^m \left\{ r + \Sigma[K^{(l)}] \right\}
 \prod_{\widetilde{l}=1}^n \left\{ r + \Sigma[\widetilde{K}^{(\widetilde{l})}] \right\} \right] .
\end{eqnarray*}
\end{lemma}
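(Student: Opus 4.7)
The estimate will follow from two applications of the Cauchy--Schwarz inequality combined with iterated use of the pull-through formula in Lemma \ref{lem:pullthrough}. First, bounding the inner product inside the integrand pointwise by the product of norms gives
\begin{equation*}
|q_{w_{m,n}}(\varphi,\psi)| \leq \int \frac{dK^{(m,n)}}{|K^{(m,n)}|^{1/2}} \, \|a(K^{(m)})\varphi\| \cdot \bigl\|w_{m,n}(H_f, K^{(m,n)})\, a(\widetilde K^{(n)})\psi\bigr\|.
\end{equation*}

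Next I would apply a weighted Cauchy--Schwarz to the $dK^{(m,n)}$-integral. The weights are chosen so that one side inserts operator factors $(H_f + \Sigma[K^{(l)}])^{1/2}$ for $l = 1, \dots, m$ and $(H_f + \Sigma[\widetilde K^{(\tilde l)}])^{1/2}$ for $\tilde l = 1, \dots, n$, which combine with $\|w_{m,n}(H_f, K^{(m,n)})\|$. By the spectral theorem for $H_f$, the resulting pointwise integrand is bounded by
\begin{equation*}
|K^{(m,n)}|^{-2} \sup_{r \geq 0} \Bigl[ \|w_{m,n}(r, K^{(m,n)})\|^2 \prod_{l=1}^m (r + \Sigma[K^{(l)}]) \prod_{\tilde l=1}^n (r + \Sigma[\widetilde K^{(\tilde l)}]) \Bigr],
\end{equation*}
which integrates over $K^{(m,n)}$ to $\|w_{m,n}\|_\sharp^2$.

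The reciprocal weights on the other side of Cauchy--Schwarz must be absorbed into $\|a(K^{(m)})\varphi\|$ and $\|a(\widetilde K^{(n)})\psi\|$. This is done by iterated pull-through: since $a(K_l) f(H_f) = f(H_f + \omega(K_l)) a(K_l)$, a factor $(H_f + \Sigma[K^{(l-1)}])^{-1/2}$ sitting to the right of $a(K_l)$ becomes $(H_f + \Sigma[K^{(l)}])^{-1/2}$ sitting to the left, and iterating this for $l = 1, \dots, m$ reduces the weighted $K^{(m)}$-integral of $\|a(K^{(m)}) \varphi\|^2$ to a quantity controlled by $\|\varphi\|^2$; an analogous estimate applies on the $\widetilde K^{(n)}$ and $\psi$ side. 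The main obstacle is the combinatorial bookkeeping: one must arrange the pull-through iteration so that the \emph{ordered} partial sums $\Sigma[K^{(l)}]$ appear in exactly the positions dictated by the definition of $\|w_{m,n}\|_\sharp$, which is forced by the successive shifts of $H_f$ by $\omega(K_l)$ performed by each annihilation operator.
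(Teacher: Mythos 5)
Your overall plan (Cauchy--Schwarz, operator insertions of $(H_f + \Sigma[K^{(l)}])^{\pm 1/2}$, the spectral theorem, and the pull-through formula) is in the right spirit and matches the ingredients of the paper's proof. But the \emph{order} of your first two steps creates a genuine gap. Once you bound the inner product pointwise by $\|a(K^{(m)})\varphi\|\cdot\|w_{m,n}(H_f,K^{(m,n)})\,a(\widetilde K^{(n)})\psi\|$, the factor $\|a(K^{(m)})\varphi\|$ is a scalar: there is no longer any place to insert the operator $\prod_{l=1}^m(H_f+\Sigma[K^{(l)}])^{-1/2}$ that is required on the $\varphi$-side, nor any way for the compensating $\prod_{l=1}^m(H_f+\Sigma[K^{(l)}])^{1/2}$ to land next to $w_{m,n}(H_f,\cdot)$. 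A scalar weight $\mu(K^{(m,n)})$ in the weighted Cauchy--Schwarz you propose cannot substitute for this, because the integral $\int dK^{(m)}\,\mu(K^{(m)})\,\|a(K^{(m)})\varphi\|^2$ is not controlled by $\|\varphi\|^2$ for any reasonable scalar weight; this already fails for $m=1$ (one gets $\langle\varphi,d\Gamma(\mu)\varphi\rangle$, which is unbounded). What is actually needed is the operator-weighted identity
\begin{equation*}
\int_{\underline{X}^m} dK^{(m)}\,|K^{(m)}|\,\Bigl\|\prod_{l=1}^m\bigl(H_f+\Sigma[K^{(l)}]\bigr)^{-1/2}a(K^{(m)})\varphi\Bigr\|^2 = \|P_\Omega^\perp\varphi\|^2,
\end{equation*}
whose proof does use pull-through, and in which the operator sits \emph{inside} the Hilbert-space norm.

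The correct sequencing, as in the paper, is to first insert $P[K^{(m)}]P[K^{(m)}]^{-1}$ and $P[\widetilde K^{(n)}]P[\widetilde K^{(n)}]^{-1}$ (with $P[K^{(m)}]:=\prod_{l=1}^m(H_f+\Sigma[K^{(l)}])^{1/2}$) \emph{inside the sesquilinear form}, move the self-adjoint $P[K^{(m)}]$ across the inner product so it lands next to $w_{m,n}(H_f,\cdot)$, and only then apply Cauchy--Schwarz; this produces $\|P[K^{(m)}]\,w_{m,n}(H_f,K^{(m,n)})\,P[\widetilde K^{(n)}]\|$ (which the spectral theorem turns into the $\sup_r$ expression appearing in $\|w_{m,n}\|_\sharp$) together with the two operator-weighted $L^2$-integrals that \eqref{eq:trivialA} controls. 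If you revise step 1 to insert the $P$-factors before taking any norms, the remaining steps of your sketch go through.
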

\begin{proof} We set $P[K^{(n)}] := \prod_{l=1}^n ( H_f + \Sigma[K^l])^{1/2}$ and insert 1's to obtain the trivial identity
\begin{align*}
 | q_{w_{m,n}}(\varphi,\psi) | &=
\Bigg| \int_{\underline{X}^{m+n}} \frac{d K^{(m,n)}}{|K^{(m,n)}|}
 \Big\langle
 P[K^{(m)}] P[K^{(m)}]^{-1} |K^{(m)}|^{1/2} a(K^{(m)}) \varphi , w_{m,n}(H_f ,K^{(m,n)})
 \\
&
\times
P[\widetilde{K}^{(n)}] P[\widetilde{K}^{(n)}]^{-1} | \widetilde{K}^{(n)}|^{1/2} a(\widetilde{K}^{(n)}) \psi \Big\rangle \Bigg| .
\end{align*}
The lemma now follows using the Cauchy-Schwarz
inequality and the following well known identity for $n \geq 1$ and $\phi \in \FF$,
\begin{eqnarray}
\int_{\underline{X}^n}
d K^{(n)} | K^{(n)} | \left\| \prod_{l=1}^n \left[ H_f + \Sigma[K^{(l)}] \right]^{-1/2} a(K^{(n)}) \phi \right\|^2 = \| P_\Omega^\perp \phi \|^2 \label{eq:trivialA}  ,
\end{eqnarray}
where
$P_\Omega^\perp := | \Omega \rangle \langle \Omega |$.
A proof of \eqref{eq:trivialA}  can for example be found in \cite{HH10} Appendix A.
\end{proof}

Provided the form $q_{w_{m,n}}$ is densely defined and $ \| w_{m,n} \|_\sharp$ is a finite real number,
then the form $q_{w_{m,n}}$ determines uniquely a bounded
linear operator  $\underline{H}_{m,n}(w_{m,n})$ such that
 $$
q_{w_{m,n}}(\varphi,\psi ) = \langle \varphi,\underline{H}_{m,n}(w_{m,n}) \psi \rangle ,
$$
for all $\varphi, \psi$ in the form domain of $q_{w_{m,n}}$. Moreover,
$\| \underline{H}_{m,n}(w_{m,n}) \|_{} \leq \| w_{m,n} \|_\sharp$.
Using the pull-through formula and Lemma \ref{kernelopestimate} it is easy to see
that for $w^{(I)}$, defined in \eqref{defofwI}, with $m+n=1,2$, the form
$$
q^{(I)}_{m,n}(\varphi, \psi) := q_{w_{m,n}^{(I)}}( \varphi , (H_f + 1 )^{-\frac{1}{2}(m+n)} (-\Delta + 1 )^{-\frac{1}{2} \delta_{1,m+n}} \psi )
$$
is densely defined and bounded.
Thus we can associate a bounded linear operator $L_{m,n}^{(I)}$ such that
$q_{m,n}^{(I)}(\varphi, \psi) = \langle \varphi , L_{m,n}^{(I)} \psi \rangle$. This allows us to define
$$
\underline{H}_{m,n}(w_{m,n}^{(I)}) := L_{m,n}^{(I)} (H_f + 1 )^{\frac{1}{2}(m+n)} (-\Delta + 1 )^{\frac{1}{2}\delta_{1,m+n}}
$$
as an operator in $\HH$.

\section*{Appendix B: Generalized Wick Theorem}
\label{sec:appB}

For $m,n \in \N_0$ let  $\underline{\mathcal{M}}_{m,n}$ denote the space of measurable functions on $\R_+ \times (\underline{\mathbb{R}}^3)^{m+n}$ with values
in the linear operators of $\HH_{\rm at}$. Let
$$
\underline{\mathcal{M}} = \bigoplus_{m+n=1,2} \underline{\mathcal{M}}_{m,n} .
$$
For  $w \in \underline{\mathcal{M}}$ we define
$$
\underline{W}[w] := \sum_{m+n=1,2} \underline{H}_{m,n}(w) .
$$
The following Theorem is from \cite{BFS98}. It is a generalization of Wick's Theorem.
\begin{theorem} \label{thm:wicktheorem} Let $w \in \underline{\mathcal{M}}$
 and let $F_0,F_1,...,F_L \in   \underline{\mathcal{M}}_{0,0}     $. Then as a formal identity
$$
F_0(H_f) \underline{W}[w] F_1(H_f) \underline{W}[w] \cdots \underline{W}[w] F_{L-1}(H_f) \underline{W}[w] F_L(H_f) = \underline{H}( \widetilde{w}^{({\rm sym})} ) ,
$$
where
\begin{eqnarray}
\lefteqn{ \widetilde{w}_{M,N}(r;K^{(M,N)}) } \nonumber \\
& = &
\sum_{\substack{ m_1 + \cdots m_L = M \\ n_1+...n_L=N }} \sum_{\substack{ p_1, q_1,...,p_L,q_L: \\ m_l+p_l+n_l+q_l \geq 1 }} \prod_{l=1}^L
\left\{ \binom{ m_l + p_l}{ p_l} \binom{ n_l + q_l}{ q_l } \right\}
\nonumber \\
& & \times F_0(r + \tilde{r}_0)
\langle \Omega , \prod_{l=1}^{L-1} \left\{
\underline{W}_{p_l,q_l}^{m_l,n_l}[w]( r + r_l ; K_l^{(m_l,n_l)}) F_{l}(H_f + r + \widetilde{r}_{l}) \right\} \nonumber \\
& &
\underline{W}_{p_L,q_L}^{m_L,n_L}[w]( r + r_L ; K_L^{(m_L,n_L)}) \Omega \rangle F_L(r + \widetilde{r}_L) , \label{eq:complicated}
\end{eqnarray}
with
\begin{align}
& K^{(M,N)} := (K_1^{(m_1,n_1)}, ... , K_L^{(m_L,n_L)}) , \quad K_l^{(m_l,n_l)} := (k_l^{(m_l)},\widetilde{k}_l^{(n_l)}) , \label{eq:KMNdef}
\\
& r_l := \Sigma[\widetilde{K}_1^{(n_1)}] + \cdots + \Sigma[\widetilde{K}_{l-1}^{(n_{l-1})}] + \Sigma[{K}_{l+1}^{(m_{l+1})}] + \cdots + \Sigma[{K}_L^{(m_L)}] , \label{eq:rldef}
\\
& \widetilde{r}_l := \Sigma[\widetilde{K}_1^{(n_1)}] + \cdots + \Sigma[\widetilde{K}_{l}^{(n_{l})}] + \Sigma[{K}_{l+1}^{(m_{l+1})}] + \cdots + \Sigma[{K}_L^{(m_L)}] . \label{eq:rltildedef}
\end{align}
\end{theorem}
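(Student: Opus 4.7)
The plan is to normal-order the left hand side using the canonical commutation relations and the pull-through formula, then to collect the resulting terms and match them against the combinatorial expression for $\widetilde w_{M,N}$. First I would expand each factor $\underline{W}[w] = \sum_{m+n=1,2} \underline{H}_{m,n}(w)$, obtaining for every choice of tuple $(\mu_1,\nu_1,\ldots,\mu_L,\nu_L)$ a term
\begin{equation*}
F_0(H_f)\,\underline{H}_{\mu_1,\nu_1}(w)\,F_1(H_f)\,\cdots\,\underline{H}_{\mu_L,\nu_L}(w)\,F_L(H_f),
\end{equation*}
write each $\underline{H}_{\mu_l,\nu_l}(w)$ in its integral-kernel form with the $\mu_l$ creation operators standing to the left of the $\nu_l$ annihilation operators, and then apply Lemma \ref{lem:pullthrough} repeatedly to push every scalar factor $F_l(H_f)$ to the extreme right. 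This replaces the argument $H_f$ in $F_l$ by $H_f$ plus the sum of $\omega(K_j)$ over all photon momenta that are transported across $F_l$ in the process.

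Second, I would carry out the normal ordering of the resulting purely CCR expression. Each time an annihilation $a(K)$ stands to the left of a later creation $a^*(K')$, applying the commutation relation produces a contraction $\delta_{\lambda\lambda'}\delta(k-k')$ plus the normal-ordered remainder; iterating until every creation operator is on the left of every annihilation operator gives a sum over all pairings of ``inner'' annihilation operators with strictly later ``outer'' creation operators. For each pairing, the deltas collapse pairs of integrals and leave the intermediate $F_l$'s evaluated at $H_f$ shifted by exactly the sum of energies of the photons which have been created but not yet annihilated at that stage. At each site $l$ the uncontracted creation and annihilation operators are labelled $(m_l,n_l)$ and the contracted ones $(p_l,q_l)$; summing over such labellings reproduces the outer sums in \eqref{eq:complicated}.

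Third, I would account for the combinatorial factors and reassemble the kernel. Since the kernel $w_{m_l+p_l,n_l+q_l}$ is totally symmetric in its creation and annihilation arguments (after symmetrization, which is legitimate by Theorem~\ref{thm:injective} applied to $\underline{H}$), the sum over which $p_l$ out of $m_l+p_l$ creation slots are designated as contracted yields the factor $\binom{m_l+p_l}{p_l}$, and similarly $\binom{n_l+q_l}{q_l}$ on the annihilation side. The fully contracted inner piece at each step is exactly what is encoded by $\underline{W}_{p_l,q_l}^{m_l,n_l}[w]$ sandwiched between vacua with the $F_l$'s in between; tracking which photons have been emitted or absorbed by step $l$ gives the shifts $r_l$ and $\widetilde r_l$ in precise agreement with \eqref{eq:rldef} and \eqref{eq:rltildedef}. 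Finally, because the operator $\underline{H}_{M,N}(\cdot)$ is invariant under total symmetrization in the $K^{(M)}$ and $\widetilde K^{(N)}$ variables, one may freely replace $\widetilde w$ by $\widetilde w^{(\mathrm{sym})}$.

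The main obstacle will be the bookkeeping of the energy shifts and combinatorial multiplicities — in particular verifying that the arguments $H_f+r+r_l$ and $H_f+r+\widetilde r_l$ emerging from successive pull-throughs match the definitions \eqref{eq:rldef}--\eqref{eq:rltildedef}, and that the binomial products agree with what the enumeration of pairings produces. Rather than a direct combinatorial count, I would organize this as an induction on $L$: peeling the rightmost factor $\underline{W}[w]\,F_L(H_f)$ off the product, one applies the inductive hypothesis to the length-$(L-1)$ product on the left and then performs a single round of normal ordering against the new rightmost $\underline{W}[w]$, which only requires the $L=2$ case together with a clean accounting of how the new contractions extend the previously established shifts.
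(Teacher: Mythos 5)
Your proposal is correct and takes essentially the route of the proof the paper relies on (the paper does not reprove this theorem but cites \cite{BFS98}, noting the argument is the same as Theorem 3.6 of \cite{BCFS03} and Theorem 27 of \cite{HH10}): expand each $\underline{W}[w]$, split every kernel's arguments into external and internal ones (the binomial factors coming from the symmetry of the kernels), pull the external creation and annihilation operators and the factors $F_l(H_f)$ through with Lemma \ref{lem:pullthrough} to produce exactly the shifts \eqref{eq:rldef}--\eqref{eq:rltildedef}, and leave the internal operators packaged in the vacuum expectation value as $\underline{W}^{m_l,n_l}_{p_l,q_l}[w]$. The only inessential detour in your sketch is the explicit enumeration of Wick pairings via delta-function contractions, which is not needed (and would have to be re-summed) since in \eqref{eq:complicated} the internal operators are simply left unevaluated inside $\langle \Omega, \cdots \Omega\rangle$; likewise the symmetrization step needs no appeal to Theorem \ref{thm:injective}, only the permutation symmetry of the products $a^*(K^{(M)})$ and $a(\widetilde K^{(N)})$.
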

A proof can be found in \cite{BFS98}. We note that the proof is essentially the same as the proof of Theorem 3.6 in \cite{BCFS03} or Theorem 27 in
\cite{HH10}.

\section*{Appendix C: Smooth Feshbach Property}
\label{sec:smo}

In this appendix we follow \cite{BCFS03,GH08}.
We introduce the Feshbach map and state basic isospectrality
properties.
Let $\chi$ and $\overline{\chi}$ be commuting, nonzero bounded operators, acting on a separable Hilbert space $\HH$
and satisfying $\chi^2 + \overline{\chi}^2=1$. A {\it Feshbach pair} $(H,T)$ for $\chi$ is a pair of
closed operators with the same domain,
$$
H,T : D(H) = D(T) \subset \HH \to \HH
$$
such that $H,T, W := H-T$, and the operators
\begin{align*}
&W_\chi := \chi W \chi , & &W_{\overline{\chi}} := \overline{\chi} W \chib \\
&H_\chi :=T + W_\chi , & &H_{\overline{\chi}} := T + W_{\chib} ,
\end{align*}
defined on $D(T)$ satisfy the following assumptions:
\begin{itemize}
\item[(a)] $\chi T \subset T \chi$ and $\chib T \subset T \chib$,
\item[(b)] $T, H_{\chib} : D(T) \cap \ran \chib \to \ran \chib$ are bijections with bounded inverse,
\item[(c)] $\chib H_{\chib}^{-1} \chib W \chi : D(T) \subset \HH \to \HH$ is a bounded operator.
\end{itemize}
\begin{remark} \label{rem:abuse} {\em
By abuse of notation we write $ H_{\chib}^{-1} \chib$ for $ \left( H_{\chib} \upharpoonright \ran \chib \right)^{-1} \chib$ and
likewise $ T^{-1} \chib$ for $ \left( T \upharpoonright \ran \chib \right)^{-1} \chib$. }
\end{remark}
We call an operator $A:D(A) \subset \HH \to \HH$ {\it bounded invertible} in a subspace $V \subset \HH$
($V$ not necessarily closed), if $A: D(A) \cap V \to V$ is a bijection with bounded inverse.
Given a Feshbach pair $(H,T)$ for $\chi$, the operator
\begin{align} \label{eq:defoffesh}
&F_\chi(H,T) := H_\chi - \chi W \chib H_{\chib}^{-1} \chib W \chi
\end{align}
on $D(T)$ is called the { \it Feshbach map of} $H$.
The auxiliary operator 
\begin{align} \label{eq:defofQ}
  Q_\chi := Q_\chi(H,T) := \chi - \chib H_{\chib}^{-1} \chib W \chi
\end{align}
is by conditions (a), (c), bounded, and $Q_\chi$ leaves $D(T)$ invariant. The Feshbach map is
isospectral in the sense of the following theorem.
\begin{theorem} \label{thm:fesh}
Let $(H,T)$ be a Feshbach pair for $\chi$ on a Hilbert space $\HH$. Then the following holds.
$\chi \ker H \subset \ker F_\chi(H,T)$ and $Q_\chi \ker F_\chi(H,T) \subset \ker H$. The mappings
\begin{align*}
\chi : \ker H \to \ker F_\chi(H,T) , \quad Q_\chi : \ker F_\chi(H,T) \to \ker H ,
\end{align*}
are linear isomoporhisms and inverse to each other. $H$ is bounded invertible on $\HH$ if and only if $F_\chi(H,T)$ is
bounded invertible on $\ran \chi$.
\end{theorem}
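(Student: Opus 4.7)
The plan is to exploit the quasi-orthogonal decomposition $1 = \chi^2 + \chib^2$ together with the commutation properties $\chi T \subset T\chi$ and $\chib T \subset T\chib$ in order to split the kernel equation $H\psi = 0$ into two coupled equations for $\chi\psi$ and $\chib\psi$, one of which can be solved using the bounded invertibility of $H_\chib$ on $\ran\chib$ granted by hypothesis (b). The overall structure is: prove the two kernel inclusions, verify that $\chi$ and $Q_\chi$ invert one another on those kernels, and finally give explicit formulas relating $H^{-1}$ and $F_\chi(H,T)^{-1}$.

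First I would establish $\chi \ker H \subset \ker F_\chi(H,T)$. Given $\psi \in \ker H$, the identity
\[
0 = H\psi = T\psi + W\chi\,(\chi\psi) + W\chib\,(\chib\psi)
\]
is multiplied on the left by $\chib$ and rearranged, using $\chib T\subset T\chib$, to the relation $H_\chib(\chib\psi) = -\chib W\chi\,(\chi\psi)$; since $H_\chib$ is a bijection from $D(T)\cap \ran\chib$ onto $\ran\chib$ with bounded inverse, this yields the explicit formula $\chib\psi = -H_\chib^{-1}\chib W\chi\,(\chi\psi)$. Multiplying $H\psi = 0$ instead by $\chi$ and substituting this expression for $\chib\psi$ produces $F_\chi(H,T)(\chi\psi)=0$. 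For the converse inclusion $Q_\chi \ker F_\chi(H,T) \subset \ker H$, I would verify the operator identity $H\,Q_\chi = \chi\,F_\chi(H,T)$ on $D(T)$ by a direct calculation: expand $HQ_\chi$, use $T\chib = \chib T$ to commute $T$ past $\chib$, invoke $H_\chib = T + \chib W\chib$ to cancel the $T\chib H_\chib^{-1}$ piece, and finally apply $1 - \chib^2 = \chi^2$ to factor out $\chi$. This identity gives $HQ_\chi\phi=0$ whenever $F_\chi(H,T)\phi = 0$.

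For the bijection statement, one direction is the direct computation
\[
Q_\chi\chi\psi = \chi^2\psi - \chib H_\chib^{-1}\chib W\chi\,(\chi\psi) = \chi^2\psi + \chib^2\psi = \psi
\]
for $\psi\in\ker H$, using the formula for $\chib\psi$ derived above. The opposite composition $\chi Q_\chi\phi = \phi$ for $\phi\in\ker F_\chi(H,T)$ follows from a dual calculation, combined with the observation that the equation $F_\chi(H,T)\phi = 0$ forces $\phi$ to lie effectively in $\ran\chi$ in the same sense as in the statement of (b) — indeed $F_\chi(H,T)$ is built from $T$ and $W_\chi$ in a way that intertwines cleanly with $\chi$.

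Finally, for the bounded-invertibility equivalence I would propose and verify the explicit formula
\[
H^{-1} \;=\; \chib H_\chib^{-1}\chib \;+\; Q_\chi\, F_\chi(H,T)^{-1}\, \widetilde Q_\chi,
\]
where $\widetilde Q_\chi := \chi - \chi W\chib H_\chib^{-1}\chib$ is the natural left counterpart of $Q_\chi$; a direct calculation based on $HQ_\chi = \chi F_\chi(H,T)$ and its mirror $\widetilde Q_\chi H = F_\chi(H,T)\chi$ (each established by the same sort of algebraic manipulation) confirms that this defines a two-sided bounded inverse on $\HH$ whenever $F_\chi(H,T)$ is boundedly invertible on $\ran\chi$. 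The converse implication, producing $F_\chi(H,T)^{-1}$ from $H^{-1}$ via $F_\chi(H,T)^{-1} \sim \chi H^{-1}\chi$ on $\ran\chi$, is verified analogously. I expect the main obstacle to be the careful tracking of domains in these identities — in particular the abuse of notation surrounding $H_\chib^{-1}$ noted in Remark \ref{rem:abuse}, since the identities only make literal sense when each intermediate operator is applied to a vector in its correct domain and one must repeatedly appeal to hypotheses (a)--(c) to justify every composition.
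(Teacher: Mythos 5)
Your overall structure follows the standard route from \cite{GH08} (which the paper cites instead of giving a proof): multiply $H\psi=0$ by $\chib$ to solve for $\chib\psi$, substitute back to get $F_\chi(\chi\psi)=0$; verify $HQ_\chi=\chi F_\chi$ and its mirror $\widetilde Q_\chi H=F_\chi\chi$ to get the reverse kernel inclusion; use these identities to assemble the resolvent. The identity $HQ_\chi=\chi F_\chi$, the computation $Q_\chi\chi\psi=\psi$ on $\ker H$, and the displayed formula $H^{-1}=\chib H_\chib^{-1}\chib+Q_\chi F_\chi^{-1}\widetilde Q_\chi$ are all correct (the last one can be checked directly using the two intertwining identities together with $\chi\widetilde Q_\chi+H\chib H_\chib^{-1}\chib=1$ and $Q_\chi\chi+\chib H_\chib^{-1}\chib H=1$). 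So the skeleton is sound.

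There are, however, two genuine gaps. First, the argument that $\chi Q_\chi\phi=\phi$ for $\phi\in\ker F_\chi(H,T)$ is not a ``dual calculation,'' and the heuristic that $F_\chi\phi=0$ ``forces $\phi$ to lie effectively in $\ran\chi$'' is false: for instance any $\phi$ in $\ker T$ annihilated by $\chi$ would satisfy $F_\chi\phi=0$, and nothing about $F_\chi$ pushes a general kernel element into $\ran\chi$. What actually closes the loop is the observation that $\ker F_\chi\cap\ker Q_\chi=\{0\}$, which is where hypothesis (b) enters a second time: if $Q_\chi\phi=0$ then $\chi\phi=\chib v$ with $v:=H_\chib^{-1}\chib W\chi\phi$, and substituting this into $F_\chi\phi=0$ collapses the cross terms to give $T\phi=0$ and $Tv=\chib W(\chi\phi-\chib v)=0$; since $v\in\ran\chib\cap D(T)$, bijectivity of $T$ on $\ran\chib$ forces $v=0$, hence $\chi\phi=0$, and then $\phi=\chib^2\phi\in\ran\chib$ with $T\phi=0$ forces $\phi=0$. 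Once you know $\ker F_\chi\cap\ker Q_\chi=\{0\}$, the identity $Q_\chi(\chi Q_\chi\phi-\phi)=Q_\chi\phi-Q_\chi\phi=0$ together with $\chi Q_\chi\phi-\phi\in\ker F_\chi$ gives $\chi Q_\chi\phi=\phi$. You need to supply this lemma explicitly; without it the bijection claim does not follow.

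Second, the proposed formula $F_\chi(H,T)^{-1}\sim\chi H^{-1}\chi$ for the converse direction does not hold. Tracing through $F_\chi\chi H^{-1}\chi\phi=\widetilde Q_\chi H H^{-1}\chi\phi=\widetilde Q_\chi\chi\phi=\chi^2\phi-\chi W\chib H_\chib^{-1}\chib\chi\phi$, which is not $\phi$ even for $\phi\in\ran\chi$. In fact if one substitutes your correct resolvent formula into $\chi H^{-1}\chi$ one sees that $\chi H^{-1}\chi$ sandwiches $F_\chi^{-1}$ between $\chi Q_\chi$ and $\widetilde Q_\chi\chi$, both of which differ from $\chi^2$. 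For the implication ``$H$ invertible $\Rightarrow$ $F_\chi$ invertible on $\ran\chi$'' the cleaner route is to argue injectivity exactly as in the lemma above (since $F_\chi\phi=0$ and $H$ injective give $Q_\chi\phi=0$, hence $\phi=0$), and surjectivity from $\widetilde Q_\chi H=F_\chi\chi$ combined with $\chi\widetilde Q_\chi+H\chib H_\chib^{-1}\chib=1$, rather than via a closed-form inverse.
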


The proof of Theorem \ref{thm:fesh} can be found in \cite{BCFS03,GH08}. The next lemma
gives sufficient conditions for  two operators to be a Feshbach pair. It follows
from a Neumann expansion, \cite{GH08}.

\begin{lemma} \label{fesh:thm2}
Conditions {\rm (a), (b)}, and {\rm (c)} on Feshbach pairs are satisfied if:
\begin{itemize}
\item[(a')] $\chi T \subset T \chi$ and $\chib T \subset T \chib$,
\item[(b')] $T$ is bounded invertible in $\ran \chib$,
\item[(c')] $\| T^{-1} \chib W \chib \| < 1$, $\| \chib W T^{-1} \chib \| < 1$, and $T^{-1} \chib W \chi$ is a bounded operator.
\end{itemize}
\end{lemma}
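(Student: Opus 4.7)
The plan is to observe that (a) and (a') are literally identical, so only (b) and (c) require proof; both follow from a Neumann-series inversion of $I + T^{-1}\chib W \chib$ (or of $I + \chib W T^{-1}\chib$ on the other side), provided one is careful about domains.

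For (b), using (b') I would write $T^{-1}$ for the bounded inverse of $T|_{D(T)\cap\ran\chib}$, so that $T^{-1}\chib$ is a bounded operator on $\HH$ with range in $D(T)\cap \ran \chib$. The commutation assumption (a') guarantees that $\chib$ preserves $D(T)$ and therefore that $T^{-1}\chib W \chib$ is well defined on $D(T)$; by the first inequality of (c') it extends to a bounded operator on $\HH$ of norm strictly less than $1$. Hence the Neumann series
\[
A \;:=\; (I + T^{-1}\chib W \chib)^{-1} \;=\; \sum_{n=0}^{\infty}(-T^{-1}\chib W \chib)^{n}
\]
defines a bounded operator on $\HH$, and iteration shows $A$ preserves both $D(T)$ and $\ran \chib$. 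I would then verify that, for every $\phi \in \ran\chib$, the vector $\psi := A\,T^{-1}\phi$ lies in $D(T)\cap\ran\chib$ and satisfies $H_{\chib}\psi = T(I+T^{-1}\chib W \chib)\psi = \phi$, while injectivity follows from the injectivity of $T$ on $D(T)\cap \ran\chib$ together with the invertibility of $I+T^{-1}\chib W \chib$. The second inequality of (c'), $\|\chib W T^{-1}\chib\|<1$, is used in the dual factorization $H_{\chib}T^{-1}\chib = \chib\bigl(I + W T^{-1}\chib\bigr)$ to confirm that the range exhausts $\ran\chib$ even though $\chib$ need not be a projection. Thus $H_{\chib}^{-1} = AT^{-1}$ is a bounded operator on $\ran\chib$, giving (b).

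For (c), I would simply substitute the formula for $H_{\chib}^{-1}$ obtained above:
\[
\chib H_{\chib}^{-1}\chib W \chi \;=\; \chib\, A\,\bigl(T^{-1}\chib W \chi\bigr).
\]
The parenthesised factor is bounded by the third clause of (c'), $A$ is bounded by the Neumann series, and $\chib$ is bounded; so the product is a bounded operator on $D(T)$, which is (c).

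The main obstacle is purely book-keeping of domains: $W$ is unbounded and only the specific compositions $T^{-1}\chib W \chib$, $\chib W T^{-1}\chib$, and $T^{-1}\chib W \chi$ are asserted to extend to bounded operators on $\HH$ by (c'). The delicate point is to check that the Neumann inversion, carried out with these bounded extensions, genuinely inverts the unbounded operator $H_{\chib}: D(T)\cap\ran\chib \to \ran\chib$ rather than some enlarged closure; the commutation relations in (a') together with the two-sided bound in (c') are precisely what make this identification rigorous.
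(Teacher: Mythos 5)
Your overall strategy (Neumann expansion) is the one the paper intends --- the paper itself gives no argument beyond citing \cite{GH08} --- and your treatment of injectivity and of condition (c) is essentially right in outline. But the central step of your proof of (b) has a genuine gap, exactly at the point you yourself flag as delicate. The operator $T^{-1}\chib W\chib$ is defined only on $D(T)$; the first inequality in (c') lets you extend it to a bounded operator $B$ on $\HH$ and invert $I+B$ by a Neumann series, but the assertion that ``iteration shows $A=(I+B)^{-1}$ preserves both $D(T)$ and $\ran\chib$'' does not follow: each partial sum of the series maps $D(T)$ into $D(T)\cap\ran\chib$, yet neither $D(T)$ nor $\ran\chib$ is closed in $\HH$ (here $\chib$ is a smooth cutoff, not a projection, and its range is not closed), so the limit of the series applied to $T^{-1}\phi$ may leave $D(T)\cap\ran\chib$. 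Consequently you have not produced a preimage $\psi\in D(T)\cap\ran\chib$ of $\phi$, i.e.\ surjectivity of $H_{\chib}:D(T)\cap\ran\chib\to\ran\chib$ is not established. The identity you invoke to repair this, $H_{\chib}T^{-1}\chib=\chib(I+WT^{-1}\chib)$, is moreover false as written: using $TT^{-1}\chib=\chib$ and the commutation relation $\chib\, T^{-1}\chib=T^{-1}\chib\,\chib$ (which follows from (a') and (b')) one gets $H_{\chib}T^{-1}\chib=\chib+\chib\,(WT^{-1}\chib)\,\chib$, and since $\chib^{2}\neq\chib$ this is not $\chib(I+WT^{-1}\chib)$.

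The correct way to use the second inequality of (c') --- and the reason it is part of the hypothesis --- is the factorization valid pointwise on $D(T)\cap\ran\chib$: for such $\psi$ one has $T^{-1}\chib T\psi=\chib\psi$, hence $H_{\chib}\psi=(I+\chib WT^{-1}\chib)T\psi$. Here $\chib WT^{-1}\chib$ is everywhere defined and bounded with norm $<1$, so no extension or closure issue arises; since its range lies in $\ran\chib$, the identity $(I+C)^{-1}=I-C(I+C)^{-1}$ shows that $(I+\chib WT^{-1}\chib)^{-1}$ maps $\ran\chib$ into $\ran\chib$, so $\psi:=(T\upharpoonright\ran\chib)^{-1}(I+\chib WT^{-1}\chib)^{-1}\phi$ is an honest element of $D(T)\cap\ran\chib$ with $H_{\chib}\psi=\phi$, and $H_{\chib}^{-1}=(T\upharpoonright\ran\chib)^{-1}(I+\chib WT^{-1}\chib)^{-1}$ is bounded. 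Your argument for (c) can then be kept almost verbatim but justified without letting any Neumann series act outside $D(T)$: for $\phi\in D(T)$ put $\psi:=H_{\chib}^{-1}\chib W\chi\phi\in D(T)\cap\ran\chib$; applying $(T\upharpoonright\ran\chib)^{-1}$ to $T\psi=\chib W\chi\phi-\chib W\chib\psi$ gives $(I+T^{-1}\chib W\chib)\psi=T^{-1}\chib W\chi\phi$, whence the first inequality of (c') yields $\|\psi\|\le(1-\|T^{-1}\chib W\chib\|)^{-1}\|T^{-1}\chib W\chi\|\,\|\phi\|$. With these repairs your proof becomes the intended Neumann-expansion argument.
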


\end{document}